\DeclarePairedDelimiter{\abs}{\lvert}{\rvert}
\let\original@algocf@latexcaption\algocf@latexcaption
\long\def\algocf@latexcaption#1[#2]{%
  \@ifundefined{NR@gettitle}{%
    \def\@currentlabelname{#2}%
  }{%
    \NR@gettitle{#2}%
  }%
  \original@algocf@latexcaption{#1}[{#2}]%
}
\definecolor{red}{rgb}{.7,0,0}
\definecolor{blue}{rgb}{0,0,1}
\def\mcB{\mathcal{B}}
\def\mcG{\mathcal{G}}
\def\mcH{\mathcal{H}}
\def\mcP{\mathcal{P}}
\def\mcS{\mathcal{S}}
\def\bbD{\mathbb{D}}
\def\bbE{\mathbb{E}}
\def\bbR{\mathbb{R}}
\def\bbN{\mathbb{N}}
\def\bbC{\mathbb{C}}
\def\bbP{\mathbb{P}}
\def\bbS{\mathbb{S}}
\def\th@plain{%
  \thm@notefont{}% same as heading font
  \slshape % body font
}
\def\th@definition{%
  \thm@notefont{}% same as heading font
  \normalfont % body font
}
\newtheorem{lem}{Lemma}[section]
\newtheorem{prop}[lem]{Proposition}
\newtheorem{theo}[lem]{Theorem}
\newtheorem{cor}[lem]{Corollary}
\newdefinition{defi}[lem]{Definition}
\newdefinition{problem}{Open problem}
\newdefinition{exam}[lem]{Example}
\newdefinition{obs}{Observation}[lem]
\newdefinition{remark}[lem]{Remark}
\newdefinition{notation}[lem]{Notation}
\newcommand{\eproof}{\hfill\qed}
\def\Hd{\mcH_{n,d}}
\def\Pd{\mcP_{n,d}}
\def\Pdone{\mcP_{1,d}}
\def\Oh{\mathcal{O}}
\def\Tg{\mathrm{T}}
\DeclareMathOperator{\Diff}{D}
\DeclareMathOperator{\diff}{D}
\def\enumber{\mathrm{e}}
\DeclareMathOperator{\vol}{\mathrm{vol}}
\newcommand{\eps}{\varepsilon}
\newcommand{\PP}{\mathbb{P}}
\DeclareMathOperator{\cond}{\texttt{C}_1}
\def\dist{\mathrm{dist}}
\def\fkf{\mathfrak{f}}
\def\fkx{\mathfrak{x}}
\def\fky{\mathfrak{y}}
\def\fkc{\mathfrak{c}}
\DeclarePairedDelimiter\Onorm{\lVert}{\rVert_{1}}
\newcommand{\descartes}{\textsc{Descartes}\xspace}
\newcommand{\JSalg}{\textsc{JindalSagraloff}\xspace}
\newfont{\subsecit}{ptmbi8t at 11pt}  %
\begin{document}

\begin{frontmatter}

  \title{Condition Numbers for the Cube.\\I: Univariate Polynomials and Hypersurfaces}

  \author{Josu\'e Tonelli-Cueto}
  \address{Inria Paris \& IMJ-PRG, Sorbonne Universit\'e \\
    4 place Jussieu, F-75005, Paris, France}
  \ead{josue.tonelli.cueto@bizkaia.eu}
  \ead[https://tonellicueto.xyz]{https://tonellicueto.xyz}

  \author{Elias Tsigaridas}
  \address{Inria Paris \& IMJ-PRG, Sorbonne Universit\'e \\
      4 place Jussieu, F-75005, Paris, France}
  \ead{elias.tsigaridas@inria.fr}
  \ead[https://who.paris.inria.fr/Elias.Tsigaridas/]{https://who.paris.inria.fr/Elias.Tsigaridas/}

  \begin{abstract}
The condition-based complexity analysis framework is one of the gems   of modern numerical algebraic geometry and theoretical computer science. Amond the challenges that it poses is to expand the currently limited range of random polynomials that we can handle. Despite important recent progress, the available tools cannot handle random sparse polynomials and Gaussian polynomials, that is polynomials whose coefficients are i.i.d. Gaussian random variables.

We initiate a condition-based complexity framework based on the norm of the cube that is a step in this direction. We present this framework for real hypersurfaces and univariate polynomials. We demonstrate its capabilities in two problems, under very mild probabilistic assumptions. On the one hand, we show that the average run-time of the Plantinga-Vegter algorithm is polynomial in the degree for random sparse (alas a restricted sparseness structure) polynomials and random Gaussian polynomials. On the other hand, we study the size of the subdivision tree for Descartes' solver and run-time of the solver by \cite{jindalsagraloff2017}. In both cases, we provide a bound that is polynomial in the size of the input (size of the support plus the logarithm of the degree) not only for the average but also for all higher moments.

  \end{abstract}

  \begin{keyword}
    condition number;
    random polynomial;
    subdivision algorithm;
    univariate solver;
  \end{keyword}
  
\end{frontmatter}

%\tableofcontents
\section{Introduction}

The complexity of numerical algorithms is not uniform. It depends on a measure
of the numerical sensitivity of the output with respect to perturbations of the
input, called \emph{condition number} and introduced originally
by~\cite{turing1948} and~\cite{vonneumanngoldstine1947}. If the condition
number of the input is large, then  small numerical perturbations
of the input can significantly change the solution of the problem at hand.
Consequently, numerical algorithms need to use more computational resources to
guarantee a correct computation.

The above phenomenon motivates the condition-based complexity analysis of numerical algorithms.
Although a condition-based complexity analysis can explain the success of an algorithm for a given input, it cannot explain, at least on its own, why a numerical algorithm is efficient. The reason is that condition-based complexity analyses are not input-independent. Thus a common technique that goes back to~\cite{goldstinevonneumann1951},
\cite{demmel1987,demmel1988}, and \cite{smale1997} is to randomize the input. In this way, we obtain a probabilistic complexity analysis that can explain the successful behaviour of an algorithm. Moreover, the framework of smoothed analysis \citep{ST:02}  fully explains the practical behaviour of an algorithm.
We refer the reader to~\citep{Condition} and references therein for more details about this complexity paradigm for numerical algorithms.

After the complete solution\footnote{One should notice that the solution of \cite{lairez2017} does not construct a good starting system, but exploits the randomness of the input as a source of randomness for a deterministic algorithm. Constructing such initial systems is hard, although there was a breakthrough construction for the univariate case by~\cite{etayo2020sequence} \citep[see also][]{beltranlizarte2020}.} of Smale's 17th problem by~\cite{lairez2017}, following the steps of \cite{beltranpardo2008} and \cite{burgissercucker2011}, the main challenge in numerical algebraic geometry is to extend (and analyze) the current algorithms for solving polynomial systems to handle more general inputs; for example sparse and structured polynomials.

In the complex setting, \cite{malajovich2019,malajovich2020arXiv} and \cite{malajovichrojas2002,malajovichrojas2004} did groundbreaking work in the development for numerical algorithms for finding a solution of sparse polynomial systems and recently~\cite{rigid2} (following the ideas of \cite{rigid1}) introduces efficient numerical algorithms for finding a solution of determinant polynomial systems. Additionally, \cite{armentanobeltran2019} provided a probabilistic estimate of the condition number for the Polynomial Eigenvalue Problem, although they did not provide an algorithm.

In the real setting, the situation is far more difficult. For example, as of today, the real version of Smale's 17th problem (asking to decide numerically the feasibility of a real polynomial system) remains open as no algorithm running in finite expected time is known; see~\cite[Ch.~17 and P.18]{Condition} for more details. Of course, ideally, we want to solve the real sparse Smale's 17th problem (which was proposed by \cite{rojasye2005}):
\begin{quote}
    Find an algorithm that finds all the real roots of a real fewnomial system with average run-time bounded by a polynomial in the size of the system (which is the size of the support and the logarithm of the degree).
\end{quote}
However, such a result, although motivated by \cite{khovanskiibook} and
Kushnirenko's hypothesis, seems to be out of reach today. Nevertheless, some
progress has been made by \cite{rojas2020counting} (but not in the
numerical setting). Regarding structured systems, there are some results by
\cite{beltrankozhasov2019} and \citet*{EPR18,EPR19}.

A common problem with many of the current techniques is that they rely on
uni\-ta\-ry/or\-tho\-go\-nal invariance. Therefore, it is central for an effective algorithmic framework 
to develop techniques that do not rely on this invariance to compute with sparse/structured polynomials and more general
probability distributions. We make one step in this research direction by
developing a condition-based complexity framework that relies on the
$\infty$-norm of the cube and which consequently does not rely on unitary
invariance.

We develop the above framework for univariate polynomials and hypersurfaces. We hope to extend it for polynomial systems in a future work. To illustrate its advantages  we apply it to two problems. 
First, to study the complexity of the Plantinga-Vegter algorithm~\citep{plantingavegter2004,burr2017}.
Then, to study the separation bounds of the roots of real univariate polynomials. Using the latter bounds 
we deduce a bound on the average number of subdivisions that Descartes' solver performs to isolate the real roots of univariate polynomials
and we estimate the average bit complexity
of the algorithm by \cite{jindalsagraloff2017} for solving sparse univariate polynomials. The latter bound is polynomial in the input size, providing a first approximation to the real sparse version of Smale's 17th problem by \cite{rojasye2005}.

In the case of the Plantinga-Vegter algorithm, we demonstrate its efficiency by showing that its complexity is polynomial on the average,
for a wide class of random sparse polynomials (Theorem~\ref{thm:mainPVbound}). This significantly extends the results by~\cite{CETC-PV}, \citep[cf.][]{CETC-PVjournal}.
Additionally, our approach applies to Gaussian random polynomials, when all coefficients have the same variance.

We note that our aim is not to show that the Plantinga-Vegter is the most efficient algorithm for random sparse polynomials, but that it remains efficient when we restrict it to a wide class of random sparse polynomials. We note that our bounds depend polynomially on the degree and not logarithmically. A similar approach was employed by~\cite{EPR19} for the algorithm by~\cite{CKMW1} for finding the real zeros of real polynomial systems. However, unlike~\cite{EPR19}, our analysis applies to structured polynomials that are sparse but with a combinatorial restriction on the support. We note that our sparseness condition is similar to that of~\cite{renegar1987}
and so is the bound we obtain; the latter is polynomial in the degree and the size of the support and exponential in the number of variables. Many computational problems in real algebraic geometry lack algorithms that are polynomial in the degree, so such bounds push the limits of the state-of-the-art. 

In the case of univariate polynomials, our results imply that the complex roots of a random real univariate sparse polynomial around the unit interval are well separated with high probability. 
The logarithm of the separation bound is an important parameter that controls the complexity of many, if not all, univariate solvers. We exploit its relation with the condition number to obtain bounds on the size of the subdivision tree of Descartes' solver and on the average run-time of the sparse univariate solver of \cite{jindalsagraloff2017}.

In both cases, that for both Descartes' solver and the solver of \cite{jindalsagraloff2017}, the bounds that we obtain are: (i) polynomial in the size of the sparse polynomial (meaning polynomial in the size of the support and the logarithm of the degree), and (ii) extend to all higher moments. The importance of these bounds is that they are the first step towards solving the real sparse version of Smale's 17th problem, as stated by \cite{rojasye2005}.

Our framework is based on the one hand on variational properties of the polynomials and the corresponding condition numbers and on the other on probabilistic techniques from geometric functional analysis. The former follows the variational approach to condition numbers of~\cite[2\textsuperscript{\S2}]{tonellicuetothesis} and extends~\cite{CETC-norms} to new norms. The latter has been already applied by~\cite{EPR18,EPR19} and by ~\cite{CETC-PV}, but the way that we apply these methods takes them to the maximum development. 

The $1$-norm on the space of polynomials behaves as the ``dual''norm to the $\infty$-norm on the cube. This norm is naturally suited for subdivision methods on the cube. The analysis of the Plantinga-Vegter subdivision process using our framework serves the purpose to convince the reader of the advantages of the new framework for the analysis of algorithms.
It also has the ambition to bring new insights in the study of algorithms in numerical algebraic geometry.
Our approach continues the trend started by~\cite{CETC-PV} of bringing further interactions between the communities of numerical algebraic geometry and symbolic computation. 
%Further, we hope that this paper starts a successful program which will bring new insights in old algorithms and new algorithms to the world of symbolic-numeric computation.

\vspace{10pt}
A preliminary version of the paper appeared in
the proceedings of ISSAC 2020 \citep{TCT-cube-I-20}.
Compared with the conference paper, the current paper
extends significantly the probabilistic model by incorporating
random polynomials whose coefficient are subexponential and
extends significantly the treatment of the univariate case by adding several new results.
Among these results, one can find polynomial (in the size of the support and
logarithm of the degree) bounds for all the moments of the size
of the subdivision tree of Descartes' solver and for the bit complexity of the algorithm by~\cite{jindalsagraloff2017}
for solving a random sparse polynomial. The latter, to our knowledge, is the first such bound.

\paragraph{Notation} We denote by $\Pd$ the space of polynomials in $n$
variables of total degree at most $d$. Then a polynomial is
$f = \sum_{\abs{\alpha} \leq d} f_{\alpha} X^{\alpha} \in \Pd$, where
$\alpha \in \bbN^n$; nevertheless, we commonly omit the summation index. By $\Hd$ we denote the
space of homogeneous polynomials of degree $d$ in $n+1$ variables.

The unit cube is 
$I^n := [-1, 1]^n \subset \bbR^n$ and $B_{\bbC}(x, r)$ is the  complex disk centered at $x$ with radius $r$. The polydisc is $\bbD^n:=\overline{B}_{\bbC}(x, 1)^n\subseteq\bbC^n$.

For $A \subseteq \bbR^n$, we denote by $\mcB(A)$ the set of boxes (i.e., cubes) contained in $A$. For any $B\in\mcB(\bbR^n)$, we denote by $m(B)$ its \emph{midpoint} and by $w(B)$ its \emph{width}, so that $B=m(B)+w(B)/2[-1,1]^n$.

We denote random variables/vectors using letters in fraktur font, such as $\fkc$, $\fkf$ and $\fkx$. We will use $\bbE$ to take expectations. In this way, when the probability measure of $\fkx$ is clear, we will write $\bbE_{\fkx}$ to take the expectation with respect $\fkx$, and we will write $\bbE_{\fkx\in I^n}$ to take the expectation with respect a random $\fkx$ uniformly distributed in $I^n$.

\paragraph{Outline of the paper}

In the next section, we outline and discuss the main results of the paper: the average run-time of the Plantinga-Vegter algorithm,  the average size of the subdivsion tree of Descartes' solver, and the average run-time of the algorithm by~\cite{jindalsagraloff2017}. In Section~\ref{sec:norm-cube}, we introduce the norms with which we will be working and their main properties. In Section~\ref{sec:condition}, we introduce a new condition number adapted to the introduced norms and we prove its main properties.
In Section~\ref{sec:pvalgorithm}, we perform the condition-based complexity analysis of the subdivision routine of the Plantinga-Vegter algorithm;  in Section~\ref{sec:cond-sep}, we introduce the separation bound, give condition-based bounds for it and apply them to the Descartes' solver and the solver by \cite{jindalsagraloff2017}.
Finally, in Section~\ref{sec:prob-estims}, we introduce the randomness model that we will consider, zintzo random polynomials and $p$-zintzo random polynomials and provide the relevant probabilistic bounds to prove our results.

\section{Overview}
\label{sec:main-results}

We present a condition-based framework that allows us to control the
probabilistic analysis of numerical algorithms with respect to random
polynomials that are sparse and do not have any scaling in their coefficients,
as it has been usual with the so-called KSS or dobro random polynomials
introduced by~\cite{CETC-PV}. We illustrate our techniques by analyzing the
expected complexity of the Plantinga-Vegter algorithm and the univariate solvers
\descartes and \JSalg for a class of random sparse polynomials.

We will consider a very general class of random polynomials: the class of zintzo random polynomials (see Definition~\ref{def:zintzo}). Moreover, our probabilistic estimates are both in the average and smoothed complexity  paradigm (see Proposition~\ref{prop:avergaesmoothed}). However, for the sake of concreteness, we will expose our results  only for Gaussian and uniform random polynomials.

\begin{defi}
  \label{def:GU:zintzo}
  Let the finite set $M\subseteq\bbN^n$ be such that it contains $0,e_1,\ldots,e_n$,
  where $e_i$ is the $i$ unit vector in $\bbN^n$. Then
  \begin{enumerate}
    \item[(G)] \label{G:zintzo}
    A \emph{Gaussian polynomial supported on $M$} is a random polynomial $\fkf=\sum_{\alpha\in M}\fkf_\alpha X^\alpha$ supported on $M$ whose coefficients $\fkf_\alpha$ are i.i.d. Gaussian random variables of mean $0$ and variance $1$. %In this case, it holds that  $\rho_\fkf=1/\sqrt{2\pi}$ and $K_\fkf\leq |M|$.
    \item[(U)] \label{U:zintzo}
    A \emph{uniform random polynomial supported on $M$} is a random polynomial $\fkf=\sum_{\alpha\in M}\fkf_\alpha X^\alpha$ supported on $M$ whose coefficients $\fkf_\alpha$ are i.i.d. uniform random variables on $[-1,1]$. %In this case, $\rho_\fkf= 1/2$ and $K_\fkf\leq |M|$.
  \end{enumerate}
\end{defi}

The condition $\{0,e_1,\ldots,e_n\} \subset M$ is a technicality that we need for the
proofs. In layman's terms, this technical condition states that all the terms
of the first order approximation of $\fkf$ at $0$, that is
$\fkf_0+\fkf_{e_1}X_1+\cdots+\fkf_{e_n}X_n$, appear with probability one.
If we translate this condition to a homogeneous setting, then it holds
\[
  M\subseteq \left\{ (d-1) e_i+e_j\mid i,j\in\{0,\ldots,n\} \right\},
\]
which means that the support contains not only the vertices of the standard simplex, but also the
adjacent lattice points to these vertices. We observe that this sparseness condition, considered already by~\cite{renegar1987}, is a kind of pseudo-sparseness condition. Nevertheless, it is an improvement over the restrictions of other existing analysis such as the one by \cite{EPR19}.

\subsection{Expected complexity of the Plantinga-Vegter algorithm}

The following theorem presents the probabilistic complexity bound for the
subdivision routine of the Plantinga-Vegter algorithm,
\nameref{alg:PVsubdivison}. We refer to Section~\ref{sec:pvalgorithm} for further details on the algorithm.

\begin{theo}\label{cor:mainPVboundgaussian}
  Let $\fkf\in\Pd$ be a random polynomial supported on $M$. The average number
  of boxes of the final subdivision of \nameref{alg:PVsubdivison} using the
  interval approximations~\eqref{eq:intervalapprox1}
  and~\eqref{eq:intervalapprox2} if the input polynomial $\fkf$ is Gaussian
  (Def.~\ref{G:zintzo} (G)) is at most
  \[2n^{\frac{3}{2}}\left(10(n+1) \right)^{n+1}d^{2n}|M|^{n+2} .\] If the input
  polynomial $\fkf$ is uniform (Def.~\ref{U:zintzo} (U)) the bound becomes
  \[2n\,32^{n+1}d^{2n}|M|^{n+2} .\]
\end{theo}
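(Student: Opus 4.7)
The strategy is condition-based: combine a deterministic worst-case bound on the size of the subdivision produced by \nameref{alg:PVsubdivison} with a uniform (in $x \in I^n$) probabilistic estimate for a power of the condition number at a fixed point. From the analysis carried out in Section~\ref{sec:pvalgorithm}, the number of boxes produced by \nameref{alg:PVsubdivison} on input $f \in \Pd$ with the interval approximations \eqref{eq:intervalapprox1}--\eqref{eq:intervalapprox2} is controlled by an integral of the form
\[
N(f) \;\le\; C_n\, d^n\, |M|^{a}\int_{I^n} \kappa(f,x)^n\,dx ,
\]
where $\kappa$ is the cube/$1$-norm condition number of Section~\ref{sec:condition} and the factor $|M|^a$ comes from bounding the slackness of the interval arithmetic by a polynomial in $|M|$. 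The plan is to feed this inequality into a Fubini argument and then reduce everything to moment estimates for $\kappa(\fkf,x)$ coming from Section~\ref{sec:prob-estims}.

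The second step is to take expectations with respect to $\fkf$ and swap expectation and integration:
\[
\bbE_{\fkf}\bigl[N(\fkf)\bigr] \;\le\; C_n\, d^n\, |M|^{a}\int_{I^n} \bbE_{\fkf}\!\left[\kappa(\fkf,x)^n\right] dx .
\]
Since $\vol(I^n)=2^n$, the problem is reduced to establishing the pointwise moment bound $\bbE_{\fkf}[\kappa(\fkf,x)^n] \le K^n$, with $K$ depending polynomially on $d$ and $|M|$ and with the numerical constant dictated by whether $\fkf$ is Gaussian or uniform.

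The third step is the heart of the argument and the main obstacle. By the variational description of $\kappa$ from Section~\ref{sec:condition}, the quantity $1/\kappa(\fkf,x)$ is dominated from below by a linear functional $L_x(\fkf)$ in the coefficients $(\fkf_\alpha)_{\alpha\in M}$ whose coefficients are controlled by the values of $X^\alpha$ and its partial derivatives at $x\in I^n$. Crucially, since $M\supseteq\{0,e_1,\ldots,e_n\}$, the functional $L_x$ is non-degenerate uniformly in $x\in I^n$. For a zintzo distribution (which covers both the Gaussian and uniform cases, Def.~\ref{def:GU:zintzo}) the small-ball estimates of Section~\ref{sec:prob-estims} yield an anti-concentration inequality
\[
\prob\bigl\{|L_x(\fkf)|<\varepsilon\bigr\} \;\le\; \rho\, \varepsilon ,
\]
with $\rho$ equal to (a multiple of) the density at $0$ of each coordinate: $\rho = 1/\sqrt{2\pi}$ for Gaussian and $\rho = 1/2$ for uniform, up to a factor $|M|^{1/2}$ reflecting the normalisation of $L_x$. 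Integrating the resulting tail bound $\prob\{\kappa(\fkf,x)>t\}\le (\rho\, d\, |M|)/t$ against $n t^{n-1}\,dt$ and splitting at a judiciously chosen threshold gives the desired pointwise moment bound.

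Finally, plugging the pointwise moment bound back into the Fubini inequality and multiplying by $\vol(I^n)=2^n$, the degree factor $d^n$ and the combinatorial factor $|M|^{a}$ from the deterministic step produces a bound of the shape $\textsf{const}_n\, d^{2n}\,|M|^{n+2}$, where the constant $\textsf{const}_n$ absorbs all the dimensional factors. For the Gaussian case these factors collect into $2n^{3/2}\bigl(10(n+1)\bigr)^{n+1}$, and for the uniform case into $2n\, 32^{n+1}$, matching the statement. The technically delicate part is the careful accounting of constants at each step (in particular, ensuring that the exponent of $|M|$ does not exceed $n+2$); the structural core — deterministic condition-based bound, Fubini, anti-concentration for zintzo distributions — is by now a well-established pattern in this literature.
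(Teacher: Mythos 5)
Your overall architecture --- continuous amortization turning the box count into $C_n d^n\,\bbE_{\fkx\in I^n}\cond(f,\fkx)^n$, then Fubini, then a pointwise moment bound obtained from a tail estimate --- is exactly the paper's route (Corollary~\ref{cor:conditionbasedcomplexity}, Proposition~\ref{prop:condaverage}, Theorem~\ref{thm:condtailestimate}, specialized via $K_\fkf\rho_\fkf\le |M|/2$). But the heart of your argument, step three, has a genuine gap. You propose to bound $1/\kappa(\fkf,x)$ from below by a \emph{single} linear functional $L_x(\fkf)$ and apply one-dimensional anti-concentration, arriving at a tail of the form $\prob\{\kappa(\fkf,x)>t\}\le C/t$. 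Such a tail cannot yield a finite $n$-th moment: $\bbE\,\kappa^n=\int_1^\infty\prob\{\kappa\ge s^{1/n}\}\,ds\lesssim\int_1^\infty s^{-1/n}\,ds$ diverges for every $n\ge 1$, and no choice of splitting threshold repairs this. The paper avoids the problem by using the anti-concentration of the full $(n{+}1)$-dimensional marginal $R_x\fkf=\bigl(f(x),\tfrac1d\partial_1 f(x),\ldots,\tfrac1d\partial_n f(x)\bigr)$ (Proposition~\ref{prop:projboundcont}, a Rudelson--Vershynin-type small-ball bound, combined with the determinant lower bound of Lemma~\ref{lem:projectionevaluation}, which is where the hypothesis $\{0,e_1,\ldots,e_n\}\subseteq M$ enters). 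This gives small-ball probability of order $\varepsilon^{n+1}$ and hence a tail of order $t^{-(n+1)}\ln^{(n+1)/2}t$, which is integrable against the $n$-th moment.

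Two secondary points. First, the condition number is a ratio of two random quantities; you must also control the numerator $\|\fkf\|_1$ from above (the paper does this with a Hoeffding-type subgaussian bound, Proposition~\ref{prop:normbound}, via the union bound $\prob\{\cond\ge t\}\le\prob\{\|\fkf\|_1\ge u\}+\prob\{\|R_x\fkf\|\le u/t\}$; this is the source of the logarithmic factors and of one factor of $|M|$). Your proposal never addresses the numerator. Second, you attribute the factor $|M|^{a}$ to ``slackness of the interval arithmetic'' in the deterministic step; in fact the deterministic bound carries no $|M|$ dependence at all --- the $|M|^{n+2}$ in the statement arises entirely from the probabilistic side, namely one factor $|M|$ from the union bound over coefficients and $|M|^{n+1}$ from $(K_\fkf\rho_\fkf)^{n+1}\le(|M|/2)^{n+1}$.
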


We notice that the previous theorem is a particular case of
Theorem~\ref{thm:mainPVbound}, which gives the claim for the more general class
of zintzo random polynomials.

We notice that the bounds on the number of boxes are polynomial in the degree, as
in~\citep{CETC-PV}. This is an additional theoretical justification of the practical
success of the Plantinga-Vegter algorithm. However, unlike the estimates
in~\citep{CETC-PV}, the bounds we present justify the success of the
Plantinga-Vegter algorithm even for sparse random polynomials. This is one of
the first such probabilistic complexity estimates in numerical algebraic
geometry.

\subsection{Complexity results on univariate solvers}
\label{sec:consep-complexity}

In the setting of univariate solvers, we present two results. First for the
\descartes solver in \cite{sagraloffmehlhorn2016} we bound the average size of
the subdivision tree. Second, we bound the average bit complexity of algorithm
by \cite{jindalsagraloff2017} for isolating the roots of sparse univariate
polynomial. In both cases, we do not only bound the average ($1$st moment) but
all the higher moments.

\begin{theo}\label{theo:descartestreeGaussian}
  Let $\fkf\in\Pd$ be a random polynomial supported on $M$ that is either
  Gaussian or uniform (Def.~\ref{def:GU:zintzo}). The average size of the
  subdivision tree of \descartes on input $\fkf$ is at most
  \[
    \Oh\left(|M|\log d\right).
  \]
  Moreover, the $k$th moment of the size is bounded by
  \[
    \Oh\left(k|M|\log d\right)^k.
  \]
\end{theo}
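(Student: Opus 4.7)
The plan is to chain three ingredients that the paper puts in place by the time this statement is reached.

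\emph{Step 1: combinatorial bound for \descartes in terms of separation.} I would start from the standard analysis of the continuous \descartes algorithm in \cite{sagraloffmehlhorn2016}, which bounds the size of the subdivision tree on input $f\in\Pdone$ by a quantity controlled, up to polynomial factors in $\log d$, by the number of real roots of $f$ in $I$ and by $\log(1/\sigma(f))$, where $\sigma(f)$ measures the separation of the complex roots of $f$ near $I$. For a polynomial supported on $M$ the number of real roots is $\Oh(|M|)$, by the Descartes-rule bound for sparse polynomials. This reduces the theorem to controlling the moments of $\log(1/\sigma(\fkf))$.

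\emph{Step 2: condition-based separation bound.} I would then invoke the condition-based separation bound developed in Section~\ref{sec:cond-sep}: an inequality of the form $\log(1/\sigma(f)) \leq C\log \kappa_1(f) + \mathrm{poly}(|M|, \log d)$, where $\kappa_1$ is the cube-norm condition number introduced in Section~\ref{sec:condition}. This further reduces the problem to bounding the moments of $\log \kappa_1(\fkf)$.

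\emph{Step 3: probabilistic tail for the condition number.} Finally, the framework of Section~\ref{sec:prob-estims} shows that, for a zintzo random polynomial --- and in particular for the Gaussian and uniform cases of Definition~\ref{def:GU:zintzo} --- one has $\Pr[\kappa_1(\fkf) \geq t] \leq \mathrm{poly}(|M|, d)/t^{\alpha}$ for some absolute $\alpha>0$. Taking logarithms turns this into an exponential tail for $\log \kappa_1(\fkf)$ centered at $\Oh(\log|M|+\log d)$, and the standard layer-cake integration gives $\bbE[(\log \kappa_1(\fkf))^k] = \Oh(k+\log|M|+\log d)^k$. Chaining with Step~2 and Step~1 and absorbing constants yields the first-moment bound $\Oh(|M|\log d)$ and the $k$-th moment bound $\Oh(k|M|\log d)^k$ asserted in the theorem.

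The main obstacle will be the interface between Steps~1 and~2: the \cite{sagraloffmehlhorn2016} complexity statement is traditionally phrased in terms of coefficient bit-length $\tau$, whereas the condition-based separation bound lives in the $\|\cdot\|_1$/cube framework of Section~\ref{sec:norm-cube}. One has to verify carefully that each appearance of $\tau$ in the subdivision-tree analysis can be replaced by $\log \kappa_1(f)$ plus a polynomial-in-$(|M|,\log d)$ correction, without losing the polynomial dependence on $|M|$ and $\log d$ in the final estimate.
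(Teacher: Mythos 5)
Your proposal matches the paper's proof essentially step for step: your Step~1 corresponds to Proposition~\ref{prop:descartestree} (width $\Oh(|M|)$ per level of the subdivision tree via the sign-variation bound for sparse polynomials, depth $\Oh(\log\cond(f)+\log d)$ via the one- and two-circle termination criteria), your Step~2 to the condition-based separation bound $\Delta_\varepsilon(f)\geq 1/(12d\cond(f))$ of Theorem~\ref{theo:relrealseparation}, and your Step~3 to the global tail bound of Proposition~\ref{prop:globalcondbound} followed by the layer-cake moment computation of Proposition~\ref{prop:globalcondboundlogp}, specialized to $K_\fkf\rho_\fkf\leq|M|/2$ for Gaussian and uniform coefficients. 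The interface issue you flag is resolved exactly as you anticipate: the paper bypasses the bit-length-$\tau$ complexity statement and argues directly from the geometric termination criterion of \descartes in terms of $\Delta_\varepsilon$.
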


The above result shows that Descartes' univariate solver can perform well in practice for sparse polynomials in $[-1,1]$. It shows that in average the size of the subdivision tree will be polynomial in the size of the sparse polynomial. This provides an insight on the special character of Mignotte-like 4-nomials. The previous theorem is a particular case of Theorem~\ref{theo:descartestreezintzo}, which states the result for zintzo random polynomials.

\begin{theo}\label{theo:JScomplexityconditionGaussian}
  Let $\fkf\in\Pd$ be a random polynomial supported on $M$ that is either
  Gaussian or uniform (Def.~\ref{def:GU:zintzo}). The average bit-complexity of
  \JSalg on input $(\fkf,I)$ is at most
  \[
    \Oh\left(|M|^{12}\log^{7}d\right).
  \]
  Moreover, the $k$th moment of the bit-run-time is bounded by
  \[
    \Oh\left(k \, |M|^{12}\log^{7}d\right)^k.
  \]
\end{theo}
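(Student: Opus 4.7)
The plan is to combine three ingredients: the deterministic bit-complexity analysis of \JSalg due to \cite{jindalsagraloff2017} expressed through a root-separation bound, the condition-based separation bound developed in Section~\ref{sec:cond-sep}, and the probabilistic tail estimates for zintzo random polynomials from Section~\ref{sec:prob-estims}. Since Definition~\ref{def:GU:zintzo} exhibits Gaussian and uniform random polynomials as special instances of zintzo inputs, I would first prove the statement for an arbitrary zintzo polynomial and then specialize, following exactly the same blueprint used to deduce Theorem~\ref{theo:descartestreeGaussian} from Theorem~\ref{theo:descartestreezintzo}.

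First, I recall from \cite{jindalsagraloff2017} that the bit complexity of \JSalg on a $|M|$-sparse input $f\in\Pd$ is of the form $\sO\bigl(\mathrm{poly}(|M|,\log d)\cdot\mathrm{poly}(\log\tau(f))\bigr)$, where $\tau(f)$ couples the coefficient size $\|f\|_1$ with the inverse minimal separation $1/\sigma(f)$ of the roots of $f$ in a neighbourhood of $I$. Tracking the precise exponents in $|M|$, $\log d$, and $\log\tau(f)$ from their analysis is what will ultimately force the $|M|^{12}\log^{7}d$ factor in the final bound.

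Next, I translate the algebraic quantity $\log\tau(f)$ into our condition framework. The separation bounds of Section~\ref{sec:cond-sep} give, for any $f$, an inequality of the form $\log(1/\sigma(f))\lesssim P(|M|,\log d)\cdot\log\kappa(f)$ for an explicit polynomial $P$, where $\kappa(f)$ is the condition number associated with the $1$-norm on coefficients and the $\infty$-norm on the cube. Combined with the elementary control of $\log\|f\|_1$, this produces a fully deterministic bit-complexity bound that is polynomial in $|M|$, $\log d$, and $\log\kappa(f)$, with fully tracked exponents.

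Finally, I insert the probabilistic estimates from Section~\ref{sec:prob-estims}. These provide subexponential tail bounds for $\log\kappa(\fkf)$ of zintzo inputs, yielding moment bounds of the form $\bbE[(\log\kappa(\fkf))^k]\leq (Ck|M|\log d)^k$ for some absolute constant $C$. Raising the deterministic bit-complexity bound to the $k$-th power, expanding and taking expectations term by term, then gives $\bbE[T(\fkf)^k]\leq\Oh(k|M|^{12}\log^{7}d)^k$; the case $k=1$ is the advertised average bound. The main technical obstacle is the careful bookkeeping of exponents: the factor $|M|^{12}\log^{7}d$ is not arbitrary but arises as the product of the exponents appearing in the \JSalg complexity with those produced by the separation-to-condition translation, modulated by the concentration rate $|M|\log d$ of $\log\kappa(\fkf)$; one must also verify that no hidden logarithmic factor in the coefficient size survives the averaging over zintzo inputs, and that the subexponential moment bound is preserved under composition with a polynomial of controlled degree.
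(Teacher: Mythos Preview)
Your proposal is correct and follows essentially the same route as the paper. The paper packages the separation-to-condition translation into Proposition~\ref{prop:JScomplexitycondition}, then bounds the moments of $\log\cond(\fkf)$ and of $|\log\|\fkf\|_1|$ separately (Propositions~\ref{prop:globalcondboundlogp} and~\ref{prop:lognormboundp}), combines them via $\bbE\max\{|\fkx|,|\fky|\}\leq\bbE|\fkx|+\bbE|\fky|$, and finally specializes using $L_\fkf\rho_\fkf\leq|M|/2$ and $|M|\leq d+1$; the only point you slightly underplay is that controlling $|\log\|\fkf\|_1|$ when $\|\fkf\|_1$ is small is not entirely elementary and requires the anti-concentration hypothesis (this is exactly the content of Proposition~\ref{prop:lognormboundp}).
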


This theorem provides a first step for the solution of the Rojas-Ye version of
Smale's 17th problem for sparse systems \citep{rojasye2005}. The later theorem
is a particular case of Theorem~\ref{theo:JScomplexityconditionzintzo}, where
the claim is shown for a restricted class of zintzo random polynomials. In
future work, we hope to extend this analysis for polynomials distributed not
only with continuous probability distributions, but also with discrete
probability distributions.

\section{Norms for the cube and polynomials over the cube}
\label{sec:norm-cube}

In the traditional setting, for a homogeneous polynomial
$F=\sum_{\abs{\alpha} = d} F_\alpha X^\alpha\in\Hd$ of degree $d$ in $n+1$
variables, we consider the \emph{Weyl norm},
\[\| F \|_{\mathrm{W}}:=\sqrt{\sum_\alpha\binom{d}{\alpha}^{-1}|F_{\alpha}|^2},\] to
control the evaluations of the polynomial, $F(p)$, and its gradient,
$\nabla_pF$, at points $p\in\bbS^n$.

Unfortunately, the scaling introduced by
the norm in the coefficients affects the probabilistic model and forces us to
consider random polynomials with a particular variance structure that excludes
Gaussian polynomials.
To avoid the scaling of the coefficients, we work in the cube
and we will  use the $\infty$-norm, 
\begin{equation*}
    \|x\|_\infty:=\max_i|x_i|.
\end{equation*}
One of the main disadvantages of this norm is that it does not come from an
inner product.
However, we can overcome this problem as shown by \cite{CETC-norms}.

%
% For $\Pd$, the space of affine polynomials of degree at most $d$ in $n$ variables.
For a polynomial $f:=\sum_{\abs{\alpha} \leq d} f_\alpha X^\alpha\in\Pd$,
motivated by duality,
we  consider the  norm
\begin{equation}
  \label{eq:poly-1-norm}
    \|f\|_1:=\sum_\alpha |f_\alpha| .
  \end{equation}
  
To demonstrate that all the results generalize to the complex case
we will prove the various bounds for polydics,  $z\in \bbD^n:=\overline{B}_{\bbC}(0,1)^n$,
which is the complex analogous of the cube.

The motivation to choose the 1-norm emanates from the following
proposition which shows that we can control the evaluation of $f$ at
$x \in I^n:=[-1,1]^n$, that is $f(x)$, using the 1-norm for $f$.

\begin{prop}\label{prop:l1boundsevaluation}
Let $f\in\Pd$ and $z\in \bbD^n$. Then $|f(z)|\leq \|f\|_1$.
\end{prop}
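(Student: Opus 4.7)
The statement is essentially an elementary consequence of the triangle inequality together with the definition of the polydisc. The plan is to expand $f(z)$ monomial by monomial, bound each monomial $|z^\alpha|$ by $1$ on $\bbD^n$, and conclude by summing the absolute values of the coefficients.

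More concretely, write $f = \sum_{\abs{\alpha}\leq d} f_\alpha X^\alpha$. Evaluating at $z \in \bbD^n$ and applying the triangle inequality gives
\[
|f(z)| = \Bigl|\sum_\alpha f_\alpha z^\alpha\Bigr| \leq \sum_\alpha |f_\alpha|\, |z^\alpha|.
\]
For each multi-index $\alpha = (\alpha_1,\ldots,\alpha_n)$, one has $|z^\alpha| = \prod_{i=1}^n |z_i|^{\alpha_i}$. Since $z \in \bbD^n$ means $|z_i| \leq 1$ for every $i$, each factor is at most $1$, so $|z^\alpha| \leq 1$. Plugging this back yields $|f(z)| \leq \sum_\alpha |f_\alpha| = \|f\|_1$, which is the desired inequality.

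There is no real obstacle here; the only thing to be careful about is to use the complex polydisc definition $\bbD^n = \overline{B}_{\bbC}(0,1)^n$ so that the bound $|z_i| \leq 1$ holds coordinate-wise, and to note that equality can in principle be attained (for example at $z = (1,\ldots,1)$ when all $f_\alpha$ share a common phase), which shows the bound is tight and confirms that the $1$-norm on coefficients is indeed the natural dual object to the $\infty$-norm on the cube/polydisc.
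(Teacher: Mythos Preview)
Your proof is correct and follows essentially the same approach as the paper: triangle inequality plus the observation that $|z^\alpha|\leq 1$ on the polydisc. The paper phrases the monomial bound as $|z^\alpha|\leq \|z\|_\infty^{|\alpha|}\leq 1$, but this is just a compact way of writing your coordinate-wise product estimate.
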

\begin{proof}
  It holds 
  $|f(z)|=\left|\sum_\alpha f_\alpha z^\alpha\right|
  \leq \sum_\alpha |f_\alpha| \, \|z\|_\infty^{|\alpha|}\leq \|f\|_1$;
  as $z\in D^n$
  implies that $\|z\|_\infty\leq 1$.
\end{proof}

\begin{remark}
  The reader might wonder why we do not choose another norm.
  For example, if we choose $\|f\|_2:=\sqrt{\sum_\alpha |f_\alpha|^2}$, then we
  can prove that for all $z\in D^n$, it holds $|f(z)|\leq \sqrt{N}\|f\|_2$,
  where $N$ is the number of terms in $f$. This gives worse bounds than using
  $\|f\|_1$ since
  \[\|f\|_2\leq \|f\|_1\leq \sqrt{N} \, \|f\|_2,\] which motivates us to prefer
  $\|f\|_1$ to $\sqrt{N}\|f\|_2$ as a bounding quantity.
\end{remark}

\begin{notation}
Before continuing, let us clarify notations so that the statements that we consider are clear. By convention,
\[
  \diff f:= \left(
    \frac{\partial f}{\partial X_1} \, \cdots \, \frac{\partial f}{\partial X_n}
  \right)
\]
represents the formal tangent covector of $f$, whose entries are the formal partial derivatives of $f$. Similarly,  the formal tensor of $k$th derivatives of $f$ is
\[\diff^kf:=\begin{pmatrix}\frac{\partial^kf}{\partial X_{i_1}\, \cdots \, \partial X_{i_k}}\end{pmatrix}_{i_1,\cdots,i_k} .\]

When we want to refer to the gradient vector, the tangent covector or the tensor
of $k$th derivatives of $f$ (evaluated) at a point $z\in\mathbb{R}^n$, we will
use respectively $\diff_zf$ and $\diff_z^kf$. Thus, we have that $\diff_zf$ is a
covector $\Tg_z(\bbR^n)\cong \bbR^n\rightarrow \bbR$ and that $\diff_z^kf$ is a
multilinear map $\Tg_z(\bbR^n)^k\cong (\bbR^n)^k\rightarrow \mathbb{R}$.
\end{notation}

An important feature of the 1-norm is that for polynomials
we can use it to control the  1-norm of their
derivatives. In our notation, if $v_1,\ldots,v_k\in\bbR^n$ and $f\in\Pd$, then $\diff^kf(v_1,\ldots,v_k)$
is a polynomial of degree $\leq d-k$ and so it makes sense to compute its $1$-norm.
% It is defined  above.

\begin{prop}\label{prop:l1boundsderivative}
Let $f\in\Pd$ and $v\in\bbR^n$. Then
\[\|\diff f(v)\|_1\leq d \, \|f\|_1 \, \|v\|_\infty.\]
In particular, for all $z\in \bbD^n$, $|\diff_zf(v)|\leq d \, \|f\|_1 \,\|v\|_\infty$.
\end{prop}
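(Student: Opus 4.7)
The plan is to reduce the first inequality to a direct coefficient computation, and then derive the second inequality by applying the previous proposition (\ref{prop:l1boundsevaluation}) to the polynomial $\diff f(v)$, which has degree at most $d-1$.

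First I would expand $\diff f(v)$ by linearity in the coefficients. Writing $f=\sum_\alpha f_\alpha X^\alpha$, we have
\[
\diff f(v)=\sum_{i=1}^n v_i\,\frac{\partial f}{\partial X_i}=\sum_\alpha f_\alpha\sum_{i=1}^n \alpha_i v_i\,X^{\alpha-e_i},
\]
with the convention that the inner sum only runs over $i$ with $\alpha_i\geq 1$. Collecting by exponent $\beta=\alpha-e_i$, the coefficient of $X^\beta$ in $\diff f(v)$ is $\sum_{i}(\beta_i+1)v_i f_{\beta+e_i}$, hence by the triangle inequality
\[
\|\diff f(v)\|_1\;\leq\;\sum_\beta\sum_i(\beta_i+1)|v_i|\,|f_{\beta+e_i}|\;=\;\sum_\alpha |f_\alpha|\sum_{i=1}^n \alpha_i|v_i|.
\]
Bounding each $|v_i|$ by $\|v\|_\infty$ gives $\sum_i\alpha_i|v_i|\leq |\alpha|\,\|v\|_\infty\leq d\,\|v\|_\infty$ (here we use $|\alpha|\leq d$ because $f\in\Pd$), and summing over $\alpha$ yields the claimed bound $d\,\|f\|_1\,\|v\|_\infty$.

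For the ``in particular'' statement, I would simply note that $\diff f(v)$ is a polynomial of degree at most $d-1$ in $n$ variables, so Proposition~\ref{prop:l1boundsevaluation} (extended to the complex polydisc as stated there) gives $|\diff_z f(v)|=|(\diff f(v))(z)|\leq \|\diff f(v)\|_1$ for every $z\in\bbD^n$, and combining with the previous inequality finishes the proof.

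There is no real obstacle here: the whole argument is a triangle inequality on the coefficient expansion together with the crude bound $|\alpha|\leq d$. The only point that deserves a moment of care is making sure that the substitution $v\mapsto\diff f(v)$ is carried out at the level of \emph{formal} polynomials (so that $\|\cdot\|_1$ makes sense) rather than at a point, which is why the statement has been phrased with $\diff f(v)$ on the left-hand side and $\diff_z f(v)$ only in the ``in particular'' clause.
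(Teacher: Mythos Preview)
Your proof is correct and essentially identical to the paper's: both reduce to the inequality $\sum_i\alpha_i|v_i|\leq |\alpha|\,\|v\|_\infty\leq d\,\|v\|_\infty$ on each monomial and then invoke Proposition~\ref{prop:l1boundsevaluation} for the second claim. The only cosmetic difference is that the paper first bounds $\|\diff f(v)\|_1\leq\sum_\alpha|f_\alpha|\,\|\diff X^\alpha(v)\|_1$ and then treats each monomial, whereas you expand everything at once and regroup by~$\alpha$.
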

\begin{proof}
  %[Proof of Proposition~\ref{prop:l1boundsderivative}]
We have
\[
d\|f\|_1\|v\|_\infty=\sum_\alpha d|f_\alpha|\|v\|_\infty~\text{ and }~\|\diff f(v)\|_1\leq \sum_\alpha |f_\alpha|\|\diff X^{\alpha}(v)\|_1.
\]
Therefore, it is enough to prove the claim for $X^{\alpha}$. By a simple
computation, we get
\[
\|\diff X^{\alpha}(v)\|_1=\left\|\sum_{i=1}^n\alpha_iv_iX^{\alpha}/X_i\right\|_1\leq \sum_{i=1}^n\alpha_i|v_i|\leq \|\alpha\|_1\|v\|_{\infty}\leq d\|v\|_\infty,
\]
which is the desired inequality for the $1$-norm. The last claim is Proposition~\ref{prop:l1boundsevaluation}.
\end{proof}

\begin{cor}\label{cor:l1boundsderivative}
  Let $f\in\Pd$, $d\geq k\geq 0$, and $v_1,\ldots,v_k\in\bbR^n$. Then
  \[\left\|\frac{1}{k!}\diff^kf(v_1,\ldots,v_k)\right\|_1
    \leq \binom{d}{k} \, \|f\|_1\|v_1\|_\infty\cdots\|v_k\|_\infty.\]
  In particular, for all $z\in D^n$,
  $\left|\frac{1}{k!}\diff^k_zf(v_1,\ldots,v_k)\right|
  \leq \binom{d}{k} \, \|f\|_1\|v_1\|_\infty\cdots\|v_k\|_\infty$.
\end{cor}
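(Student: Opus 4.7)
The plan is to proceed by induction on $k$, using Proposition~\ref{prop:l1boundsderivative} as the one-step bound. The base case $k=0$ is immediate: $\frac{1}{0!}\diff^0 f = f$, and $\binom{d}{0}\|f\|_1 = \|f\|_1$, so both sides agree.

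For the inductive step, assume the statement holds for $k-1$, and set
\[
  g := \frac{1}{(k-1)!}\,\diff^{k-1} f(v_1,\ldots,v_{k-1}) \in \mcP_{n,d-k+1}.
\]
The key observation is that differentiating in the remaining direction gives $\diff g(v_k) = \frac{1}{(k-1)!}\,\diff^k f(v_1,\ldots,v_k)$, so I can apply Proposition~\ref{prop:l1boundsderivative} to $g$, noting that $g$ has degree at most $d-k+1$. This yields
\[
  \left\|\frac{1}{(k-1)!}\,\diff^k f(v_1,\ldots,v_k)\right\|_1
  \leq (d-k+1)\,\|g\|_1\,\|v_k\|_\infty.
\]
Applying the inductive hypothesis to $\|g\|_1$ bounds the right-hand side by $(d-k+1)\binom{d}{k-1}\|f\|_1\|v_1\|_\infty\cdots\|v_k\|_\infty$. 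Dividing both sides by $k$ and using the Pascal-like identity $\binom{d}{k} = \frac{d-k+1}{k}\binom{d}{k-1}$ gives exactly the desired bound.

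The pointwise bound for $z \in \bbD^n$ is then obtained by applying Proposition~\ref{prop:l1boundsevaluation} to the polynomial $\frac{1}{k!}\diff^k f(v_1,\ldots,v_k)$ (of degree $\leq d-k$), since evaluating this polynomial at $z$ yields $\frac{1}{k!}\diff^k_z f(v_1,\ldots,v_k)$.

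There is no real obstacle here: the whole argument is a bookkeeping exercise once one recognizes that the factor $1/k!$ is precisely what converts the telescoping product $d(d-1)\cdots(d-k+1)$ arising from $k$ applications of Proposition~\ref{prop:l1boundsderivative} into the binomial coefficient $\binom{d}{k}$. The only mild care needed is to ensure that the degree drops by one at each differentiation (so that the constant in Proposition~\ref{prop:l1boundsderivative} is $d-k+1$ rather than $d$), which is why the proof naturally proceeds by induction rather than by a single application.
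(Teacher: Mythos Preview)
Your proof is correct and follows essentially the same inductive strategy as the paper. The only cosmetic difference is the order in which a single derivative is peeled off: the paper first differentiates once in $v_k$ and applies the induction hypothesis to $\diff f(v_k)\in\mcP_{n,d-1}$ (yielding the factor $\frac{1}{k}\binom{d-1}{k-1}\cdot d$), whereas you first apply the induction hypothesis to $f$ and then differentiate the resulting $g\in\mcP_{n,d-k+1}$ once more (yielding $\frac{1}{k}(d-k+1)\binom{d}{k-1}$); both products equal $\binom{d}{k}$ and the arguments are otherwise identical.
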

\begin{proof}
  %[Proof of Corollary~\ref{cor:l1boundsderivative}]
By induction hypothesis, we have that
\begin{align*}
  \left\|\frac{1}{k!}\diff^kf(v_1,\ldots,v_k)\right\|_1
   & =\frac{1}{k}\left\|\frac{1}{(k-1)!}\diff^{k-1} \left(\diff f(v_k)\right)(v_1,\ldots,v_{k-1})\right\|_1\\
  & \leq \frac{1}{k}\binom{d-1}{k-1}\|\diff f(v_k)\|_1\|v_1\|_\infty\cdots\|v_{k-1}\|_\infty.
\end{align*}
Proposition~\ref{prop:l1boundsderivative} finishes the induction step and provides the base case for induction. The last claim is again Proposition~\ref{prop:l1boundsevaluation}.
\end{proof}

The bounds on the derivatives allows us to bound the (Lipschitz constants) of the
variations of a polynomial $f$ and all its derivatives inside $\bbD^n$.

\begin{prop}\label{prop:l1Lipschitz}
  Let $f\in\Pd$, $d\geq k\geq 0$, and $v_1,\ldots,v_k\in\bbR^n$ such that
  $\|v_i\|_\infty=1$. Then, the map
  \begin{align*}
    \bbD^n&\rightarrow [-1,1]\\
    z&\mapsto \frac{1}{d^k \, \|f\|_1}\diff_z^kf(v_1,\ldots,v_k)
  \end{align*}
  is well-defined and $(d-k)$-Lipschitz with respect the $\infty$-norm.
\end{prop}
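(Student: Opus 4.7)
The plan is to reduce both claims — well-definedness and the Lipschitz bound — directly to Corollary~\ref{cor:l1boundsderivative}, applied once at order $k$ and once at order $k+1$. Throughout, let $g(z):=\frac{1}{d^k\|f\|_1}\diff_z^kf(v_1,\ldots,v_k)$, which is a polynomial in $z$ of degree $\leq d-k$ (with real coefficients, evaluated over $\bbD^n$).

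For well-definedness, I would invoke Corollary~\ref{cor:l1boundsderivative} at order $k$ with $\|v_i\|_\infty=1$ to get, for every $z\in\bbD^n$,
\[
|\diff_z^kf(v_1,\ldots,v_k)|\leq k!\binom{d}{k}\|f\|_1=d(d-1)\cdots(d-k+1)\|f\|_1\leq d^k\|f\|_1,
\]
so $|g(z)|\leq 1$, giving the required membership in the unit disc.

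For the Lipschitz estimate, I would bound the gradient of $g$ uniformly on $\bbD^n$. For any $u\in\bbC^n$ with $\|u\|_\infty\leq 1$ and any $\xi\in\bbD^n$,
\[
\diff_\xi g(u)=\tfrac{1}{d^k\|f\|_1}\diff_\xi^{k+1}f(u,v_1,\ldots,v_k),
\]
and applying Corollary~\ref{cor:l1boundsderivative} now at order $k+1$ yields
\[
|\diff_\xi^{k+1}f(u,v_1,\ldots,v_k)|\leq (k+1)!\binom{d}{k+1}\|f\|_1=d(d-1)\cdots(d-k)\|f\|_1\leq d^k(d-k)\|f\|_1,
\]
using $d(d-1)\cdots(d-k+1)\leq d^k$ on the first $k$ factors. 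Hence $|\diff_\xi g(u)|\leq d-k$ uniformly in $\xi\in\bbD^n$ and $\|u\|_\infty\leq 1$.

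Finally, since $\bbD^n$ is convex, for $z,w\in\bbD^n$ the segment $\xi_t:=z+t(w-z)$ lies entirely in $\bbD^n$, and the univariate polynomial $\phi(t):=g(\xi_t)$ satisfies $\phi'(t)=\diff_{\xi_t}g(w-z)$. Applying the gradient bound to $u=(w-z)/\|w-z\|_\infty$ (if $w\neq z$) gives $|\phi'(t)|\leq (d-k)\|w-z\|_\infty$, whence integration over $[0,1]$ delivers $|g(w)-g(z)|\leq(d-k)\|w-z\|_\infty$. I do not anticipate a genuine obstacle: the only subtle point is that $\bbD^n$ is a complex domain, but since $g$ is a polynomial and the straight segment remains in $\bbD^n$, the one-variable integration argument applies verbatim, with the ``operator norm of $\diff g$ with respect to $\|\cdot\|_\infty$'' coinciding with the $1$-norm of its gradient and being controlled by Corollary~\ref{cor:l1boundsderivative}.
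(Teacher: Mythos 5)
Your proposal is correct and follows essentially the same route as the paper's proof: well-definedness from Corollary~\ref{cor:l1boundsderivative} at order $k$ (via $d(d-1)\cdots(d-k+1)\leq d^k$), and the Lipschitz bound by integrating along the straight segment in $\bbD^n$ and controlling the $(k+1)$-st derivative with the same corollary, yielding $\frac{d(d-1)\cdots(d-k)}{d^k}\leq d-k$. Your treatment is if anything slightly more explicit about the complex segment and the operator-norm interpretation of $\diff_\xi g$, but the argument is the same.
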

\begin{proof}
  Without loss of generality, assume that $\|f\|_1=1$. Let $x,y\in D^n$. By the fundamental theorem of calculus and using the  substitution $t = \tfrac{z-x}{y-x}$,
  we get 
\begin{align*}
  \diff_y^kf(v_1,\ldots,v_k)-\diff_x^kf(v_1,\ldots,v_k)
  & = \int_{x}^{y} \diff_z^{k+1}f(v_1,\ldots,v_k)\,dz\\
  & = \int_0^1\,\diff_{x+ t(y-x)}^{k+1}f(v_1,\ldots,v_k) \cdot (y-x) \,dt.
\end{align*}
Hence, taking absolute values and using Corollary~\ref{cor:l1boundsderivative}, we get that
\[
\left|\frac{1}{d^k}\diff_y^kf(v_1,\ldots,v_k)-\frac{1}{d^k}\diff_x^kf(v_1,\ldots,v_k)\right|\leq \frac{d!}{d^k(d-k-1)!}\|y-x\|_\infty\leq (d-k)\|x-y\|_\infty,
\]
which gives the Lipschitz property. The choice of the co-domain follows from Corollary~\ref{cor:l1boundsderivative}.
\end{proof}

Recall that for a multilinear map $A:(\bbR^n)^k\rightarrow \bbR$ we can consider the induced norm
\[
  \|A\|_{\infty,\infty}:=\sup_{v_1,\ldots,v_k\neq 0}
  \frac{|A(v_1,\ldots,v_k)|}{\|v_1\|_\infty\cdots\|v_k\|_\infty} ,
\]
instead of the $1$-norm
\[\|A\|_1:=\sum_{i_1,\ldots,i_k}|A_{i_1,\ldots,i_k}|. \] Although
$\|A\|_{\infty,\infty}\leq \|A\|_1$ is not an equality in general, it is so in
the case where $A$ is a linear map, which allows us to deduce the following.

\begin{prop}\label{prop:l1Lipschitznorm}
  Let $f\in\Pd$ and $d\geq k\geq 0$. Then the map
  \begin{align*}
    \bbD^n&\rightarrow [0,1]\\
    z&\mapsto \frac{1}{d^k\|f\|_1}\|\diff_z^kf\|_{\infty,\infty}
  \end{align*}
  is well-defined and $(d-k)$-Lipschitz with respect the $\infty$-norm.
\end{prop}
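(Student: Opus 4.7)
The plan is to deduce the claim from Proposition~\ref{prop:l1Lipschitz} by commuting a supremum over unit vectors with the difference, via the reverse triangle inequality for the operator-like norm $\|\cdot\|_{\infty,\infty}$. Concretely, I recall that by the very definition of $\|\cdot\|_{\infty,\infty}$,
\[
\|\diff_z^k f\|_{\infty,\infty} = \sup \bigl\{\,|\diff_z^k f(v_1,\ldots,v_k)| : \|v_1\|_\infty = \cdots = \|v_k\|_\infty = 1\,\bigr\},
\]
which is what will allow the link with the pointwise bounds already proved.

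First I would check well-definedness. For any tuple $v_1,\ldots,v_k$ of unit vectors and any $z\in\bbD^n$, Corollary~\ref{cor:l1boundsderivative} yields
\[
|\diff_z^k f(v_1,\ldots,v_k)|\;\leq\; k!\binom{d}{k}\,\|f\|_1 \;=\; \frac{d!}{(d-k)!}\,\|f\|_1 \;\leq\; d^k\|f\|_1.
\]
Taking the supremum over such tuples gives $\|\diff_z^k f\|_{\infty,\infty}\leq d^k\|f\|_1$, so the map takes values in $[0,1]$.

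Next I would establish the Lipschitz estimate. Since $\|\cdot\|_{\infty,\infty}$ is a norm on the space of $k$-multilinear forms on $\bbR^n$, the reverse triangle inequality gives, for $x,y\in\bbD^n$,
\[
\bigl|\|\diff_y^k f\|_{\infty,\infty} - \|\diff_x^k f\|_{\infty,\infty}\bigr| \;\leq\; \|\diff_y^k f - \diff_x^k f\|_{\infty,\infty}.
\]
Now fix any tuple of unit vectors $v_1,\ldots,v_k$. Proposition~\ref{prop:l1Lipschitz} (applied to $f$ without dividing by $d^k\|f\|_1$) yields
\[
\bigl|\diff_y^k f(v_1,\ldots,v_k) - \diff_x^k f(v_1,\ldots,v_k)\bigr| \;\leq\; (d-k)\,d^k\,\|f\|_1\,\|y-x\|_\infty.
\]
Taking the supremum over unit tuples on the left turns the left-hand side into $\|\diff_y^k f - \diff_x^k f\|_{\infty,\infty}$. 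Chaining with the reverse triangle bound and dividing through by $d^k\|f\|_1$ gives the desired $(d-k)$-Lipschitz estimate.

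The argument is essentially a black-box application of the pointwise Lipschitz property, so the only genuine subtlety is the interchange of the supremum with the difference of norms; the reverse triangle inequality handles this cleanly, so I do not foresee a serious obstacle beyond keeping careful track of normalizing factors.
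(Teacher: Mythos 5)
Your proof is correct and follows essentially the same route as the paper: apply Proposition~\ref{prop:l1Lipschitz} pointwise, take the supremum over unit tuples to bound $\|\diff_y^kf-\diff_x^kf\|_{\infty,\infty}$, and conclude via the reverse triangle inequality (which the paper leaves implicit but you rightly make explicit). The well-definedness check via Corollary~\ref{cor:l1boundsderivative} also matches the paper's (terser) justification.
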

\begin{proof}
Without loss of generality, assume that $\|f\|_1=1$. By Proposition~\ref{prop:l1Lipschitz}, we have that for all $x,y\in \bbD^n$ and all $v_1,\ldots,v_k\in\bbR^n$ such that $\|v_i\|_\infty=1$,
\[\left|\left(\frac{1}{d^k}\diff_y^kf-\frac{1}{d^k}\diff_x^kf\right)(v_1,\ldots,v_k)\right|\leq (d-k)\|y-x\|_{\infty}.\]
By maximizing the left-hand side with respect $v_1,\ldots,v_k$, we get that for all $x,y\in \bbD^n$,
\[
\left\|\frac{1}{d^k}\diff_y^kf-\frac{1}{d^k}\diff_x^kf\right\|_{\infty,\infty}\leq (d-k)\|y-x\|_{\infty}.
\]
This gives the Lipzchitz property. The choice of codomain is justified in a similar way.
\end{proof}

\begin{cor}\label{cor:l1Lipschitznorm}
Let $f\in\Pd$. Then the maps
\begin{align*}
\begin{aligned}
\bbD^n&\rightarrow [0,1]\\
z&\mapsto \frac{1}{\|f\|_1}|f(z)|
\end{aligned}
&&and&&
\begin{aligned}
\bbD^n&\rightarrow [0,1]\\
z&\mapsto \frac{1}{d\|f\|_1}\|\diff_z f\|_{1}
\end{aligned}
\end{align*}
are well-defined and $d$-Lipschitz with respect the $\infty$-norm.
\end{cor}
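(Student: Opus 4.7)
The plan is to derive both claims of the corollary by specializing Proposition~\ref{prop:l1Lipschitznorm} to the values $k=0$ and $k=1$, and then reconciling the $(\infty,\infty)$-norm appearing there with the $1$-norm appearing in the corollary's statement.

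For the first map, I would take $k=0$. In that case $\diff^0_z f$ is just the scalar $f(z)$, viewed as a $0$-multilinear form, and the induced $(\infty,\infty)$-operator norm of a scalar is its absolute value. Proposition~\ref{prop:l1Lipschitznorm} then says directly that $z\mapsto \frac{1}{\|f\|_1}|f(z)|$ takes values in $[0,1]$ and is $(d-0)=d$-Lipschitz with respect to $\|\cdot\|_\infty$. No further work is needed here.

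For the second map, I would take $k=1$. The proposition yields that $z\mapsto \frac{1}{d\|f\|_1}\|\diff_z f\|_{\infty,\infty}$ is well-defined into $[0,1]$ and $(d-1)$-Lipschitz. To conclude, I would use the fact (already noted in the paragraph preceding Proposition~\ref{prop:l1Lipschitznorm}) that $\|A\|_{\infty,\infty}=\|A\|_1$ whenever $A$ is a linear functional: writing $A(v)=\sum_i A_i v_i$, the choice $v_i=\sgn(A_i)$ achieves $|A(v)|=\sum_i |A_i|$ with $\|v\|_\infty\leq 1$, giving $\|A\|_{\infty,\infty}\geq \|A\|_1$, while the reverse inequality is a standard estimate. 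Since $\diff_z f$ is a linear form on $\bbR^n$, this gives $\|\diff_z f\|_1=\|\diff_z f\|_{\infty,\infty}$. Therefore the second map coincides with the one supplied by the proposition, is $(d-1)$-Lipschitz and hence in particular $d$-Lipschitz.

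There is no real obstacle here: the statement is essentially a bookkeeping consequence of Proposition~\ref{prop:l1Lipschitznorm} together with the elementary identity relating the two relevant norms on the dual of $(\bbR^n,\|\cdot\|_\infty)$. The only mild subtlety is the uniform Lipschitz constant $d$, which we obtain by using the coarser bound $d-k\leq d$ for $k\in\{0,1\}$.
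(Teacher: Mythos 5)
Your proposal is correct and follows essentially the same route as the paper: the paper's proof likewise consists of observing that $\|\diff_zf\|_{\infty,\infty}=\|\diff_zf\|_1$ for the linear form $\diff_zf$ and then invoking Proposition~\ref{prop:l1Lipschitznorm} for $k=0$ and $k=1$, weakening $d-k$ to $d$. You have merely spelled out the details (the sign-vector argument for the norm identity and the scalar case $k=0$) that the paper leaves implicit.
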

\begin{proof}
Just note that $\|\diff_zf\|_{\infty,\infty}=\|\diff_zf\|_1$. The rest is straightforward from Proposition~\ref{prop:l1Lipschitznorm}.
\end{proof}

%\begin{comment}
We finish with a slightly stronger version of some of the above results that
will be useful later. When we are outside the polydisk $\bbD^n$, then the
non-linear factors of the polynomials dominate. However, we can retain control
around $\bbD^n$ if we are not too far.
%We only state it for the first and

\begin{prop}\label{prop:l1boundsepsilon}
Let $f\in\Pd$, $\varepsilon>0$ and $\bbD^n_{\varepsilon}:=\overline{B}_\bbC(0,1+\varepsilon)^n$. If $\varepsilon\leq\frac{1}{d}$, then:
\begin{enumerate}
\item For all $z\in \bbD^n_{\varepsilon}$, $k\geq 0$, and all $v_1,\ldots,v_k\in\bbR^n$,
\[
  \left|\frac{1}{k!}\diff_z^kf(v_1,\ldots,v_k)\right|
  \leq
  \enumber \, \binom{d}{k} \, \|f\|_1 \, \|v_1\|_\infty\cdots\|v_k\|_\infty.
\]
\item The maps
\begin{align*}
\begin{aligned}
\bbD^n_{\varepsilon}&\rightarrow [0,\enumber]\\
z&\mapsto \frac{1}{\|f\|_1}|f(z)|
\end{aligned}
&&and&&
\begin{aligned}
\bbD^n_{\varepsilon}&\rightarrow [0,\enumber]\\
z&\mapsto \frac{1}{d\|f\|_1}\|\diff_z f\|_{1}
\end{aligned}
\end{align*}
are well-defined and $\enumber d$-Lipschitz with respect the $\infty$-norm.
\end{enumerate}
\end{prop}
\begin{proof}
We consider the polynomial
\[g:=f\left((1+\varepsilon)X\right).\]
We can see that
\[\|g\|_1\leq e\|f\|_1,\]
since for $k\in\{0,1,\ldots,d\}$,
\[(1+\varepsilon)^k\leq \left(1+\frac{1}{d}\right)^{d}\leq \enumber.\]
Moreover, for $z\in D^n_\varepsilon$,
\[
\diff_z^kf=\frac{1}{(1+\varepsilon)^k}\diff_{z/(1+\varepsilon)}^kg,
\]
and so, by Corollary~\ref{cor:l1boundsderivative},
\[
\left|\frac{1}{k!}\diff_z^kf\right|\leq \frac{1}{(1+\varepsilon)^k}\binom{d}{k}\|g\|_1\leq \enumber\binom{d}{k}\|f\|_1.
\]
This proves 1.

The second claim follows from the first one, in the same way  Corollary~\ref{cor:l1Lipschitznorm} follows from Corollary~\ref{cor:l1boundsderivative} (after passing through Propositions~\ref{prop:l1Lipschitz} and~\ref{prop:l1Lipschitznorm}).
\end{proof}
%\end{comment}

\section{Condition and its properties}
\label{sec:condition}

In this section, we  define and study the properties of the condition number. The following definition adapts the real local condition number~\cite[Chapter~19]{Condition} (cf.~\cite{CKS16}) to our setting.
\begin{defi}
\label{def:l1condition}
  Let $f\in\Pd$ and $x\in I^n$, the \emph{local condition number of
    $f$ at $x$} is the quantity
  \[\cond(f,x):=\frac{\|f\|_1}{\max\left\{|f(x)|,\frac{1}{d}\|\diff_xf\|_1\right\}}.\]
\end{defi}

The intuition behind this condition number is as follows. It holds
$\cond(f,x)=\infty$, if and only if $x$ is a singular zero of $f$. Thus,
$\cond(f,x)$ measures how close is $f$ to have a singularity at $x$. An
important observation is that, since we consider problems from
real algebraic geometry, we do not only have to guarantee that the zeros are smooth,
but also that they exist. The latter is the reason why the term $|f(x)|$ appears
in the denominator and it is  fundamental in numerical real algebraic geometry,
where a perturbation of a polynomial not only perturbs the zeros, but also can
make them disappear.

In~\cite[2\textsuperscript{\S2}]{tonellicuetothesis} a series of explicit pro of the
condition number are underlined as the important properties  to carry out a condition-based complexity analysis. These properties are: the regularity
inequality, the 1st and the 2nd Lipschitz property, and the higher
derivative estimate.
The following theorem shows that the  condition number in \eqref{def:l1condition} has these properties. We recall that Smale's gamma, $\gamma$, is the following invariant\footnote{The formula looks different from the usual one because it is simplified for the case of one multivariate polynomial.}
\begin{equation}
  \label{eq:smalegamma}
  \gamma(f,z):=\sup_{k\geq 2}\left(\frac{1}{\|\diff_zf\|_2}\left\|\frac{1}{k!}\Diff_z^kf\right\|_{2,2}\right)^{\frac{1}{k-1}},
\end{equation}
for a polynomial $f\in\Pd$ and $z\in\bbC^n$, where $\|\cdot\|_{2,2}$ is the induced norm for multilinear maps for the usual Euclidean norm.

\begin{theo}\label{theo:l1conditionproperties}
Let $f\in\Pd$ and $x\in I^n$. Then:
\begin{itemize}
\item \textbf{Regularity inequality}: Either
\[
\frac{|f(x)|}{\|f\|_1}\geq \frac{1}{\cond(f,x)}
\quad~\text{ or }~\quad
\frac{\|\diff_xf\|_1}{d\|f\|_1}\geq \frac{1}{\cond(f,x)}.
 \]
In particular, if $\cond(f,x)\frac{|f(x)|}{\|f\|_1}<1$, then $\diff_x f\neq 0$.
\item \textbf{1st Lipschitz inequality}: The map 
\begin{align*}
    \Pd&\rightarrow [0,\infty)\\
    g&\mapsto \frac{\|g\|_1}{\cond(g,x)}
\end{align*}
is $1$-Lipschitz with respect the $1$-norm. In particular, $\cond(f,x)\geq 1$.
\item \textbf{2nd Lipschitz inequality}: The map 
\begin{align*}
    I^n&\rightarrow [0,1]\\
    y&\mapsto \frac{1}{\cond(f,y)}
\end{align*}
is $d$-Lipschitz with respect the $\infty$-norm. 
\item \textbf{Higher derivative estimate}: If $\cond(f,x)\frac{|f(x)|}{\|f\|_1}<1$, then
\[
\gamma(f,x)\leq  \frac{\sqrt{n}(d-1)}{2}\cond(f,x).
\]
\end{itemize}
\end{theo}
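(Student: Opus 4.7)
The plan is to treat each of the four properties in turn, with the higher derivative estimate being the only one that needs real work; the first three essentially unpack the definition and combine with the bounds proved in the previous section.

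For the \textbf{regularity inequality}, I would simply observe that $1/\cond(f,x)$ is, by definition, the larger of the two normalized quantities $|f(x)|/\|f\|_1$ and $\|\diff_x f\|_1/(d\|f\|_1)$, so at least one of them meets the stated bound. The ``in particular'' clause then follows: if $\cond(f,x)|f(x)|/\|f\|_1<1$, the first quantity is strictly below $1/\cond(f,x)$, forcing the second to be at least $1/\cond(f,x)>0$, so $\diff_x f$ cannot vanish.

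For the \textbf{1st Lipschitz inequality}, note that $g\mapsto\|g\|_1/\cond(g,x)=\max\{|g(x)|,\tfrac1d\|\diff_x g\|_1\}$. I would show each of the two summands inside the max is $1$-Lipschitz with respect to $\|\cdot\|_1$ on $\Pd$, and conclude by the fact that the maximum of two $1$-Lipschitz functions is $1$-Lipschitz. The bound $||g(x)|-|h(x)||\leq|(g-h)(x)|\leq\|g-h\|_1$ follows from Proposition~\ref{prop:l1boundsevaluation}. For the derivative term, $\tfrac1d\|\diff_x(g-h)\|_1=\tfrac1d\sum_i|\partial_i(g-h)(x)|\leq\tfrac1d\sum_i\|\partial_i(g-h)\|_1$ by Proposition~\ref{prop:l1boundsevaluation}, and $\sum_i\|\partial_i X^\alpha\|_1=|\alpha|\leq d$ on monomials implies $\sum_i\|\partial_i(g-h)\|_1\leq d\|g-h\|_1$. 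Evaluating the Lipschitz estimate at $h=0$ (where both sides vanish) yields $\max\{|f(x)|,\tfrac1d\|\diff_xf\|_1\}\leq\|f\|_1$, hence $\cond(f,x)\geq 1$. The \textbf{2nd Lipschitz inequality} is an immediate consequence of Corollary~\ref{cor:l1Lipschitznorm}: both $y\mapsto|f(y)|/\|f\|_1$ and $y\mapsto\|\diff_yf\|_1/(d\|f\|_1)$ are $d$-Lipschitz on $\bbD^n\supseteq I^n$ with respect to $\|\cdot\|_\infty$, so their maximum, which equals $1/\cond(f,y)$, is also $d$-Lipschitz.

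The main obstacle is the \textbf{higher derivative estimate}. My strategy is to switch between the $\infty$-norm framework used in the definition of $\cond$ and the Euclidean framework used in the definition of $\gamma$, and then match the combinatorial factors. First, the regularity inequality plus the hypothesis $\cond(f,x)|f(x)|/\|f\|_1<1$ gives $\|\diff_xf\|_1\geq d\|f\|_1/\cond(f,x)$, and the norm comparison $\|v\|_1\leq\sqrt{n}\|v\|_2$ on $\bbR^n$ yields $\|\diff_xf\|_2\geq\|\diff_xf\|_1/\sqrt{n}\geq d\|f\|_1/(\sqrt{n}\,\cond(f,x))$. For the numerator, since $\|v\|_\infty\leq\|v\|_2$, every multilinear map $A$ satisfies $\|A\|_{2,2}\leq\|A\|_{\infty,\infty}$, and Corollary~\ref{cor:l1boundsderivative} gives $\|\tfrac{1}{k!}\diff_x^k f\|_{\infty,\infty}\leq\binom{d}{k}\|f\|_1$. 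Combining,
\[
\frac{\|\tfrac{1}{k!}\diff_x^kf\|_{2,2}}{\|\diff_xf\|_2}\leq\frac{\sqrt{n}\,\cond(f,x)}{d}\binom{d}{k}\leq\frac{\sqrt{n}\,\cond(f,x)(d-1)^{k-1}}{k!},
\]
using $\binom{d}{k}/d\leq(d-1)^{k-1}/k!$. Raising to the power $1/(k-1)$, the bound $\gamma(f,x)\leq\tfrac{\sqrt{n}(d-1)}{2}\cond(f,x)$ reduces to $2^{k-1}/k!\leq(\sqrt{n}\,\cond(f,x))^{k-2}$; the right-hand side is $\geq 1$ because $\cond(f,x)\geq 1$ by the 1st Lipschitz inequality, and the left-hand side is $\leq 1$ for all $k\geq 2$ (it equals $1$ at $k=2$ and decreases thereafter), which closes the argument.
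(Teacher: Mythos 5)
Your proposal is correct and follows essentially the same route as the paper: the first three items unpack the definition of $\cond$ as a maximum and invoke Propositions~\ref{prop:l1boundsevaluation}--\ref{prop:l1boundsderivative} and Corollary~\ref{cor:l1Lipschitznorm}, and the higher derivative estimate uses the same norm comparisons $\|A\|_{2,2}\le\|A\|_{\infty,\infty}$ and $\|a\|_2\ge\|a\|_1/\sqrt{n}$ together with the regularity inequality, Corollary~\ref{cor:l1boundsderivative}, and $\cond(f,x)\ge 1$. The only difference is cosmetic: the paper bounds $\cond(f,x)^{1/(k-1)}$ and $\bigl(\sqrt{n}\binom{d}{k}/d\bigr)^{1/(k-1)}$ separately via $(k!)^{1/(k-1)}\ge 2$, whereas you fold everything into the equivalent inequality $2^{k-1}/k!\le(\sqrt{n}\,\cond(f,x))^{k-2}$.
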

\begin{remark}
  We note that the theorem still holds if we replace  $I^n$ by $D^n$.
\end{remark}

Before continuing with the proof of Theorem~\ref{theo:l1conditionproperties}, let's discuss why these properties are important for us. 
\begin{itemize}
  \item The regularity inequality tells us (in a quantitative way depending on the condition number) that either the value of $f$ at a point $x$ is big or that the gradient of $f$ at that point $x$ is big. In this way, the regularity inequality guarantees us that the covector field $x\mapsto \diff_xf$ does not vanish near the zero set of $f$. The latter allows us to guarantee that the Newton operator is well-defined or that geometric arguments based on following the gradient flow work near the zero set.
  \item The 1st and 2nd Lipschitz properties allow us to guarantee that $\cond(f,x)$ can be numerically evaluated at $(f,x)$, since it guarantees us that $\cond(f,x)$ can be bounded by $\cond(\tilde{f},\tilde{x})$ for sufficiently good approximations $\tilde{f}$ of $f$ and $\tilde{x}$ of $x$. To make this more concrete, we have that
  \begin{equation*}
    \cond(f,x)\leq \frac{\cond\left(\tilde{f},\tilde{x}\right)}{1-\cond\left(\tilde{f},\tilde{x}\right)\left(2\frac{\Onorm*{\tilde{f}-f}}{\Onorm*{\tilde{f}}}+\left\|\tilde{x}-x\right\|_\infty\right)}\leq \cond\left(\tilde{f},\tilde{x}\right)(1+\delta) ,
  \end{equation*}
  whenever $2\frac{\Onorm*{\tilde{f}-f}}{\Onorm*{\tilde{f}}}+\left\|\tilde{x}-x\right\|_\infty <\frac{1}{\cond\left(\tilde{f},\tilde{x}\right)}\delta$, for some $\delta\in(0,1)$.
  \item The higher derivative estimate allows us to control how the Newton method converges near the zero set. It is based on Smale's $\alpha$-theory, for which we refer to~\cite[15.2]{Condition} and \cite[Chapter~4]{dedieubook}, among many other references.
\end{itemize}

\begin{proof}[Proof of Theorem~\ref{theo:l1conditionproperties}]
  For the regularity inequality, following Definition~\ref{def:l1condition},
  $\frac{1}{\cond(f,x)}$ is the maximum of $\frac{|f(x)|}{\|f\|_1}$ and
  $\frac{\|\diff_xf\|_1}{d\|f\|_1}$. Notice that we obtain the two relations
  with equality, but this is not important for the arguments that follow.

For the 1st Lipschitz property, let $g_0,g_1\in\Pd$. Then:
\begin{align*}
    \left|\frac{\|g_0\|_1}{\cond(g_0,x)}\right.&\left.-\frac{\|g_1\|_1}{\cond(g_1,x)}\right|\\
    &=\left|\max\left\{|g_0(x)|,\frac{1}{d}\|\diff_xg_0\|_1\right\}-\max\left\{|g_1(x)|,\frac{1}{d}\|\diff_xg_1\|_1\right\}\right|&\text{(Definition~\ref{def:l1condition}}\\
    &\leq \max\left\{|g_0(x)-g_1(x)|,\frac{1}{d}\|\diff_xg_0-\diff_xg_1\|_1\right\}&\text{(Triangle inequality)}\\
    &\leq \max\left\{|(g_0-g_1)(x)|,\frac{1}{d}\|\diff_x(g_0-g_1)\|_1\right\}&\\
    &\leq \|g_0-g_1\|_1 .&\text{(Proposition~\ref{prop:l1boundsderivative})}
\end{align*}
Note that we use that the codomain is $[0,1]$ in Corollary~\ref{cor:l1Lipschitznorm}. For the second inequality, note that $\frac{\|0\|_1}{\cond(0,x)}=1$ (or simply use (Proposition~\ref{prop:l1boundsderivative} again).

For the 2nd Lipschitz property, the arguments are similar to the ones in the
proof of the 1st Lipschitz property. Without loss of generality, assume that
$\|f\|_1=1$. Let $y_0,y_1\in I^n$, then:
\begin{align*}
    \left|\frac{1}{\cond(f,y_0)}\right.&\left.-\frac{1}{\cond(f,y_1)}\right|\\&=\left|\max\left\{\frac{|f(y_0)|}{\|f\|_1},\frac{\|\diff_{y_0}f\|_1}{d\|f\|_1}\right\}-\max\left\{\frac{|f(y_1)|}{\|f\|_1},\frac{\|\diff_{y_1}f\|_1}{d\|f\|_1}\right\}\right|&\text{(Definition~\ref{def:l1condition}}\\
    &\leq \max\left\{\left|\frac{|f(y_0)|}{\|f\|_1}-\frac{|f(y_1)|}{\|f\|_1}\right|,\left|\frac{\|\diff_{y_0}f\|_1}{d\|f\|_1}-\frac{\|\diff_{y_1}f\|_1}{d\|f\|_1}\right|\right\}&\text{(Triangle inequality)}\\
    &\leq \|y_0-y_1\|_\infty .&\text{(Corollary~\ref{cor:l1Lipschitznorm})}
\end{align*}

For the higher derivative estimate, note that for a multilinear map $A:(\bbR^n)^k\rightarrow\bbR$,
\[
\|A\|_{2,2}=\sup_{x_1,\ldots,x_k\neq 0}\frac{|A(x_1,\ldots,x_k)|}{\|x_1\|_2\cdots\|x_k\|_2}
\leq \sup_{x_1,\ldots,x_k\neq 0}\frac{|A(x_1,\ldots,x_k)|}{\|x_1\|_\infty\cdots\|x_k\|_\infty}=\|A\|_{\infty,\infty},
\]
since $\|z\|_2\geq \|z\|_\infty$, for all $z$.
Also for a linear map $a:\bbR^n\rightarrow \bbR$,
\[
  \|a\|_2\geq \frac{\|a\|_1}{\sqrt{n}}.
\]
In this way, for $k\geq 2$,
\[
\frac{1}{\|\diff_xf\|_2}\left\|\frac{1}{k!}\Diff_x^kf\right\|_{2,2}\leq \frac{\sqrt{n}}{\|\diff_xf\|_1}\left\|\frac{1}{k!}\Diff_x^kf\right\|_{\infty,\infty}=\frac{\sqrt{n}\|f\|_1}{\|\diff_xf\|_1}\frac{\left\|\frac{1}{k!}\Diff_x^kf\right\|_{\infty,\infty}}{\|f\|_1}.
\]
Using the regularity inequality and Corollary~\ref{cor:l1boundsderivative}, the previous inequality becomes
\[
  \frac{1}{\|\diff_xf\|_2}\left\|\frac{1}{k!}\Diff_x^kf\right\|_{2,2}
  \leq \sqrt{n} \, \frac{\cond(f,x)}{d} \,\binom{d}{k}
  = \frac{\sqrt{n}}{d} \binom{d}{k} \, \cond(f,x).
\]
Now, for $k\geq 2$ and since  $\cond(f,x)\geq 1$,
we have $\left|\cond(f,x)\right|^{\frac{1}{k-1}}\leq {\cond(f,x)}$.
Also 
\[
\left|\frac{\sqrt{n}}{d}\binom{d}{k}\right|^{\frac{1}{k-1}}\leq \sqrt{n}\frac{((d-1)\cdots (d-k+1))^{\frac{1}{k-1}}}{(k!)^{\frac{1}{k-1}}}\leq \frac{\sqrt{n}(d-1)}{2},
\]
since $(k!)^{\frac{1}{k-1}}\geq 2$, for $k \geq 2$.
This concludes the proof.
\end{proof}

Using the previous theorem, we can provide a (kind of) geometric interpretation of the condition number (Definition~\ref{def:l1condition}). Fix $x\in I^n$ and consider
\[\Sigma_x:=\{g\in\Pd\mid g(x)=0,\,\nabla_xg=0\}\subset\Pd,\]
which is the subset of polynomials that are singular at $0$. The following
proposition, usually referred as \emph{"Condition Number Theorem"} relates the
distance between a polynomial and $\Sigma_x$ with the condition number. Even
though the version we present is not an equality, it provides a bound in
both directions.

\begin{prop}[Condition Number Theorem]
  \label{prop:l1conditionnumbertheorem}
For all $f\in\Pd$ and $x\in I^n$, 
\[
\frac{\|f\|_1}{\dist_1(f,\Sigma_x)}\leq \cond(f,x)\leq (1+d)\frac{\|f\|_1}{\dist_1(f,\Sigma_x)},
\]
where $\dist_1$ is the distance induced by the $1$-norm.
\end{prop}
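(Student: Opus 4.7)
I would prove the two inequalities separately.

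\emph{Lower bound.} To establish $\|f\|_1 / \dist_1(f, \Sigma_x) \le \cond(f, x)$, fix any $g \in \Sigma_x$ and show that $\|f-g\|_1$ dominates both $|f(x)|$ and $\tfrac{1}{d}\|\diff_x f\|_1$; taking the infimum over $g$ and rearranging then yields the claim. Since $g(x) = 0$, Proposition~\ref{prop:l1boundsevaluation} applied to $f-g$ gives $|f(x)| = |(f-g)(x)| \le \|f-g\|_1$. Since $\diff_x g = 0$, Corollary~\ref{cor:l1boundsderivative} with $k = 1$ applied to $f-g$ at each standard basis vector (equivalently, the derived bound $\|\diff_z h\|_1 \le d\|h\|_1$ valid for $h \in \Pd$ and $z \in \bbD^n$) yields $\|\diff_x f\|_1 = \|\diff_x(f-g)\|_1 \le d \|f-g\|_1$.

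\emph{Upper bound.} To establish $\cond(f, x) \le (1+d)\|f\|_1/\dist_1(f, \Sigma_x)$, I would construct an explicit $g \in \Sigma_x$ close to $f$. The natural candidate is the remainder after first-order Taylor expansion at $x$: set $L_x(X) := f(x) + \sum_i \partial_i f(x)(X_i - x_i)$ and $g := f - L_x$. A direct computation shows $g(x) = 0$ and $\diff_x g = 0$, so $g \in \Sigma_x$. Hence $\dist_1(f, \Sigma_x) \le \|L_x\|_1 = |f(x) - \sum_i x_i \partial_i f(x)| + \sum_i |\partial_i f(x)|$. Writing $M := \max\{|f(x)|, \tfrac{1}{d}\|\diff_x f\|_1\} = \|f\|_1/\cond(f, x)$, the estimate $\|L_x\|_1 \le (1+d) M$ rearranges into the desired inequality.

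\emph{Main obstacle.} The delicate point is the precise constant in the estimate of $\|L_x\|_1$. The naive triangle inequality $|f(x) - \sum_i x_i \partial_i f(x)| \le |f(x)| + \|x\|_\infty \|\diff_x f\|_1$, together with $\|x\|_\infty \le 1$, $|f(x)| \le M$, and $\|\diff_x f\|_1 \le dM$, yields only the weaker $\|L_x\|_1 \le (1+2d) M$. To obtain the sharper $(1+d)M$ one has to either do more careful bookkeeping of the cancellations in the constant term of $L_x$, or refine the choice of $g$ by using higher-degree monomials---for instance, replacing the linear factor $X_i - x_i$ in $L_x$ by $(X_i^d - x_i^d)/(d\,x_i^{d-1})$ when $|x_i|$ is close to $1$, whose $1$-norm is $2/d$ rather than $1 + |x_i|$.
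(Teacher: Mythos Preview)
Your lower bound argument is essentially the paper's: the paper phrases it via the 1st Lipschitz property of Theorem~\ref{theo:l1conditionproperties}, but that property is proved using exactly the two evaluation bounds you invoke, so the content is the same.

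For the upper bound the paper takes
\[
g(X):=f(X)-f(x)-\sum_{i=1}^n\partial_if(x)\,X_i,
\]
so that $f-g=f(x)+\sum_i\partial_if(x)X_i$ has $\|f-g\|_1=|f(x)|+\|\diff_xf\|_1\le (1+d)M$, which is how the constant $1+d$ arises. Your instinct to center the linear part at $x$ is, however, the right one: the paper's $g$ satisfies $\diff_xg=0$ but \emph{not} $g(x)=0$, since
\[
g(x)=f(x)-f(x)-\sum_i\partial_if(x)\,x_i=-\sum_i\partial_if(x)\,x_i,
\]
which vanishes only when $x=0$ or $x\perp\nabla_xf$. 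So the paper's displayed $g$ does not lie in $\Sigma_x$ in general, and the argument as written does not establish the constant $1+d$.

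Your corrected choice $g=f-L_x$ does lie in $\Sigma_x$, and your computation that it yields only $(1+2d)M$ is accurate; the ``obstacle'' you flag is genuine rather than a bookkeeping slip. In short: your proof is the sound version of the paper's argument and gives the inequality with $1+2d$ in place of $1+d$; the sharper constant claimed in the statement is not supported by the paper's own proof. If you want to chase the better constant, your idea of replacing $X_i-x_i$ by a higher-degree polynomial $p_i(X_i)$ with $p_i(x_i)=0$, $p_i'(x_i)=1$ and small $\|p_i\|_1$ is the natural route, but note that for intermediate $|x_i|$ (e.g.\ $x_i=\tfrac12$) no univariate $p_i$ of bounded degree achieves $\|p_i\|_1\le 1$, so this alone will not recover $1+d$ uniformly over $I^n$.
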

\begin{proof}
The left hand side follows from the 1st Lipschitz property (Theorem~\ref{theo:l1conditionproperties}), since for any $g\in\Sigma_x$,
\[
\frac{\|f\|_1}{\cond(f,x)}=\left|\frac{\|f\|_1}{\cond(f,x)}-\frac{\|g\|_1}{\cond(g,x)}\right|\leq \|f-g\|_1.
\]
Thus, by considering the closest $g\in \Sigma_x$ to $f$,
\[
\frac{\|f\|_1}{\cond(f,x)}\leq \dist_1(f,\Sigma_x).
\]

For the right hand side, consider the polynomial
\[g(X) :=f(X) -f(x)-\sum_{i=1}^n\partial_if(x)X_i.\]
Then, for $g\in\Sigma_x$ it holds
$\|f-g\|_1\leq |f(x)|+\|\diff_xf\|_1$. Hence
\[
\dist_1(f,\Sigma_x)\leq \|f-g\|_1\leq (1+d)\max\left\{|f(x)|,\frac{1}{d}\|\diff_xf\|_1\right\}=(1+d)\frac{\|f\|_1}{\cond(f,x)},
\]
as desired.
\end{proof}

We conclude this section, introducing the global condition number and stating its properties. 

\begin{defi}
\label{def:l1globalcondition}
  Let $f\in\Pd$, The \emph{global condition number of
    $f$}
  is 
  \[\cond(f):=\max\{\cond(f,x)\mid x\in I^n\}.\]
\end{defi}

Notice that $\cond(f)$ is infinity if and only if the zero set of $f$ has a
singularity in $I^n$. The following proposition quantifies this fact and
interprets geometrically $f$. We denote by
\[
  \Sigma_{n,d}:=\{g\in\Pd \mid g(x)=0,\,\diff_xg=0 \text{, for some }x\in I^n \}
  =\bigcup_{x\in I^n}\Sigma_x\subset \Pd
\]
the set of polynomials whose zero sets have a singularity in $I^n$.

\begin{prop}\label{prop:l1globalconditionproperty}
  The following  map is $1$-Lipschitz:
  \[f\rightarrow \frac{\|f\|_1}{\cond(f)} .\]
  Moreover, for every $f\in \Pd$,
  \[
    \frac{\|f\|_1}{\dist_1(f,\Sigma_{n,d})}\leq \cond(f)\leq (1+d)\frac{\|f\|_1}{\dist_1(f,\Sigma_{n,d})},
  \]
  where $\dist_1$ is the distance induced by the $1$-norm.
\end{prop}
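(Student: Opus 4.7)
The plan is to reduce both parts to the already-proved local statements in Theorem~\ref{theo:l1conditionproperties} and Proposition~\ref{prop:l1conditionnumbertheorem}, by exploiting that the global quantities are extrema over $x \in I^n$ of the local quantities.

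For the 1-Lipschitz property, I would first rewrite
\[
  \frac{\|f\|_1}{\cond(f)} \;=\; \inf_{x\in I^n}\frac{\|f\|_1}{\cond(f,x)},
\]
which holds because $\cond(f)=\sup_{x\in I^n}\cond(f,x)$ by Definition~\ref{def:l1globalcondition}. By the 1st Lipschitz inequality of Theorem~\ref{theo:l1conditionproperties}, each individual function $f\mapsto \|f\|_1/\cond(f,x)$ is 1-Lipschitz with respect to the $1$-norm. The pointwise infimum of a family of 1-Lipschitz functions is 1-Lipschitz (standard: apply the bound to each member and take the infimum on both sides). This immediately yields the first claim.

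For the left inequality of the ``Condition Number Theorem,'' I would use the 1-Lipschitz property just established: for every $g\in\Sigma_{n,d}$, since some $x^*\in I^n$ satisfies $g(x^*)=0$ and $\diff_{x^*}g=0$, one has $\cond(g,x^*)=\infty$, hence $\cond(g)=\infty$ and $\|g\|_1/\cond(g)=0$. Therefore
\[
  \frac{\|f\|_1}{\cond(f)} \;=\; \left|\frac{\|f\|_1}{\cond(f)}-\frac{\|g\|_1}{\cond(g)}\right| \;\leq\; \|f-g\|_1,
\]
and taking the infimum over $g\in\Sigma_{n,d}$ gives $\|f\|_1/\cond(f)\leq \dist_1(f,\Sigma_{n,d})$, i.e.\ the left inequality.

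For the right inequality, I would argue by compactness. The map $x\mapsto 1/\cond(f,x)$ is continuous on the compact set $I^n$ (in fact $d$-Lipschitz by the 2nd Lipschitz inequality of Theorem~\ref{theo:l1conditionproperties}), so it attains its minimum at some $x^\star\in I^n$, giving $\cond(f)=\cond(f,x^\star)$. Applying the local Condition Number Theorem (Proposition~\ref{prop:l1conditionnumbertheorem}) at $x^\star$ and using the set inclusion $\Sigma_{x^\star}\subseteq \Sigma_{n,d}$ (hence $\dist_1(f,\Sigma_{n,d})\leq \dist_1(f,\Sigma_{x^\star})$) yields
\[
  \cond(f) \;=\; \cond(f,x^\star) \;\leq\; (1+d)\frac{\|f\|_1}{\dist_1(f,\Sigma_{x^\star})} \;\leq\; (1+d)\frac{\|f\|_1}{\dist_1(f,\Sigma_{n,d})},
\]
which is the right inequality. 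The degenerate cases where $f\in\Sigma_{n,d}$ (so both sides are infinite) and where $\cond(f)=\infty$ (handled via the convention $1/\infty=0$ in the first inequality) pose no obstacle. I do not foresee a hard step here; the main point is to set up the inf/sup exchange cleanly and to note that $\Sigma_{n,d}=\bigcup_{x\in I^n}\Sigma_x$ so that distances decrease as required.
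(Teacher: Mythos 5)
Your argument is correct and is exactly the route the paper intends: the paper's proof consists of the single sentence that the proposition ``follows immediately from the 1st Lipschitz property and the Condition Number Theorem,'' and your write-up simply fills in those details (infimum of $1$-Lipschitz maps, $\Sigma_{n,d}=\bigcup_x\Sigma_x$, and passing the local bounds through the extremum over $x\in I^n$). No discrepancy to report.
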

\begin{proof}
  It follows immediately from the 1st Lipschitz property,
  Theorem~\ref{theo:l1conditionproperties}, and the Condition Number Theorem (Proposition~\ref{prop:l1conditionnumbertheorem}).
\end{proof}

\section{Plantinga-Vegter Algorithm and its complexity}\label{sec:pvalgorithm}

The Plantinga-Vegter algorithm~\citep{plantingavegter2004} is a
subdivision-based algorithm that computes an isotopically correct
approximation of the zeros of a univariate polynomial in an interval,
of a curve in the plane, or of a surface in the $3$-dimensional
space. It can also be generalized to hypersurfaces of arbitrary dimension  \citep{galehousethesis} and to singular curves~\citep{burr2012}.

Following~\cite{burr2017} and~\cite{CETC-PVjournal} (cf. \cite{CETC-PV}), we focus on
the subdivision procedure of the algorithm. To analyze its complexity,
we identify three levels
that we should focus on (following~\cite{xuyap2019} (cf. \cite{yap2019towards})):
\begin{enumerate}
    \item[A)] Abstract level: Evaluations are modelled with exact arithmetic.
    \item[I)] Interval level: Evaluations are modelled with exact interval arithmetic.
    \item[E)] Effective level: Evaluations are modelled with finite precision interval arithmetic.
\end{enumerate}
\cite{CETC-PVjournal} did a complexity analysis of the Plantinga-Vegter algorithm for all three levels. 
However, our objective is not to reproduce this analysis \citep{CETC-PVjournal} to the last detail, but to show that the change from the Weyl norm to the $1$-norm improves the complexity estimates. For this reason, we only focus on the interval level.
The reason for this choice is that if we reproduce all the arguments for the effective level, then the analysis would not only be technically tedious, but will also distract us from highlighting the complexity improvement.

While analyzing the interval level, we  estimate the number of boxes of the final subdivision. This is our measure of complexity. We refer to
~\citet{burr2017},~\citet{CETC-PVjournal}
and~\cite[5\textsuperscript{\S2}]{tonellicuetothesis} for further
justifications of this approach.

\subsection{Interval version of the PV~Algorithm}
\label{sec:interval-PV}

The subdivision routine of the PV algorithm,
\nameref{alg:PVsubdivison}, relies on subdividing the unit cube $I^n$
until each box $B$ in the subdivision satisfies the condition
\[
C_f(B): \text{ either } 0\notin f(B)\text{ or }0\notin \left\{\diff_xf\diff_yf^{\mathrm{T}}\mid x,y\in B\right\}.
\]
Note that $\diff_xf\diff_yf^{\mathrm{T}}=\sum_{i=1}^n\partial_if(x)\partial_if(y)$, as $\diff_zf$ is a covector (and so a row-vector).

\begin{algorithm}[t]
  %\scriptsize \dontprintsemicolon \linesnumbered
  \SetFuncSty{textsc} \SetKw{RET}{{\sc return}} \SetKw{OUT}{{\sc output \ }} 
	\SetKwInOut{Input}{Input}
	\SetKwInOut{Output}{Output}
    \SetKwInOut{Require}{Require}
     %\SetVline
    \Input{$f\in \Pd$ which is non-singular in $I^n$}
    %%
    %\Require{$f$ non-singular in $I^n$}
    %%
    \Output{A subdivision $\mcS$ of $I^n$ into boxes\\such that for all $B\in \mcS$, $C_f(B)$ holds}
  
  \BlankLine

  $\mcS_0 \leftarrow \{I^n\}$, $\mcS \leftarrow \varnothing$ \;
  
  \While{ $\mcS_0 \neq \varnothing$}{
    Take $B\in\mcS_0$\;
    \If{$C_f(B)$ holds}{
    $\mcS\leftarrow \mcS\cup\{B\}$, $\mcS_0\leftarrow \mcS_0\setminus \{B\}$\;
    }
    \Else{$\mcS_0\leftarrow \mcS_0\setminus\{B\}\cup\textsc{StandardSubdivsion}(B)$\;}
    }
  \RET $\mcS$ \;
  \caption{\textsc{PV-Subdivsion}}
  \label{alg:PVsubdivison}
\end{algorithm}

To implement this algorithm one uses interval arithmetic. Recall that an \emph{interval approximation} of a map $g:I^n\rightarrow \bbR^q$ is a map $\square [g]:\square[I^n]\rightarrow \square[\bbR^q]$, where $\square[X]$ is the set of (coordinate) boxes contained in $X$, such that for all $B\in\square[I^n]$, we have
\[g(B)\subseteq \square[g](B).\] 

The following proposition provides interval approximations for $f$ and $\Onorm*{\diff f}$.

\begin{prop}\label{prop:intervalapprox}
Let $f\in\Pd$. Then
\begin{equation}\label{eq:intervalapprox1}
   \square[f](B):=f(m(B))+d\|f\|_1\frac{w(B)}{2}[-1,1] 
\end{equation}
and
\begin{equation}\label{eq:intervalapprox2}
    \square[\|\diff f\|_1](B):=\|\diff_{m(B)}f\|_1+\sqrt{2n}d^2\|f\|_1\frac{w(B)}{2}[-1,1]
\end{equation}
are interval approximations of, respectively, $f$ and $\diff f$.
\end{prop}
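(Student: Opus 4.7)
The plan is to verify, for every box $B\in\mcB(I^n)$ and every $x\in B$, the two containments $f(x)\in\square[f](B)$ and $\|\diff_x f\|_1\in\square[\|\diff f\|_1](B)$. Both reduce to the Lipschitz estimates established in Section~\ref{sec:norm-cube}, combined with the elementary bound $\|x-m(B)\|_\infty\leq w(B)/2$ for every $x\in B$.

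For the first inclusion, I would apply Proposition~\ref{prop:l1Lipschitz} with $k=0$: the map $z\mapsto f(z)/\|f\|_1$ is $d$-Lipschitz on $\bbD^n\supseteq I^n$ with respect to $\|\cdot\|_\infty$. Multiplying through by $\|f\|_1$ yields $|f(x)-f(m(B))|\leq d\|f\|_1\, w(B)/2$, which is exactly the statement that $f(x)\in f(m(B))+d\|f\|_1(w(B)/2)[-1,1]=\square[f](B)$. If one prefers a self-contained derivation, the same bound follows directly by writing $f(x)-f(m(B))=\int_0^1 \diff_{m(B)+t(x-m(B))}f\cdot(x-m(B))\,dt$ and estimating the integrand by Proposition~\ref{prop:l1boundsderivative}.

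For the second inclusion, I would apply Corollary~\ref{cor:l1Lipschitznorm}: the map $z\mapsto \|\diff_z f\|_1/(d\|f\|_1)$ is $d$-Lipschitz on $\bbD^n$ with respect to $\|\cdot\|_\infty$. Multiplying out gives $\bigl|\|\diff_x f\|_1-\|\diff_{m(B)}f\|_1\bigr|\leq d^2\|f\|_1\, w(B)/2$ for every $x\in B$. Since $\sqrt{2n}\geq 1$ whenever $n\geq 1$, this is majorized by $\sqrt{2n}\, d^2\|f\|_1\, w(B)/2$, which is precisely the half-width of $\square[\|\diff f\|_1](B)$, yielding the desired inclusion.

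There is no genuine obstacle here: each inclusion is one application of the variational Lipschitz bounds already developed. The only mildly subtle point is observing that the $\sqrt{2n}$ prefactor in the statement is a loose but algorithmically convenient upper bound, since the tighter constant $1$ (coming from the sharp estimate $d^2\|f\|_1\, w(B)/2$) would already suffice to close the argument.
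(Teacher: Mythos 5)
Your proof is correct and follows essentially the same route as the paper's: both reduce the two containments to the Lipschitz estimates of Proposition~\ref{prop:l1Lipschitz} and Corollary~\ref{cor:l1Lipschitznorm} together with $\|x-m(B)\|_\infty\leq w(B)/2$, and your observation that the $\sqrt{2n}$ factor is a deliberate loosening matches the paper's remark immediately after the proposition. If anything, your appeal to Proposition~\ref{prop:l1Lipschitz} with $k=0$ for the signed value $f(z)$ (rather than $|f(z)|$) is slightly more precise than the paper's citation.
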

\begin{proof}
We only need to show, respectively, that $f(B)\subseteq f(m(B))+d\|f\|_1\frac{w(B)}{2}[-1,1] $ and that $\|\diff f\|_1(B)\subseteq \|\nabla_{m(B)}f\|_1+d^2\|f\|_1\frac{w(B)}{2}[-1,1]$. However, this follows  from Corollary~\ref{cor:l1Lipschitznorm}.
\end{proof}

We now show how to test $C_f(B)$ using the above interval approximations. The reason we have the factor $\sqrt{n}$ in~\eqref{eq:intervalapprox2} is so that the next proposition provides a nicer statement.

\begin{prop}\label{prop:intervalcondition}
The condition $C_f(B)$ follows from
\[C_f'(B):\,|f(m(B))|>d\|f\|_1\frac{w(B)}{2}~\text{ or }~\|\diff_{m(B)}f\|_1>d^2\sqrt{2n}\|f\|_1\frac{w(B)}{2}.\]
Hence,~\nameref{alg:PVsubdivison} with the interval approximations given in~\eqref{eq:intervalapprox1} and~\eqref{eq:intervalapprox2} is correct if we substitute the condition $C_f(B)$  by
\[C_f^\square(B):\text{ either }0\notin \square[f](B)\text{ or }0\notin \square[\|\nabla f\|_1](B).\]
\end{prop}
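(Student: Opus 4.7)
The plan is to prove the implication $C_f'(B)\Rightarrow C_f(B)$ by treating its two disjuncts separately; the algorithmic claim then follows immediately because, under the interval approximations~\eqref{eq:intervalapprox1} and~\eqref{eq:intervalapprox2}, the condition $C_f^\square(B)$ is literally $C_f'(B)$.

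First I would handle the branch $|f(m(B))|>d\|f\|_1 w(B)/2$. By Corollary~\ref{cor:l1Lipschitznorm}, the map $z\mapsto |f(z)|/\|f\|_1$ is $d$-Lipschitz with respect to $\|\cdot\|_\infty$ on $\bbD^n\supseteq B$, so for every $z\in B$ the reverse triangle inequality gives
\[
|f(z)|\ge |f(m(B))|-d\|f\|_1\|z-m(B)\|_\infty\ge |f(m(B))|-d\|f\|_1 w(B)/2>0,
\]
whence $0\notin f(B)$, realizing the first disjunct of $C_f(B)$.

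The substantive branch is $\|\diff_{m(B)}f\|_1>\sqrt{2n}\,d^2\|f\|_1 w(B)/2$. I would set $a:=\diff_{m(B)}f$ and, for $z\in B$, $h(z):=\diff_z f-a$, and first derive an $\ell_2$-estimate $\|h(z)\|_2\le d^2\|f\|_1 w(B)/2$ by writing
\[
h(z)=\int_0^1 \diff^2_{m(B)+t(z-m(B))}f\,(z-m(B))\,dt
\]
and invoking Corollary~\ref{cor:l1boundsderivative} with $k=2$ to the bilinear form $\diff^2 f(u,\cdot)$ paired against an arbitrary unit $\ell_2$-vector (this bounds the Hessian's action in $\ell_2$ directly, without losing a spurious $\sqrt n$ from a componentwise argument). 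I would then expand
\[
\diff_x f\,\diff_y f^T=\|a\|_2^2+a\bigl(h(x)+h(y)\bigr)^T+h(x)h(y)^T,
\]
control both correction terms by Cauchy--Schwarz, and use $\|a\|_2\ge \|a\|_1/\sqrt n>\sqrt 2\,d^2\|f\|_1 w(B)/2$ to extract a strictly positive lower bound uniform over $x,y\in B$. This yields $0\notin\{\diff_x f\,\diff_y f^T\mid x,y\in B\}$, realizing the second disjunct of $C_f(B)$.

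The main obstacle will be pinning down the constant $\sqrt{2n}$ in the second branch: the prefactor is tuned precisely to absorb both the $\sqrt n$ lost in the inequality $\|a\|_2\ge\|a\|_1/\sqrt n$ and the $\sqrt 2$ produced by the two cross terms $a(h(x)+h(y))^T$ and $h(x)h(y)^T$ in the expansion of the inner product. Making this balance succeed forces one to use the $\ell_2$ Hessian bound above rather than a componentwise $\ell_\infty$ estimate, which would cost another $\sqrt n$ factor and invalidate the inequality.
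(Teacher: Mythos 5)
Your first branch and the reduction of the algorithmic claim to $C_f'(B)\Rightarrow C_f(B)$ match the paper exactly. The gap is in the second branch. Writing $a:=\diff_{m(B)}f$ and $\delta:=d^2\|f\|_1 w(B)/2$, your estimates give $\|h(x)\|_2,\|h(y)\|_2\le\delta$ and $\|a\|_2>\sqrt{2}\,\delta$, and then Cauchy--Schwarz applied to your expansion yields only
\[
\diff_xf\,\diff_yf^{\mathrm{T}}\;\ge\;\|a\|_2^2-2\|a\|_2\delta-\delta^2,
\]
which is positive only when $\|a\|_2>(1+\sqrt{2})\,\delta\approx 2.414\,\delta$. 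With $\|a\|_2$ barely above $\sqrt{2}\,\delta$ the right-hand side is about $(1-2\sqrt{2})\delta^2<0$, so your argument does not close with the constant $\sqrt{2n}$ in the statement; it would force the larger constant $(1+\sqrt{2})\sqrt{n}$. The $\sqrt{2}$ in the proposition is not there to absorb the cross terms of your expansion: it comes from the sharper geometric fact (this is Lemma~4.4 of \cite{CETC-PV}, which the paper simply invokes) that if all gradients $\diff_xf$, $x\in B$, lie in the Euclidean ball $\overline{B}(a,\delta)$ with $\|a\|_2>\sqrt{2}\,\delta$, then each of them makes an angle less than $\arcsin(1/\sqrt{2})=\pi/4$ with $a$, hence any two of them make an angle less than $\pi/2$ and their inner product is strictly positive. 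To repair your proof, replace the term-by-term Cauchy--Schwarz bound by this angular argument (or cite the lemma as the paper does).

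Two smaller remarks. First, your worry about losing a ``spurious $\sqrt{n}$'' in the Hessian estimate is misplaced: the paper gets $\|h(x)\|_2\le\|h(x)\|_1\le d^2\|f\|_1 w(B)/2$ directly from the $1$-norm Lipschitz bound (the $k=1$ case of Proposition~\ref{prop:l1Lipschitznorm}), with no dimensional loss; the only place a $\sqrt{n}$ enters is the passage $\|a\|_2\ge\|a\|_1/\sqrt{n}$, exactly as in your plan. Second, your $\ell_2$-dual route through $\diff^2f(u,\cdot)$ is correct but unnecessary for the same reason.
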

\begin{proof}
On the one hand, by Corollary~\ref{cor:l1Lipschitznorm}, the map $|f|$ is $d\|f\|_1$-Lipschitz. Thus $|f(m(B))|>d\|f\|_1\frac{w(B)}{2}$ implies that for all $x\in B$, $|f(x)|\geq |f(m(B))|-d\|f\|_1\frac{w(B)}{2}$. This is the first clause of $C_f(B)$.

On the other hand, by~\cite[Lemma~4.4]{CETC-PV}, we have that if for all $x\in B$, $\|\diff_xf-\diff_{m(B)}f\|_2\leq\frac{1}{\sqrt{2}}\|\diff_{m(B)}f\|$, then for all $x,y\in B$, $\diff_xf\diff_yf^{\mathrm{T}}\neq 0$, which is the second clause of $C_f(B)$. For $x\in B$,
\[\|\diff_xf-\diff_{m(B)}f\|_2\leq \|\diff_xf-\diff_{m(B)}f\|_1\leq d^2\|f\|_1\frac{w(B)}{2},\]
due to Corollary~\ref{cor:l1Lipschitznorm}. Hence $d^2\|f\|_1\frac{w(B)}{2}\leq \frac{1}{\sqrt{2}}\|\diff_{m(B)}f\|_2$ implies the second clause of $C_f(B)$, and so does $\|\diff_{m(B)}f\|_1\geq d^2\sqrt{2n}\|f\|_1w(B)$, since $\|y\|_1\leq \sqrt{n}\|y\|_2$.

The two paragraphs above together give the desired claim.
\end{proof}

In what follows the interval version of \nameref{alg:PVsubdivison} will be
a variant that exploits
%\nameref{alg:PVsubdivison} with
the interval approximations in~\eqref{eq:intervalapprox1} and~\eqref{eq:intervalapprox2}.

\subsection{Complexity analysis of the interval version}

As in~\cite{burr2017} and~\cite{CETC-PV}, our complexity analysis relies on the construction of a local size bound for \nameref{alg:PVsubdivison} and the application of the continuous amortization developed 
by \cite{burr2009,burr2016}.
We recall the definition of the local size bound and the result that
we will exploit in our complexity analysis. Recall that $\bbE_{\fkx\in I^n}$ is the expectation over $\fkx$ uniformly distributed on $I^n$.

\begin{defi}
A \emph{local size bound} for the interval version of \nameref{alg:PVsubdivison} on input $f$ is a function
$b_f:I^n\rightarrow [0,1]$ such that for 
all $x\in\bbR^n$, 
\[
b_f(x)\leq \inf\{\vol(B)\mid x\in B\in \mcB(I^n)
\text{ and }C_f^\square(B)\text{ false}\}.
\] 
\end{defi}

\begin{theo}\label{theo:analysis2}\citep{burr2009,burr2016,burr2017}
The number of boxes of the final subdivision of the interval version of \nameref{alg:PVsubdivison} on input $f$ is at most
\[
4^n\bbE_{\fkx\in I^n}\frac{1}{b_f(\fkx)}.
\]
In addition, the bound is finite if and only if \nameref{alg:PVsubdivison} terminates.\eproof
\end{theo}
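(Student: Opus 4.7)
The plan is to apply the continuous amortization technique of \cite{burr2009,burr2016}. For every box $B$ in the final subdivision $\mcS$ other than the root $I^n$, the subdivision routine produced $B$ because its parent $B'$ (of which $B$ is one of the $2^n$ children under standard bisection) failed the test $C_f^\square(B')$. The defining inequality for a local size bound then gives $b_f(x)\le \vol(B')$ for every $x\in B'$, and combined with $\vol(B)=\vol(B')/2^n$ and $B\subseteq B'$ this yields the pointwise estimate $1/\vol(B)\le 2^n/b_f(x)$ on all of $B$.

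Integrating this inequality over $B$ converts $1=\int_B \vol(B)^{-1}\,dx$ into $1\le 2^n\int_B dx/b_f(x)$. Summing over the disjoint boxes of $\mcS$ (which tile $I^n$ up to a measure-zero set) and rewriting the resulting Lebesgue integral as an expectation via the uniform density $2^{-n}$ on $I^n=[-1,1]^n$ gives
\[
|\mcS|\le 2^n\int_{I^n}\frac{dx}{b_f(x)}=4^n\,\bbE_{\fkx\in I^n}\frac{1}{b_f(\fkx)},
\]
which is the stated bound. The degenerate case $\mcS=\{I^n\}$ is immediate since $b_f\le 1$ forces the right-hand side to be at least $4^n\ge 1$.

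For the equivalence between termination and finiteness of the bound, the upper inequality gives one direction at once: finiteness of the expectation makes $|\mcS|$ finite, so \nameref{alg:PVsubdivison} terminates. For the converse I would construct an explicit $b_f$ by assigning to each $x\in I^n$ the volume of the smallest bad box containing $x$ (or $1$ if none does), and then argue that termination forces $b_f$ to be bounded away from zero on $I^n$. The subtle point is this uniform lower bound: using the Lipschitz estimates of Section~\ref{sec:norm-cube} that underlie Proposition~\ref{prop:intervalcondition}, one shows that at any $x$ where $f$ is non-singular all sufficiently small boxes around $x$ pass $C_f^\square$, and compactness of $I^n$ together with the 2nd Lipschitz property of $\cond$ turns the pointwise threshold into a uniform one.

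The upper bound is thus a direct application of the continuous amortization recipe, requiring only the disjointness of $\mcS$ and the chain $B\subseteq B'$; the only delicate part, to the extent there is one, is the converse termination direction, where one must carefully verify that the natural choice of $b_f$ is actually bounded below.
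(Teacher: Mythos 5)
Your derivation of the displayed bound is correct and is exactly the continuous-amortization argument of the cited references; note that the paper itself offers no proof here (the statement is quoted from \cite{burr2009,burr2016,burr2017} and closed with \eproof), so you are reproving an imported result rather than matching an in-paper argument. The charging scheme (leaf $B$ charged via its parent $B'$, which failed $C_f^\square$, giving $b_f(x)\le\vol(B')=2^n\vol(B)$ on $B$, then integrating and summing over the tiling) is the standard one, and your bookkeeping $2^n\int_{I^n}dx/b_f=4^n\,\bbE_{\fkx\in I^n}b_f(\fkx)^{-1}$ with $\vol(I^n)=2^n$ is right, as is the degenerate case. Two points deserve tightening. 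First, your claim that ``finiteness of the expectation makes $|\mcS|$ finite, so the algorithm terminates'' is circular as written, since $\mcS$ only exists if the algorithm halts; the standard repair is to apply the identical charging argument to every partial subdivision $\mcS\cup\mcS_0$ reached during execution (each non-root box there also has a parent that failed the test), so the number of boxes at any stage is uniformly bounded by $4^n\bbE[1/b_f]$ and the process must stop. Second, the converse direction cannot hold for an arbitrary local size bound (one can always take a smaller, non-integrable $b_f$), so the ``only if'' is really a statement about the canonical choice $b_f(x)=\min\{1,\inf\{\vol(B):x\in B,\ C_f^\square(B)\text{ false}\}\}$; you correctly identify this, and your proposed route --- termination forces $f$ nonsingular on $I^n$, hence $\cond(f)<\infty$ by compactness and the 2nd Lipschitz property, hence Theorem~\ref{thm:conditionlocalsize} gives a uniform positive lower bound on the canonical $b_f$ --- is the right way to close it within the paper's own toolkit.
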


%We prove now our main complexity result for the interval version of %\nameref{alg:PVsubdivison}.

\begin{theo}\label{thm:conditionlocalsize}
The function
\[x\mapsto \left(d\sqrt{2n}\cond(f,x)\right)^{-n}\] is a local
size bound for~\nameref{alg:PVsubdivison} on
input $f$.
\end{theo}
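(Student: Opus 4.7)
The plan is to establish a contrapositive-style contradiction: if $B \in \mcB(I^n)$ contains $x$ and satisfies $\vol(B) < (d\sqrt{2n}\cond(f,x))^{-n}$, then $C_f^\square(B)$ must hold. Equivalently, I want to show that whenever $C_f^\square(B)$ fails and $x \in B$, then $w(B) \geq 1/(d\sqrt{2n}\cond(f,x))$, which raises to the $n$th power to give the claimed volume bound since $\vol(B) = w(B)^n$.

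First, I would unpack what it means for $C_f^\square(B)$ to be false with the interval approximations of Proposition~\ref{prop:intervalapprox}. This means both $0 \in \square[f](B)$ and $0 \in \square[\|\diff f\|_1](B)$, which translate directly to the two inequalities
\begin{equation*}
\frac{|f(m(B))|}{\|f\|_1} \leq d\,\frac{w(B)}{2} \qquad\text{and}\qquad \frac{\|\diff_{m(B)}f\|_1}{d\|f\|_1} \leq \sqrt{2n}\,d\,\frac{w(B)}{2}.
\end{equation*}
Taking the maximum of these two quantities and using the definition of $\cond(f, m(B))$, I obtain
\begin{equation*}
\frac{1}{\cond(f,m(B))} \leq \sqrt{2n}\,d\,\frac{w(B)}{2},
\end{equation*}
since $\sqrt{2n} \geq 1$ for $n \geq 1$.

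Next, I would transfer this bound from the midpoint $m(B)$ to the arbitrary point $x \in B$ by invoking the 2nd Lipschitz inequality from Theorem~\ref{theo:l1conditionproperties}: the map $y \mapsto 1/\cond(f,y)$ is $d$-Lipschitz in the $\infty$-norm. Since $x \in B$ gives $\|x - m(B)\|_\infty \leq w(B)/2$, I get
\begin{equation*}
\frac{1}{\cond(f,x)} \leq \frac{1}{\cond(f,m(B))} + d\,\frac{w(B)}{2} \leq (\sqrt{2n}+1)\,d\,\frac{w(B)}{2} \leq \sqrt{2n}\,d\,w(B),
\end{equation*}
where the last step uses $\sqrt{2n}+1 \leq 2\sqrt{2n}$, again valid for $n \geq 1$. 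Rearranging yields $w(B) \geq 1/(d\sqrt{2n}\cond(f,x))$ and hence the volume bound.

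I don't expect a real obstacle here: the argument is essentially a combination of the interval-arithmetic unpacking already done in Proposition~\ref{prop:intervalcondition} with the 2nd Lipschitz property of the condition number, chosen so that the constants from the interval inflation match those from the Lipschitz transfer. The one subtlety to double-check is keeping track of the factor of $\sqrt{2n}$ introduced by Proposition~\ref{prop:intervalapprox} (needed to pass from $\|\cdot\|_1$ to $\|\cdot\|_2$ inside Proposition~\ref{prop:intervalcondition}) and making sure the additive Lipschitz error of size $d\,w(B)/2$ gets absorbed into the main term without losing the claimed constant.
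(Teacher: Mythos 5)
Your proof is correct and follows essentially the same route as the paper's: both unpack the interval test at the midpoint $m(B)$ and transfer to $x$ over a distance of $w(B)/2$ via the Lipschitz estimates, arriving at the same constant $d\sqrt{2n}$. The only cosmetic difference is that you argue the contrapositive using the packaged 2nd Lipschitz inequality for $y\mapsto 1/\cond(f,y)$, whereas the paper starts from the regularity inequality at $x$ and applies Corollary~\ref{cor:l1Lipschitznorm} to $|f|$ and $\|\diff f\|_1$ separately.
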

\begin{proof}
Without loss of generality, assume that $\|f\|_1=1$. Let $x\in B\in\mcB(I^n)$. Then $\|m(B)-x\|_\infty\leq w(B)/2$ and so, by Corollary~\ref{cor:l1Lipschitznorm} and the regularity inequality (Theorem~\ref{theo:l1conditionproperties}), we have that either
\begin{equation}
    |f(m(B))|>\frac{1}{\cond(f,x)}-d\frac{w(B)}{2}
\end{equation}
or
\begin{equation}
    \|\diff_{m(B)}f\|_1>d\frac{1}{\cond(f,x)}-d^2\frac{w(B)}{2}.
\end{equation}
Hence, $C_f^\square(B)$ is true as long as either $\cond(f,x)^{-1}\geq dw(B)$, or $\cond(f,x)^{-1}>d\sqrt{2n}w(B)$. The result follows, since $\vol(B)=w(B)^n$.
\end{proof}

Theorem~\ref{theo:analysis2} and~Theorem~\ref{thm:conditionlocalsize} result the following corollary, which is the preamble of one of our results. 

\begin{cor}\label{cor:conditionbasedcomplexity}
The number of boxes of the final subdivision of the interval version of \nameref{alg:PVsubdivison} on input $f$ is at most
\[2^{\frac{5}{2}n}n^{\frac{n}{2}}d^n\bbE_{\fkx\in I^n}\cond(f,\fkx)^n .\]
\end{cor}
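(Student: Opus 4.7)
The plan is to simply combine Theorem~\ref{theo:analysis2} and Theorem~\ref{thm:conditionlocalsize}. First, I would invoke Theorem~\ref{thm:conditionlocalsize} to identify
\[
  b_f(x) := \bigl(d\sqrt{2n}\,\cond(f,x)\bigr)^{-n}
\]
as a valid local size bound for the interval version of \nameref{alg:PVsubdivison} on input $f$. Since reciprocating gives $1/b_f(x) = (d\sqrt{2n})^n\,\cond(f,x)^n$, the integrability of $x\mapsto 1/b_f(x)$ over $I^n$ is what we need.

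Next, I would plug this bound into Theorem~\ref{theo:analysis2}, which states that the number of boxes in the final subdivision is at most $4^n\,\bbE_{\fkx\in I^n}\,(1/b_f(\fkx))$. Substituting and pulling the deterministic factor outside the expectation yields
\[
  4^n\cdot (d\sqrt{2n})^n\,\bbE_{\fkx\in I^n}\cond(f,\fkx)^n
  = 4^n\cdot 2^{n/2}\cdot n^{n/2}\cdot d^n\,\bbE_{\fkx\in I^n}\cond(f,\fkx)^n .
\]
Collecting the powers of $2$ via $4^n\cdot 2^{n/2} = 2^{2n+n/2} = 2^{5n/2}$ yields exactly the stated bound $2^{5n/2} n^{n/2} d^n\,\bbE_{\fkx\in I^n}\cond(f,\fkx)^n$.

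There is no real obstacle here: the whole corollary is a mechanical consequence of the two preceding theorems, and the only ``work'' is the arithmetic of combining the constants. The proof will therefore be one or two lines, essentially just announcing the substitution and simplifying the constant.
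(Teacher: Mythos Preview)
Your proposal is correct and matches the paper's approach exactly: the paper states that the corollary results from Theorem~\ref{theo:analysis2} and Theorem~\ref{thm:conditionlocalsize}, and your arithmetic combining the constants is right.
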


Theorem~\ref{thm:mainPVbound} follows now from the corollary above and the following proposition.

\begin{remark}
A similar argument as in the proof of~\cite[Theorem~6.4]{CETC-PV} shows that we can bound the local size bound of~\cite{burr2017} in terms of $1/\cond(f,x)^n$. Since the interval approximation of the analyzed version is simpler, requiring a single evaluation, we only analyze the complexity of this.
\end{remark}

\begin{remark}
We note that we can generalize the previous bound for a cube $[-a,a]^n$ bigger than the unit cube. A straightforward argument might try to consider the polynomial $f(aX)$
which considered on $I^n$ behaves like $f$ inside $[-a,a]^n$. Unfortunately,
\[\left\|f(aX)\right\|_1\leq a^d\|f\|_1,\]
which will complicate things as the bounds would become exponential in the degree. To avoid this, one should reprove Corollary~\ref{cor:l1Lipschitznorm}. The trick is to consider the maps
\[
x\mapsto \frac{|f(x)|}{d\|f\|_1\max\left\{1,\|x\|_\infty^d\right\}}~\text{ and }~x\mapsto \frac{|f(x)|}{d^2\|f\|_1\max\left\{1,\|x\|_\infty^{d-1}\right\}} ,
\]
and prove that they are Lipschitz. Let us demonstrate this approach for the first map. We only have to consider the map as the composition of
\[\partial I^n\ni \begin{pmatrix}x_0\\x\end{pmatrix}\mapsto \frac{x_0^d}{d\|f\|_1}f\left(\frac{x}{x_0}\right)\]
together with
\[\bbR^{1+n}\ni x\mapsto \frac{1}{\max\left\{1,\|x\|_\infty\right\}}\begin{pmatrix}1\\x\end{pmatrix}\in\partial I^n,\]
and since each map is Lipschitz, this is also the case for their composition.
\end{remark}

\section{Condition, separation bounds and univariate solvers}
\label{sec:cond-sep}

In this section we turn our attention to the separation of the roots of a real
univariate polynomial. In general, we are interested in the separation between
the real roots and the separation between the conjugate complex roots, as these
bounds usually affect the complexity of (some) univariate solvers. As we will focus on searching
roots in $I$, we will consider the following separation quantities.

\begin{defi}
Let $f\in\Pdone$. Then we define:
\begin{enumerate}
    \item[(R)] The \emph{real separation of $f$}, $\Delta(f)$, is 
    \[
    \Delta(f):=\min\left\{\left|\zeta-\tilde{\zeta}\right|\mid \zeta,\tilde{\zeta}\in I,\,f\left(\zeta\right)=f\left(\tilde{\zeta}\right)=0\right\},
    \]
    if $f$ has no double roots in $I$ and it is  zero otherwise.
  \item[(C)] Let $\varepsilon\in\left(0,\frac{1}{d}\right)$. The \emph{$\varepsilon$-real separation of $f$}, $\Delta_{\varepsilon}(f)$, is
    \[
    \Delta_{\varepsilon}(f):=\min\left\{\left|\zeta-\tilde{\zeta}\right|\mid \zeta,\tilde{\zeta}\in I_\varepsilon:=\{z\in \bbC\mid \dist(z,I)\leq \varepsilon\},\,f\left(\zeta\right)=f\left(\tilde{\zeta}\right)=0\right\},
  \]
    if $f$ has no double roots in $I$ and it is
    zero otherwise.
\end{enumerate}
\end{defi}

We observe that $\Delta(f)$ gives the minimum distance between two real roots of
$f$ in $I$, while $\Delta_{\varepsilon}(f)$ takes into account how near to the
real line complex roots near $I$ are. The quantity $\Delta(f)$ plays a role
controlling the complexity of univariate solvers that do not depend on the
complex roots, such as the Sturm's solver. On the other hand,
$\Delta_{\varepsilon}(f)$ plays a role controlling the complexity of univariate
solvers that depend on the complex roots near the real line, such as the
Descartes' solver.

Our objective is to give lower bounds on these two quantities in terms of the
condition number and use them for analysing two univariate solvers: \descartes
and \JSalg.

\subsection{Condition-based bounds for separation}

For bounding the real separation, we follow the ideas of \cite{raffalli2014}
which allow us to obtain a bound depending on the square root of the global
condition number. The main idea is to exploit  that between two consecutive roots there is a point where the derivative vanishes and so a point where the Taylor expansion becomes quadratic.

\begin{theo}\label{theo:absrealseparation}
  Let $f\in\Pdone$. Then
  $$\Delta(f)\geq \frac{2\sqrt{2}}{d\sqrt{\cond(f)}}.$$
\end{theo}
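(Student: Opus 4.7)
The plan is to combine Rolle's theorem with a second-order Taylor expansion and the derivative bound from Corollary~\ref{cor:l1boundsderivative}, then read off the condition number at a critical point via Definition~\ref{def:l1condition}. First, I would dispose of the degenerate cases (fewer than two distinct real roots in $I$, or $f$ with a double root in $I$), so that I may take two real roots $\zeta_1<\zeta_2$ in $I$ realising $\Delta(f)=\zeta_2-\zeta_1$. By Rolle's theorem applied to $f\in\Pdone$, there exists $\xi\in(\zeta_1,\zeta_2)\subset I$ with $f'(\xi)=0$.

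The key observation is that at such a critical point the condition number simplifies: since $\|\diff_\xi f\|_1=|f'(\xi)|=0$, Definition~\ref{def:l1condition} gives
\[
\cond(f,\xi)=\frac{\|f\|_1}{|f(\xi)|},
\qquad\text{hence}\qquad
|f(\xi)|=\frac{\|f\|_1}{\cond(f,\xi)}\geq \frac{\|f\|_1}{\cond(f)}.
\]
I would then upper bound $|f(\xi)|$ using Taylor's theorem with integral remainder at $\xi$: since $f(\zeta_i)=0$ and $f'(\xi)=0$ for $i\in\{1,2\}$,
\[
f(\xi)=-\int_\xi^{\zeta_i}(\zeta_i-t)\,f''(t)\,dt,
\]
so $|f(\xi)|\leq \tfrac{1}{2}\big(\max_{t\in I}|f''(t)|\big)\,|\zeta_i-\xi|^2$. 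Corollary~\ref{cor:l1boundsderivative} with $k=2$ gives $\tfrac{1}{2}|f''(t)|\leq \binom{d}{2}\|f\|_1$ for every $t\in I\subset\bbD$, hence
\[
|f(\xi)|\leq \binom{d}{2}\|f\|_1\,|\zeta_i-\xi|^2.
\]

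Next I would use the pigeonhole step that gains the constant: because $\xi$ lies strictly between $\zeta_1$ and $\zeta_2$ and the two distances sum to $\Delta(f)$, at least one index $i$ satisfies $|\zeta_i-\xi|\leq \Delta(f)/2$. Applying the bound to that index,
\[
\frac{\|f\|_1}{\cond(f)}\;\leq\; |f(\xi)|\;\leq\; \binom{d}{2}\|f\|_1\left(\frac{\Delta(f)}{2}\right)^{\!2}\;=\;\frac{d(d-1)}{8}\,\|f\|_1\,\Delta(f)^2\;\leq\;\frac{d^2}{8}\,\|f\|_1\,\Delta(f)^2.
\]
Rearranging yields $\Delta(f)\geq \tfrac{2\sqrt{2}}{d\sqrt{\cond(f)}}$, as claimed.

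There is no genuinely hard step here; the only delicate point is the constant. A naive Taylor bound with $|\zeta_i-\xi|\leq\Delta(f)$ would lose a factor of $2$ and produce only $\sqrt{2}/(d\sqrt{\cond(f)})$, so the pigeonhole observation that the closer root is within $\Delta(f)/2$ of the critical point $\xi$ is essential for the stated constant. The only other care required is checking that $\xi$ and the integration path stay in $I$ so that Corollary~\ref{cor:l1boundsderivative} applies with the $\|f\|_1$ bound, which is immediate since $\xi,\zeta_i\in I\subset\bbD$.
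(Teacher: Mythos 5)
Your proof is correct and follows essentially the same route as the paper's: Rolle's theorem to produce a critical point $\xi$ between the two closest roots, the observation that $f'(\xi)=0$ collapses the condition number to $\|f\|_1/|f(\xi)|$, a second-order Taylor expansion at $\xi$ bounded via Corollary~\ref{cor:l1boundsderivative} with $k=2$, and the pigeonhole step placing the nearer root within $\Delta(f)/2$ of $\xi$. The only cosmetic difference is your use of the integral form of the remainder where the paper uses Lagrange's form.
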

\begin{proof}
  Let $\zeta,\tilde{\zeta}\in I$ be the pair of real roots of $f$ in $I$ with
  the minimum distance. By Rolle's theorem, there is $x_0\in I$ between them, such that
  \[
    f'(x_0)=0.
  \]
Without loss of generality assume that $\zeta$ is the root closest to $x_0$, so
that $\left|\zeta-\tilde{\zeta}\right|/2\geq |x_0-\zeta|$. Then by Taylor's
theorem (Lagrange's form),
\[
0=f(\zeta)=f(x_0)+\frac{1}{2}f''(c)(\zeta-x_0)^2 ,
\]
for some $c \in I$ between $\zeta$ and $x_0$. Hence
\[
  |f(x_0)|=\frac{1}{2}|f''(c)||\zeta-x_0|^2 \leq
  \frac{\left|\zeta-\tilde{\zeta}\right|^2}{8}|f''(c)| =
  \frac{\left(\Delta(f)\right)^2}{8}|f''(c)|.
\]
Since $f'(x_0)=0$, we obtain the desired result dividing by $\|f\|_1$ and applying Corollary~\ref{cor:l1boundsderivative}. 
\end{proof}

For the $\varepsilon$-real separation bound, our results are based on \cite[Theorem~3.2 and Theorem~5.1]{Dedieu-sep-97}. The main idea is to use the higher derivative estimate and the fact that the inverse of the Smale's $\gamma$ is Lipschitz.

\begin{theo}\label{theo:relrealseparation}
Let $f\in\Pdone$. Then for all $\varepsilon\in\left(0, \frac{1}{\enumber d\cond(f)}\right)$,
    $$\Delta_{\varepsilon}(f)\geq \frac{1}{12d\cond(f)}.$$
\end{theo}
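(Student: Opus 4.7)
The plan is to pick a real base point $x_0 \in I$ close to one of the two conflicting roots, use it to activate the higher derivative estimate from Theorem~\ref{theo:l1conditionproperties}, and then apply a Dedieu-style Taylor argument to turn a bound on Smale's $\gamma$ into a separation bound. The case $n=1$ is essential throughout: the $\infty$ and $1$ norms coincide with absolute value, and so does the $2$-norm appearing in the definition of $\gamma$.

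First, let $\zeta, \tilde{\zeta} \in I_\varepsilon$ be two distinct roots realising $\Delta_\varepsilon(f)$, and let $x_0 \in I$ be the real point closest to $\zeta$, so $|x_0-\zeta|\leq \varepsilon$. Since $\varepsilon \leq 1/d$, both $\zeta$ and $x_0$ lie in $\bbD_\varepsilon$, and Proposition~\ref{prop:l1boundsepsilon} yields
\[
  \frac{|f(x_0)|}{\|f\|_1} \;=\; \frac{|f(x_0)-f(\zeta)|}{\|f\|_1} \;\leq\; \enumber\, d\, |x_0-\zeta| \;\leq\; \enumber\, d\, \varepsilon.
\]
The hypothesis $\varepsilon < 1/(\enumber d\cond(f))$ together with $\cond(f,x_0)\leq \cond(f)$ then forces $\cond(f,x_0)\,|f(x_0)|/\|f\|_1 < 1$, so the higher derivative estimate of Theorem~\ref{theo:l1conditionproperties} applies and gives
\[
  \gamma(f,x_0) \;\leq\; \frac{d-1}{2}\,\cond(f,x_0) \;\leq\; \frac{d}{2}\,\cond(f).
\]

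Next, I exploit that $\zeta$ and $\tilde{\zeta}$ are distinct zeros of $f$. Expanding $f$ as a Taylor series at $x_0$ and subtracting $f(\zeta) = f(\tilde{\zeta}) = 0$ lets me factor the nonzero quantity $(\zeta - \tilde{\zeta})$ out, producing
\[
  f'(x_0) \;=\; -\sum_{k\geq 2}\frac{f^{(k)}(x_0)}{k!}\sum_{j=0}^{k-1}(\zeta-x_0)^j(\tilde{\zeta}-x_0)^{k-1-j}.
\]
Using the defining inequality $|f^{(k)}(x_0)/k!|\leq \gamma(f,x_0)^{k-1}|f'(x_0)|$ (valid for $k\geq 2$, and using that $\|\cdot\|_2=|\cdot|$ when $n=1$) and setting $\delta := \max\{|\zeta-x_0|,|\tilde{\zeta}-x_0|\}$, bounding each of the $k$ terms in the inner sum by $\delta^{k-1}$ gives, after dividing by $|f'(x_0)|$,
\[
  1 \;\leq\; \sum_{k\geq 2} k\,(\gamma(f,x_0)\,\delta)^{k-1}.
\]
This forces $\gamma(f,x_0)\,\delta \geq r$ for an explicit absolute constant $r>0$ (solving a quadratic gives $r = 1-\tfrac{\sqrt 2}{2}$, but any fixed positive constant suffices for the final numerics).

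Finally, since $|\zeta-x_0|\leq \varepsilon$ is much smaller than $r/\gamma(f,x_0)$ (by the upper bound on $\varepsilon$ combined with the bound on $\gamma$), the maximum $\delta$ is necessarily attained at $\tilde{\zeta}$, so $|\tilde{\zeta}-x_0|\geq r/\gamma(f,x_0)$, and the triangle inequality yields
\[
  |\zeta-\tilde{\zeta}| \;\geq\; \frac{r}{\gamma(f,x_0)} - \varepsilon \;\geq\; \frac{2r}{d\,\cond(f)} - \frac{1}{\enumber\, d\,\cond(f)}.
\]
The main technical obstacle is book-keeping the absolute constants so that the right-hand side is at least $1/(12d\cond(f))$; this comes down to checking that $2r - 1/\enumber \geq 1/12$, which is comfortably true for $r = 1-\sqrt 2/2$. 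Putting everything together yields $\Delta_\varepsilon(f)\geq 1/(12d\cond(f))$.
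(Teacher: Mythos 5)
Your proof is correct, and the numerics check out: $2(1-\tfrac{\sqrt2}{2})-\tfrac1\enumber\approx 0.218>\tfrac1{12}$, and $\gamma(f,x_0)\varepsilon<\tfrac1{2\enumber}<1-\tfrac{\sqrt2}{2}$ guarantees that the larger of the two distances to $x_0$ is indeed $|\tilde\zeta-x_0|$. The first half of your argument coincides with the paper's: both pick a real point $x_0\in I$ within $\varepsilon$ of a root, use the Lipschitz bound of Proposition~\ref{prop:l1boundsepsilon} to get $|f(x_0)|/\Onorm{f}\leq \enumber d\varepsilon<1/\cond(f,x_0)$, and thereby activate the higher derivative estimate to bound Smale's $\gamma$ at $x_0$. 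Where you diverge is in converting that $\gamma$-bound into a separation bound. The paper stays inside Dedieu's $\gamma$-theory: it transfers the bound from the real point $x_0$ to the complex root $\zeta$ via the Lipschitz-type estimate \cite[Lemme~98]{dedieubook}, and then invokes the Newton-basin theorem \cite[Th\'eor\`eme~91]{dedieubook} to conclude that no other root lies within $1/(3\gamma(f,\zeta))$ of $\zeta$. You instead work entirely at the real point $x_0$ and rederive the separation statement from scratch: the divided-difference identity obtained by subtracting the Taylor expansions at the two roots, combined with $|f^{(k)}(x_0)|/k!\leq\gamma(f,x_0)^{k-1}|f'(x_0)|$ (and $f'(x_0)\neq 0$, which the regularity inequality supplies), forces $\gamma(f,x_0)\max\{|\zeta-x_0|,|\tilde\zeta-x_0|\}\geq 1-\tfrac{\sqrt2}{2}$. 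This buys self-containedness — no black-box citations, and the constant $1-\tfrac{\sqrt2}{2}$ appears explicitly rather than being hidden inside Dedieu's lemmas — at the cost of a slightly longer computation; the paper's route is shorter but imports two external results and, as written, actually lands on the constant $1/(24(d-1))$ rather than the cleaner $1/(12d)$ of the statement, whereas your bookkeeping delivers the stated constant directly. One small point worth making explicit: you should note (as the paper implicitly does) that the case of a double root in $I$ is vacuous here because then $\cond(f)=\infty$ and the admissible range for $\varepsilon$ is empty.
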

\begin{proof}
Let $\zeta\in I_{\varepsilon}$ be a complex root. By~\cite[Th\'eor\`eme~91]{dedieubook}, the Newton method converges for any point in $B_\bbC(\zeta,1/(6\gamma(f,\zeta))$, where $\gamma$ is Smale's gamma. Hence, taking $\zeta\in I_\varepsilon$ maximizing $\gamma(f,\zeta)$, we have that
\[
\frac{1}{3\gamma(f,\zeta)}\leq \Delta_{\varepsilon}(f),
\]
since $\frac{1}{3\gamma(f,\zeta)}$ is the distance from $\zeta$ to the rest of the roots of $f$. By~\cite[Lemme~98]{dedieubook},
\[
\gamma(f,\zeta)\leq \frac{\gamma(f,x)}{(1-\gamma(f,x)\varepsilon)(1-4\gamma(f,x)\varepsilon+2\gamma(f,x)^2\varepsilon^2)}
\]
for some $x\in I$ such that $|\zeta-x|\leq\varepsilon$. 

By Proposition~\ref{prop:l1boundsepsilon}, we have that
\[\frac{|f(x)|}{\|f\|_1}\leq \enumber d\varepsilon\leq \frac{1}{\cond(f,x)},\]
where the last inequality is by assumption. Hence, by the above inequality and the higher derivative estimate (Theorem~\ref{theo:l1conditionproperties}),
\[
\gamma(f,\zeta)\leq 4(d-1)\cond(f,x),
\]
since $\gamma(f,x)\varepsilon\leq \frac{1}{2\enumber}$. Hence
\[\frac{1}{3\gamma(f,\zeta)}\geq\frac{1}{24(d-1)\cond(f)},\]
which concludes the proof.
\end{proof}

\subsection{Complexity of univariate solvers}\label{subsec:unisolvers}

By ``univariate solver'', we refer to an algorithm that given a univariate
polynomial $f\in\Pdone$ and an interval $J$ where $f$ has only simple roots, it outputs a set of isolating
intervals
for the roots of $f$ in $J$. The latter means that we are focusing on finding real roots. In our case, we will focus in the case
where the interval is $I$.

In general, an univariate solver will be of the form
\[\Pdone\times \bbN\ni (f,L)\mapsto \left\{(x_i,r_i,n_i)\right\}_{i=0}^t\subseteq I\times (0,2^{-L})\times \bbN\]
where the input is a polynomial $f$ and the natural number $L$
and the output,
$\left\{B_\bbC(x_i,r_i)\right\}_{i=0}^t$, is a disjoint family of complex disks with centers at $I$ and radius at $2^{-L}$, each containing at most $n_i$ roots; that is for all $i$ it holds $\left|f^{-1}(0)\cap B_\bbC(x_i,r_i)\right|\leq n_i$. In the particular case, where it also holds that for all $i$,
\[1\leq\left|f^{-1}(0)\cap B_\bbC(x_i,r_i)\right|= n_i,\]
we say that $\left\{(x_i,r_i,n_i)\right\}_{i=0}^t$ is an \emph{$(I,L)$-covering of $f$}~\cite[Definition~1]{jindalsagraloff2017}.

In our computational model, the input polynomials will be \emph{bitstream polynomials}~\cite[Definition~3.35]{eigenwillig_real_2010},
i.e., we can access the coefficients of the polynomial at any desired precision
and we will consider the bit complexity. We will focus on two solvers: \descartes and \JSalg.

\subsubsection{\texorpdfstring{\descartes}{Descartes} Solver}

\descartes is a prominent representative of the subdivision-based algorithms
for isolating the real roots of polynomials, usually with integer coefficients.
The algorithm is extremely efficient in practice \citep{RouZim:solve:03,hemmer2009experimental}.
We refer to~\cite{eigenwillig_real_2010} for general exposition about \descartes.

For simplicity, we would focus on the size of the subdivision tree of \descartes. Recall that the \emph{subdivision tree} of a subdivision-based algorithm is the tree of intervals, ordered by containment, that the algorithm processes during its execution.

\begin{prop}\label{prop:descartestree}
  Given $f\in\Pdone$ with support $|M|$, the algorithm \descartes computes
  isolating intervals for all the roots of $f$ in $I$ in at most
  $\Oh\left(\log\cond(f)+\log d\right)$ iterations (this is the depth of the
  subdivision tree). In particular, the subdivision tree of \descartes at input
  $(f,I)$ has size
\[
\Oh\left(|M|\left(\log\cond(f)+\log d\right)\right).
\]
\end{prop}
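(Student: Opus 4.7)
The plan is to combine the condition-based bound on the $\varepsilon$-real separation (Theorem~\ref{theo:relrealseparation}) with the classical termination criterion for \descartes and with Descartes' rule of signs specialized to sparse polynomials. The depth of the subdivision tree will be controlled by the separation bound, while the number of leaves will be controlled by the sparsity of the support.

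First I would set $\varepsilon := 1/(\enumber\, d\,\cond(f))$ so that Theorem~\ref{theo:relrealseparation} gives $\Delta_\varepsilon(f) \geq 1/(12\,d\,\cond(f))$. By the Obreshkoff one- and two-circle theorems (see e.g.~\cite{eigenwillig_real_2010,sagraloffmehlhorn2016}), the Descartes test returns $0$ or $1$ on an interval $J \subseteq I$ as soon as the Obreshkoff disk attached to $J$ contains at most one root of $f$. A sufficient condition for this is that $w(J)$ is smaller than a fixed constant times the minimum distance between distinct complex roots of $f$ lying in that disk. For $w(J) \leq \varepsilon$ the disk is contained in $I_\varepsilon$, so this minimum distance is bounded below by $\Delta_\varepsilon(f)$. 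Hence every interval of width at most $c\,\Delta_\varepsilon(f)$ is terminated, and the depth of the subdivision tree is $\Oh(\log(1/\Delta_\varepsilon(f))) = \Oh(\log d + \log\cond(f))$.

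For the size, I would use Descartes' rule of signs applied to $f(X)$ and $f(-X)$: a polynomial supported on $M$ has at most $2(|M|-1) = \Oh(|M|)$ real roots in $I$. The standard analysis of \descartes~\cite{sagraloffmehlhorn2016} then shows that the total number of nodes of the subdivision tree is $\Oh(N \cdot D)$, where $N$ is the number of real roots in $I$ and $D$ is the depth. Plugging $N = \Oh(|M|)$ and $D = \Oh(\log d + \log\cond(f))$ yields the stated bound.

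The delicate step is the $\Oh(N\cdot D)$ bound on the total number of nodes, which requires controlling the \emph{empty} leaves of the tree (those containing no real root of $f$); a priori these could dominate the size. The standard telescoping argument charges each empty leaf to a sibling branch leading to a real root, so that each of the $\Oh(|M|)$ roots is charged at most $\Oh(D)$ times. Alternatively, one can invoke continuous amortization~\cite{burr2009,burr2016} with a local size bound derived from $\cond(f,\cdot)$ via the 2nd Lipschitz property (Theorem~\ref{theo:l1conditionproperties}), in the spirit of Theorem~\ref{theo:analysis2}, to obtain the same estimate directly from the condition number.
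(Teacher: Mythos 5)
Your depth bound is exactly the paper's argument: the one-- and two--circle (Obreshkoff) theorems terminate an interval once its width is a constant fraction of the root separation in a complex neighbourhood of $I$, and Theorem~\ref{theo:relrealseparation} converts this into $\Oh(\log d+\log\cond(f))$. That part is fine.

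The size bound has a genuine gap. You claim the tree has $\Oh(N\cdot D)$ nodes with $N$ the number of \emph{real} roots of $f$ in $I$, justified by charging each empty leaf to ``a sibling branch leading to a real root.'' This charging scheme fails: a pair of complex-conjugate roots close to $I$ (say at $a\pm i\eta$ with $a\in I$ and $\eta$ tiny) forces the Descartes count to stay $\geq 2$ on every interval containing $a$ until the width drops below roughly $\eta$, producing a long path of internal nodes \emph{none} of whose descendants contains a real root; there is no real root to charge, yet the path contributes $\Theta(\log(1/\eta))$ nodes. So $N$ cannot be the number of real roots. The quantity that actually controls the width of the tree is the sign-variation count of the Descartes test, which is subadditive under bisection, so that the number of nodes per level is at most the count at the root interval; and that root count is $\Oh(|M|)$ not because of Descartes' rule of signs (which bounds real roots), but because the M\"obius substitution underlying the test is variation-diminishing (Schoenberg's theorem, \cite[Theorem~3]{sagraloffmelhorn2011}), so the count on any subinterval of $(0,\infty)$ (resp.\ $(-\infty,0)$) is at most the number of sign changes in the coefficient sequence of $f(X)$ (resp.\ $f(-X)$), i.e.\ at most $|M|-1$. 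This is the route the paper takes: $\Oh(|M|)$ nodes per level times depth $\Oh(\log d+\log\cond(f))$.

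Your proposed fallback via continuous amortization does not rescue the statement either: a local size bound built from $\cond(f,\cdot)$ as in Theorem~\ref{thm:conditionlocalsize} yields a bound of the form $\bbE_{\fkx\in I}\,b_f(\fkx)^{-1}=\Oh(d\,\bbE_{\fkx\in I}\cond(f,\fkx))$, which is polynomial in $\cond(f)$ and $d$, exponentially worse than the claimed $\Oh(|M|(\log\cond(f)+\log d))$. Replace the ``number of real roots'' by the sign-variation count and invoke subadditivity plus the variation-diminishing property, and your argument closes.
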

\begin{proof}
  We note that an interval $J\subseteq I$ is a terminal interval for \descartes as long as $B_\bbC(m(J),w(J)/2)$ does not contain roots or $B_\bbC(m(J)+i\sqrt{3}w(J)/6,)$ contains exactly one real root, by the one-circle and two-circles theorems~\cite[Theorem~2]{sagraloffmelhorn2011}. Hence the depth of the subdivision tree is at most $\Oh\left(\log\max\{\varepsilon^{-1},\Delta_\varepsilon(f)^-1\}\right)$ for some $\varepsilon>0$, and so, by Theorem~\ref{theo:relrealseparation}, at most
  $\Oh\left(\left(\log\cond(f)+\log d\right)\right)$.

  Now, since $f$ has support $|M|$, the number of sign variations in $(0,\infty)$ and $(-\infty,0)$ is at most $\Oh(|M|)$. Hence, no level of the subdivision tree can have more than $\Oh(|M|)$ nodes by the Schoenberg's theorem~\cite[Theorem~3]{sagraloffmelhorn2011}. This finishes the proof.
\end{proof}

\subsubsection{\texorpdfstring{\JSalg}{JindalSagraloff} Solver}

\cite{jindalsagraloff2017} propose an algorithm, \JSalg, to solve sparse
polynomials. In this setting , the representation of the polynomial $f\in\Pdone$
consists of its support $M\subset\bbN$, which has size $|M|\log d$, and a
sequence of $|M|$ coefficients.

\begin{theo}
The algorithm \JSalg outputs an $(L,I)$-covering of $f$ on input $(f,L)\in \Pdone\times\bbN$ with $f$ supported on $M\subseteq \bbN$. The algorithm \JSalg runs with at most 
\[
\Oh\left(|M|^{12}\log^{3}d\max\{\log^2\|f\|_1,L^{2}\}\right)
\]
bit operations.
\end{theo}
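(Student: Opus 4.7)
The plan is to reduce the statement to the bit-complexity analysis already carried out in \cite{jindalsagraloff2017}, and then translate the ``height'' parameter appearing there into the $1$-norm $\|f\|_1$ used throughout this paper. Nothing in the algorithm itself needs to be altered; only the accounting of the coefficient magnitudes changes.

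First, I would invoke directly the correctness statement of \JSalg from \cite{jindalsagraloff2017}: on input $(f,L)$, where $f$ is given as a bitstream polynomial supported on $M$, the algorithm returns a disjoint family of complex disks of radius at most $2^{-L}$ centered on $I$, each containing between $1$ and $n_i$ roots of $f$. This is the definition of an $(L,I)$-covering. No condition-theoretic input from the present paper is needed for this part.

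Second, I would recall their bit-complexity bound, which is polynomial in the support size $|M|$, in $\log d$, in the precision $L$, and in a height parameter $\tau = \log \max_\alpha |f_\alpha|$ controlling the number of bits of each coefficient one must read. The only adjustment needed is the elementary inequality $\|f\|_\infty = \max_\alpha |f_\alpha| \leq \|f\|_1$, which yields $\tau \leq \log \|f\|_1$. Substituting this into the dependence on $\tau$ and $L$ in the bound of \cite{jindalsagraloff2017} (where these two quantities enter together, either additively or through a $\max$) immediately produces the factor $\max\{\log^2\|f\|_1, L^2\}$, while the polynomial dependence $|M|^{12}\log^{3} d$ is copied verbatim from their analysis of the sparse Newton iteration and of the interval arithmetic used to evaluate $f$ and its derivatives.

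The main obstacle is purely bookkeeping: one needs to verify that no additional multiplicative factor in $|M|$, $d$, or $L$ is introduced by the change of normalization, and that the bit cost attributed in \cite{jindalsagraloff2017} to reading $\tau$ bits of each coefficient is still correct when $\tau$ is replaced by $\log \|f\|_1$. Since the replacement uses only the trivial inequality $\|f\|_\infty \leq \|f\|_1$, this adjustment is immediate and the final bound follows by direct substitution into the main bit-complexity theorem of \cite{jindalsagraloff2017}.
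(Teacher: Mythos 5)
Your proposal is correct and matches the paper's approach: the paper's own proof consists of the single remark that the theorem is essentially \cite[Lemma~8]{jindalsagraloff2017} rewritten, with the remaining claims read off from the assumptions stated in that paper's introduction. Your additional bookkeeping via $\|f\|_\infty \leq \|f\|_1$ is the same translation of the height parameter that the paper performs implicitly.
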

\begin{proof}
This is essentially \cite[Lemma~8]{jindalsagraloff2017} rewritten. For the rest of the claims, one has just to read the assumptions in the introduction.
\end{proof}
\begin{remark}
We note that for dense polynomials the estimate of \JSalg is far from optimal, because of that we will state many of the later results only as $\mathrm{poly}(|M|,\log d)$, since determining the exact form of the polynomial does not lead to optimal bounds.
\end{remark}

The following proposition it is based on the fact that in order to find the roots of a polynomial with an $(I,L)$-cover, we only need to compute $(I,k)$-covers until $k>\max\{\log \Delta_{\varepsilon}(f),\log\varepsilon^{-1}\}$ for some $\varepsilon>0$.
\begin{prop}\label{prop:JScomplexitycondition}
Given $f\in\Pdone$ with support $|M|$, the algorithm \JSalg computes isolating intervals for all the roots of $f$ in $I$ with a run-time of
\[
\Oh\left(|M|^{12}\log^{3}d\max\{\log^2\|f\|_1,\log^3\cond(f)\}\right)
\]
bit operations.
\end{prop}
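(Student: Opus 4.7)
My plan is to combine the bit complexity bound for \JSalg (input $(f,L)$ costs $\Oh(|M|^{12}\log^3 d\,\max\{\log^2\|f\|_1,L^2\})$) with the $\varepsilon$-real separation bound from Theorem~\ref{theo:relrealseparation}, by calling \JSalg with increasing values of $L$ until the returned $(I,L)$-covering is already isolating. The essential observation is that an $(I,L)$-covering $\{(x_i,r_i,n_i)\}$ consists of pairwise disjoint complex disks $B_\bbC(x_i,r_i)$ of radius $r_i\leq 2^{-L}$ centered on $I$ and containing exactly $n_i$ roots counted with multiplicity; it yields isolating intervals as soon as every $n_i=1$. So I would run \JSalg for $L=1,2,3,\dots$ and stop at the first $L$ where the output satisfies $n_i=1$ for every $i$.

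To bound the termination index, I would choose $\varepsilon:=2^{-L}$ and observe that once $2^{-L+1}<\Delta_\varepsilon(f)$, no disk of radius $2^{-L}$ centered on $I$ can contain two distinct roots of $f$ in $I_\varepsilon$; in particular each $n_i$ must equal $1$. Theorem~\ref{theo:relrealseparation} gives $\Delta_\varepsilon(f)\geq \frac{1}{12d\cond(f)}$ whenever $\varepsilon<\frac{1}{\enumber d\cond(f)}$, so choosing $L$ with $2^{-L}<\frac{1}{\enumber d\cond(f)}$ and $2^{-L+1}<\frac{1}{12d\cond(f)}$ simultaneously is enough. Both conditions are satisfied once
\[
L\;\geq\; L_{\max}\;:=\;\Oh\bigl(\log d+\log\cond(f)\bigr).
\]
Hence the iteration terminates within $L_{\max}$ rounds.

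The total bit cost is the sum of the per-call costs over $L=1,\dots,L_{\max}$, which is
\[
\sum_{L=1}^{L_{\max}}\Oh\!\left(|M|^{12}\log^{3}d\,\max\{\log^2\|f\|_1,L^2\}\right)
\;=\;\Oh\!\left(|M|^{12}\log^{3}d\,\max\{\log^2\|f\|_1,L_{\max}^3\}\right),
\]
using $\sum_{L=1}^{L_{\max}}L^2=\Oh(L_{\max}^3)$. Substituting $L_{\max}=\Oh(\log d+\log\cond(f))$ and absorbing the $\log d$ summand into the $\log^3 d$ factor already present, one obtains the claimed bound
\[
\Oh\!\left(|M|^{12}\log^{3}d\,\max\{\log^2\|f\|_1,\log^3\cond(f)\}\right).
\]

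The main subtlety I see is that we do not know $\cond(f)$ in advance, so the stopping criterion must be detectable from the algorithm's output; this is why I rely on the structural property of $(I,L)$-coverings, whose built-in multiplicity counts $n_i$ let us recognize isolation internally without computing $\cond(f)$. A minor technical point is verifying that, at the first $L$ meeting the separation condition, \JSalg indeed returns an $(I,L)$-covering for which $n_i=1$ everywhere (rather than, say, a disk straddling two close roots); this follows from the disks being pairwise disjoint of radius $\leq 2^{-L}$ together with $2^{-L+1}<\Delta_\varepsilon(f)$. Everything else is bookkeeping.
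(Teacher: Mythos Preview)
Your approach is essentially the paper's: iterate \JSalg over increasing $L$, stop once the returned covering has all $n_i=1$, and bound the termination index via Theorem~\ref{theo:relrealseparation}; this is exactly what the paper sketches in the sentence preceding the proposition and then compresses into its one-line proof. Your treatment of the stopping criterion (using the built-in multiplicities $n_i$) is more explicit than the paper's and is correct.

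One minor arithmetic slip: the claimed identity
\[
\sum_{L=1}^{L_{\max}}\max\{\log^2\|f\|_1,L^2\}=\Oh\!\left(\max\{\log^2\|f\|_1,L_{\max}^3\}\right)
\]
fails when $\log\|f\|_1>L_{\max}$, since then every summand equals $\log^2\|f\|_1$ and the sum is $L_{\max}\log^2\|f\|_1$, not $\Oh(\log^2\|f\|_1)$. The easy repair is to double $L$ geometrically ($L=1,2,4,\dots$), so only $\Oh(\log L_{\max})$ calls are made and the last one dominates; alternatively, note that $L_{\max}\log^2\|f\|_1\leq L_{\max}^3+\log^3\|f\|_1$ and accept the harmless change from $\log^2\|f\|_1$ to $\log^3\|f\|_1$. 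The paper's own one-line proof does not track this detail either.
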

\begin{proof}
We only need to apply the bounds in Theorem~\ref{theo:relrealseparation}.
\end{proof}

\section{Probability estimates of the condition number and complexity}
\label{sec:prob-estims}

We refine the techniques of~\cite{CETC-PV} to obtain explicit constants in the
bounds and to deal with a restricted classes of sparse polynomials. We also add
certain variations of the randomness models we consider. The relaxations of the
hypotheses are of special interest.

\subsection{Probabilistic concepts and toolbox}
\label{sec:prob-toolbox}

We introduce the relevant probabilistic definitions and results. There will be
two kind of probabilistic results: tail bounds, mainly subgaussian and
subexponential; and anti-concentration bounds.

\subsubsection{Subgaussian and other tail bounds}

The more usual tail bound is some form of Markov's inequality~\cite[Proposition~1.2.4]{V} where for a random variable $\fkx\in\bbR$,
\[
\bbP(\|\fkx\|\geq t)\leq \frac{\bbE|\fkx|^k}{t^k}.
\]
The variables in which we will focus  satisfy such a tail bound but in an stronger sense.

\begin{defi}\cite[see][2.5 and 2.7]{V}
Let $\fkx\in\bbR$ be a random variable. 
\begin{enumerate}
\item[(SE)] We call $\fkx$ \emph{subexponential}, if there
exists an $E>0$ such that for all $t\geq E$,
  \[\bbP(|\fkx|>t)\leq 2\exp(-t /E ).\]
  The smallest
  such $E$ is the \emph{subexponential constant} of $\fkx$.
\item[(SG)] We call $\fkx$ \emph{subgaussian}, if there
  exist a $K>0$ such that for all $t\geq K$,  
  \[\bbP(|\fkx|>t)\leq 2\exp(-t^2/K^2).\]
  The smallest
  such $K$ is the \emph{subgaussian constant} of $\fkx$.
\item[($p$SE)] Let $p\geq 1$. We call $\fkx$ \emph{$p$-subexponential}, if there
  exist a $L>0$ such that for all $t\geq L$,  
  \[\bbP(|\fkx|>t)\leq 2\exp(-t^p/L^p).\]
  The smallest
  such $L$ is the \emph{$p$-subexponential} of $\fkx$.
\end{enumerate}
\end{defi}
\begin{remark}
  The technical term for the subexponetial, subgaussian and $p$-subexponential
  constants are $\psi_1$-, $\psi_2$- and $\psi_p$-norms \cite[see][2.5 and
  2.7]{V}, respectively. However, notice that we can define the norms in many
  ways, even though they are equivalent up to an absolute constant.
\end{remark}
\begin{remark}
  We are not requiring the random variables to be centered, i.e., to have zero expectation. This plays a role into having an uniform approach for the average and smoothed analyses.
\end{remark}

The main inequality that we will use in our bounds of the condition number is a
variant of Hoeffding's inequality.

\begin{prop}\label{prop:normbound}
Let $\fkx\in\bbR^M$ be a random vector.
\begin{enumerate}
  \item[(SE)] If for each $\alpha\in M$, $\fkx_\alpha$ is subexponential with subexponential constant $E_\alpha$, then for all
  $t\geq \sum_{\alpha}E_\alpha$, we have 
  \[
    \bbP(\|\fkx\|_1\geq t) \leq
    2|M|\exp\left(-t/\left(\sum_{\alpha\in M} E_\alpha\right)\right).
  \]
\item[(SG)] If for each $\alpha\in M$, $\fkx_\alpha$ is subgaussian with subgaussian constant $K_\alpha$, then  for all
  $t\geq \sum_{\alpha}K_\alpha$, we have 
  \[
    \bbP(\|\fkx\|_1\geq t) \leq
    2|M|\exp\left(-t^2/\left(\sum_{\alpha\in M}K_\alpha\right)^2\right).
  \]
  \item[($p$SE)] Let $p\geq 1$. If for each $\alpha\in M$, $\fkx_\alpha$ is
    $p$-subexponential with subexponential constant $L_\alpha$, then for all
    $t\geq \sum_{\alpha}L_\alpha$, we have
    \[
      \bbP(\|\fkx\|_1\geq t) \leq
      2|M|\exp\left(-t^p/\left(\sum_{\alpha\in M}L_\alpha\right)^p\right).
    \]
\end{enumerate}
\end{prop}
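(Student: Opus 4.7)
The plan is to prove all three cases simultaneously via a single weighted union-bound argument, since (SE) and (SG) are just the cases $p=1$ and $p=2$ of the $p$-subexponential statement; after observing this, I would either state all three cases and then prove (pSE), or state only (pSE) explicitly and observe that it specializes to the other two.

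The core idea is a weighted pigeonhole. Write $S := \sum_{\alpha\in M} L_\alpha$ and suppose $\|\fkx\|_1 \geq t$. Then
\[
    \sum_{\alpha\in M} |\fkx_\alpha| \geq t = \frac{t}{S}\sum_{\alpha\in M} L_\alpha,
\]
so there must exist some index $\alpha\in M$ with $|\fkx_\alpha| \geq tL_\alpha/S$; otherwise the sum would be strictly less than $t$. Hence
\[
    \bbP(\|\fkx\|_1 \geq t) \;\leq\; \sum_{\alpha\in M}\bbP\!\left(|\fkx_\alpha| \geq \tfrac{tL_\alpha}{S}\right)
\]
by a union bound.

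Now the hypothesis $t\geq S$ guarantees that for each $\alpha$ the threshold $tL_\alpha/S$ is at least $L_\alpha$, so the $p$-subexponential tail bound applies:
\[
    \bbP\!\left(|\fkx_\alpha|\geq \tfrac{tL_\alpha}{S}\right)
    \;\leq\; 2\exp\!\left(-\tfrac{(tL_\alpha/S)^p}{L_\alpha^p}\right)
    \;=\; 2\exp\!\left(-\tfrac{t^p}{S^p}\right).
\]
The key cancellation is that $L_\alpha$ appears symmetrically in the threshold and the denominator, so the resulting bound is independent of $\alpha$. Summing over the $|M|$ indices produces the factor $2|M|$ and yields the claim. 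The cases (SE) and (SG) follow immediately by taking $p=1$ and $p=2$.

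There is no real obstacle here — the only thing to verify carefully is the consistency condition that $tL_\alpha/S \geq L_\alpha$ (equivalently $t\geq S$), which is exactly the hypothesis under which the statement is asserted; and that the contrapositive of the pigeonhole step is correct, which is immediate from linearity.
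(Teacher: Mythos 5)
Your proof is correct and follows essentially the same route as the paper's: reduce to the $p$-subexponential case, use the weighted pigeonhole (the paper calls it the ``implication bound'') to find some $\alpha$ with $|\fkx_\alpha|\geq tL_\alpha/\sum_\beta L_\beta$, apply a union bound, and note that $t\geq\sum_\beta L_\beta$ makes each threshold at least $L_\alpha$ so the tail hypothesis applies and the $L_\alpha$'s cancel. The only cosmetic difference is that you sum the $|M|$ identical tail probabilities while the paper bounds by $|M|$ times their maximum, which yields the same constant.
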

\begin{proof}
  We prove the last claim as the other two are  particular cases.
  We have that for $t\geq\sum_{\alpha\in M}L_\alpha$,
  \begin{align*}
      \bbP\left(\textstyle\sum_{\alpha\in M}|\fkx_{\alpha}|\geq t\right)&= \bbP\left(\textstyle\sum_{\alpha\in M}|\fkx_{\alpha}|\geq \sum_{\alpha\in M} L_\alpha t/\left(\textstyle\sum_{\alpha\in M}L_\alpha\right)\right)&\left(\sum_{\alpha\in M} L_\alpha t/\left(\textstyle\sum_{\alpha\in M}L_\alpha\right)=t\right)\\
      &\leq \bbP\left(\exists\alpha\in M\mid |\fkx_\alpha|\geq L_\alpha t/\left(\textstyle\sum_{\alpha\in M}L_\alpha\right)\right)&\text{(Implication bound)}\\
      &\leq |M|\max_{\alpha\in M} \bbP\left(|\fkx_\alpha|\geq L_\alpha t/\left(\textstyle\sum_{\alpha\in M}L_\alpha\right)\right)&\text{(Union bound)}\\
      &\leq 2|M|\exp\left(-t^p/\left(\textstyle\sum_{\alpha\in M}L_\alpha\right)^p\right) . &\text{(Hypothesis)}
  \end{align*}
  The implication bound is based on the fact that
  $\sum_{i=1}^nx_i\geq \sum_{i=1}^ny_i$ implies that for some $i$, $x_i\geq y_i$.
  The application of the hypothesis follows from the fact that
  $L_\alpha t/\left(\textstyle\sum_{\alpha\in M}L_\alpha\right)\geq L_\alpha$
  for $t\geq \sum_{\alpha\in M}L_\alpha$.
\end{proof}

\subsubsection{Anti-concentration bounds}

A random variable $\fkx\in\bbR$ such that for some $x\in\bbR$ has
$\bbP(\fkx=x)>0$ is said to be concentrated at $x$. If this is the case with
some of the coefficients of our random polynomial, then we cannot guarantee that
the random polynomial has finite condition almost-surely; it might happen that
it equals a particular ill-posed polynomial (one with infinite condition) with
non-zero probability. Because of this, we introduce the following.

\begin{defi}
  A random variable $\fkx$ has the \emph{anti-concentration property}, if there
  exists a $\rho>0$, such that for all $\eps>0$,
 \[ \max \{ \PP \left( \abs{\fkx-u} \leq \eps \right)
    \mid u \in \bbR \} \leq 2\rho \eps.\]
The smallest such $\rho$ is the anti-concentration constant of $\fkx$.
\end{defi}

%The above definition is good at giving the property in a clear way.
The following proposition characterizes anti-concentration in the presense of a
bounded continuous density. We will use this equivalence without mentioning it.

\begin{prop}
 Let $\fkx\in\bbR$ be a random variable. Then $\fkx$ has the anti-concentration property with anti-concentration constant $\leq\rho$ if and only if $\fkx$ is absolutely continuous with respect the Lebesgue measure with density $\delta_\fkx$ bounded by $\rho$.
\end{prop}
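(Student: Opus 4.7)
The claim is a standard measure-theoretic equivalence, and I would prove each direction separately.

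For the backward implication ($\Leftarrow$), I would proceed by direct computation. If $\fkx$ admits a density $\delta_\fkx$ with $\delta_\fkx \leq \rho$ almost everywhere, then for every $u \in \bbR$ and every $\eps > 0$,
\[
\PP(|\fkx - u| \leq \eps) = \int_{u-\eps}^{u+\eps} \delta_\fkx(t)\, dt \leq \rho \cdot 2\eps,
\]
which is exactly the anti-concentration property with constant $\leq \rho$.

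The forward implication ($\Rightarrow$) is the substantive direction and splits into two steps. Let $\mu$ denote the law of $\fkx$. First, I would establish absolute continuity with respect to Lebesgue measure $\lambda$. For any bounded interval $[a,b]$, applying the anti-concentration hypothesis with $u = (a+b)/2$ and $\eps = (b-a)/2$ yields $\mu([a,b]) \leq 2\rho\eps = \rho(b-a) = \rho\lambda([a,b])$. Covering an arbitrary open set $U$ by a countable disjoint union of open intervals and using countable additivity gives $\mu(U) \leq \rho\lambda(U)$. Outer regularity of Lebesgue measure then extends this bound to $\mu(E) \leq \rho\lambda(E)$ for all Borel sets $E$; in particular, $\mu \ll \lambda$, so by the Radon--Nikodym theorem there exists a density $\delta_\fkx$ with $\mu(E) = \int_E \delta_\fkx\, d\lambda$.

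Second, I would bound the density. By the Lebesgue differentiation theorem, for $\lambda$-almost every $x \in \bbR$,
\[
\delta_\fkx(x) = \lim_{\eps \to 0^+} \frac{1}{2\eps}\int_{x-\eps}^{x+\eps} \delta_\fkx(t)\, dt = \lim_{\eps \to 0^+} \frac{\mu([x-\eps, x+\eps])}{2\eps} \leq \lim_{\eps \to 0^+} \frac{2\rho\eps}{2\eps} = \rho,
\]
where the inequality uses the anti-concentration hypothesis. Thus $\delta_\fkx \leq \rho$ a.e., and we may replace $\delta_\fkx$ by a representative that is pointwise bounded by $\rho$ without changing the distribution.

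The only subtlety worth flagging is the standard a.e.-versus-pointwise issue for densities, which is handled by picking the right representative at the end; there is no real obstacle to this proof beyond invoking Lebesgue differentiation correctly.
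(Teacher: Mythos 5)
Your proof is correct. Note, though, that the paper does not actually prove this proposition: it simply cites Proposition~2.2 of Rudelson--Vershynin and remarks that the quantitative bounds follow from the equivalence. What you have written is the standard self-contained argument that such a citation implicitly outsources: the easy direction by integrating the density over $[u-\eps,u+\eps]$, and the substantive direction by (i) deducing $\mu(E)\leq\rho\lambda(E)$ for all Borel $E$ from the interval bound via disjoint-interval decompositions of open sets and outer regularity, hence $\mu\ll\lambda$ and a Radon--Nikodym density, and (ii) bounding that density almost everywhere by Lebesgue differentiation, then passing to a pointwise-bounded representative. Each step is sound, and your handling of the a.e.-versus-pointwise issue for the density is the right way to close the statement as literally phrased. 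The only thing your write-up buys beyond the paper is self-containedness; the only thing the paper's citation buys is brevity. One microscopic point you could make explicit: the hypothesis gives $\mu([u-\eps,u+\eps])\leq 2\rho\eps$ for \emph{closed} intervals, and when you cover an open set by disjoint open intervals you should pass from closed subintervals to the open interval by monotone convergence (or note that the bound for closed intervals trivially dominates that for the open ones) --- but this is routine and does not affect correctness.
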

\begin{proof}
This is \cite[Proposition~2.2]{RV-1}. The precise bounds follow immediately once we have shown the equivalence.
\end{proof}

We conclude with the following proposition which is a generalization of \cite[Theorem~1.1]{RV-1} for more general linear maps. The explicit constants are thanks to the work of \cite{grigoris16}.

\begin{prop}\label{prop:projboundcont}
Let $A\in\bbR^{k\times N}$ be a surjective linear map and $\fkx\in\bbR^N$ be a random vector such that the $\fkx_i$'s are independent random variables with densities (with respect the Lebesgue measure) bounded almost everywhere by $\rho$. Then, for all measurable $U\subseteq \bbR^k$,
\[\bbP(A\fkx\in U)\leq \vol(U)\left(\sqrt{2}\rho\right)^{k}/\sqrt{\det AA^*}.\]
\end{prop}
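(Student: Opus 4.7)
The plan is to reduce the general surjective case to the case where the matrix has orthonormal rows via a volume-preserving change of variables, and then invoke the sharp small-ball inequality of \cite{RV-1} with the explicit constant obtained by \cite{grigoris16}.

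First, I would set $Q := (AA^*)^{-1/2}A \in \bbR^{k\times N}$. Since $A$ is surjective, $AA^*$ is invertible, and a direct computation shows $QQ^* = I_k$, so the rows of $Q$ are orthonormal. The linear bijection $z \mapsto (AA^*)^{-1/2} z$ intertwines the two events:
\[
 \bbP(A\fkx \in U) = \bbP\bigl(Q\fkx \in (AA^*)^{-1/2}U\bigr),
\]
and the change of variables formula yields
\[
 \vol\bigl((AA^*)^{-1/2}U\bigr) = \frac{\vol(U)}{\sqrt{\det AA^*}}.
\]
Consequently, it is enough to establish that whenever $Q \in \bbR^{k\times N}$ has orthonormal rows and $\fkx$ has independent coordinates whose densities are bounded by $\rho$, the pushforward density $\delta_{Q\fkx}$ is bounded almost everywhere by $(\sqrt{2}\,\rho)^k$; integrating this bound over any measurable $V\subseteq\bbR^k$ gives the desired inequality.

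The genuine content of the proposition is the density bound in the orthonormal-rows case, and this is the main obstacle. Morally, the density of $Q\fkx$ at a point $y$ is obtained by integrating the product density $\prod_i\delta_i$ along the $(N-k)$-dimensional affine subspace $Q^{-1}(y)$, a slicing question in the spirit of Brascamp-Lieb. The qualitative form, with some unspecified absolute constant in place of $\sqrt{2}$, is \cite[Theorem~1.1]{RV-1}, while the explicit constant $\sqrt{2}$ comes from the sharpened analysis of \cite{grigoris16}. I would cite this result as a black box rather than reproving it, since the reduction above is the only step specific to our formulation.
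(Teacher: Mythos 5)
Your proof is correct and follows essentially the same route as the paper's: both reduce to a matrix with orthonormal rows (the paper factors $A=QSP$ via the SVD, you use the polar factor $(AA^*)^{-1/2}A$, which amounts to the same reduction), both then invoke \cite[Theorem~1.1]{RV-1} with the explicit constant $\sqrt{2}$ from \cite{grigoris16} as a black box, and both finish with the identical Jacobian computation $\det\bigl((AA^*)^{1/2}\bigr)=\sqrt{\det AA^*}$. No gaps.
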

\begin{proof}
  Using SVD, write $A=QSP$ where, $P \in \bbR^{k\times N}$ is an
  orthogonal projection, $S$ a diagonal matrix containing the singular
  values of $A$, and $Q$ an orthogonal matrix.

  By~\cite[Theorem~1.1]{RV-1}, see also \cite[Theorem~1.1]{grigoris16} for
  the explicit constant, we have that $P\fkx\in\bbR^k$ is a random
  vector with density bounded, almost everywhere, by $(\sqrt{2}\rho)^k$.
  Hence
  \[\bbP(A\fkx\in U)=\bbP(P\fkx\in (QS)^{-1}U)\leq \vol\left((QS)^{-1}U\right)(\sqrt{2}\rho)^k.\]
  This suffices to conclude  the proof, since $\vol\left((QS)^{-1}U\right)=\vol(U)/\det(QS)$ and $\det(QS)=\sqrt{\det AA^*}$.
\end{proof}

\subsection{Zintzo random polynomials}
\label{subsec:rand-model}

We introduce a new class of random polynomials; we call them \emph{zintzo}%
\footnote{The word ``zintzo'' is a Basque word that means honest, upright, righteous. We note that this is the same meaning as that of the word ``dobro'' in a certain sense. This coincidence is not by chance, since it shows that dobro and zintzo random polynomials are polynomials of the same kind, but different.}
polynomials. 
The zintzo polynomials have many simililarities with the
dobro random polynomials, introduced by \cite{CETC-PV}. The main difference between the two is in the 
variance structure of the coefficients. 
While dobro polynomials scale the coefficients according to the weights induced by the Weyl norm,
zintzo polynomials do not.
This property makes zintzo random polynomials a more natural model of
random polynomials.  Moreover, it allows us to explicitly include sparseness in the
model of randomness.

\begin{defi}
  \label{def:zintzo}
  Let $M\subseteq \bbN^n$ be a finite set such that $0,e_1,\ldots,e_n\in M$. 
  A \emph{zintzo random polynomial supported on $M$} is
  a random polynomial $\fkf=\sum_{\alpha\in M}\fkf_\alpha X^\alpha\in\Pd$
  such that the coefficients $\fkf_\alpha$ are independent subgaussian random variables with the anti-concentration property.
\end{defi}

\begin{remark}
We recall that the condition on the support is important to guarantee the smoothness of zero and necessary for our technical assumptions. We stress this, as we will be assuming this in all this section.
\end{remark}

The bounds and the complexity estimates involving
zintzo random polynomials that we present in the sequel
depend on (the product of) the following two quantities:
\begin{enumerate}
\item The \emph{subgaussian constant} of $\fkf$ which is
  \begin{equation}
    \label{eq:K-sg}
    K_\fkf:=\sum\nolimits_{\alpha\in M}K_\alpha ,
  \end{equation}
  where $K_\alpha$ is the subgaussian constant of $\fkf_\alpha$.
\item The \emph{anti-concentration constant} of $\fkf$ which is
  \begin{equation}
    \label{eq:rho-ac}
    \rho_\fkf:=\sqrt[n+1]{\rho_0\rho_{e_1} \cdots \rho_{e_n}} ,
  \end{equation}
  where $\rho_0$ is the anti-concentration constant of $\fkf_0$ and
  $\rho_{e_i}$ is the anti-concentration constant of
  $\fkf_{e_i}$, for $i \in [n]$.
\end{enumerate}
\begin{exam}
Both Gaussian and uniform random polynomials (recall Definition~\ref{def:GU:zintzo}) are zintzo random polynomials with
\begin{equation}
    K_\fkf\rho_\fkf\leq \frac{|M|}{2}.
\end{equation}
In general, we should think of zintzo random polynomials as a robust version of Gaussian polynomials.
\end{exam}

The product $K_\fkf\rho_\fkf$ is invariant under multiplication of $\fkf$ by a non-zero
scalar and it admits a lower bound.
\begin{prop}
  \label{prop:lowerboundKrho}
  Let $\fkf$ be a zintzo random polynomial supported on $M$.  Then,
  $K_\fkf\rho_\fkf> (n+1)/4\geq 1/2$.
\end{prop}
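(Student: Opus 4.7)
The plan is to reduce the claim to a universal lower bound on the product of the subgaussian and anti-concentration constants of a single real random variable, and then combine the per-variable bounds by AM-GM.

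\textbf{Step 1 (single-variable bound).} I would first show that for any real random variable $\fkx$ that is both subgaussian (constant $K$) and anti-concentrated (constant $\rho$), one has $K\rho > 1/4$. The idea is to split
\[
1 = \bbP(|\fkx|\le T) + \bbP(|\fkx|>T)
\]
for some $T\ge K$: the anti-concentration hypothesis at $u=0$ bounds the first term by $2\rho T$, while the subgaussian hypothesis at $t=T$ bounds the second by $2\exp(-T^2/K^2)$. The naive choice $T=K$ gives only $K\rho \ge (1-2/\enumber)/2 \approx 0.13$, which is not enough. Taking $T = K\sqrt{2}$ instead yields
\[
1 \le 2\sqrt{2}\,K\rho + 2\enumber^{-2}, \qquad\text{hence}\qquad K\rho \ge \frac{1-2\enumber^{-2}}{2\sqrt{2}}.
\]
The right-hand side strictly exceeds $1/4$ since $\enumber^2 > 4 + 2\sqrt{2}$, equivalently $(1-\sqrt{2}/2)\enumber^2 > 2$, which is a one-line numerical check.

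\textbf{Step 2 (AM-GM).} The standing assumption $\{0,e_1,\ldots,e_n\}\subseteq M$ and the non-negativity of all $K_\alpha$ give
\[
K_\fkf = \sum_{\alpha\in M} K_\alpha \;\ge\; K_0 + K_{e_1} + \cdots + K_{e_n} \;\ge\; (n+1)\bigl(K_0\,K_{e_1}\cdots K_{e_n}\bigr)^{1/(n+1)}
\]
by the arithmetic-geometric mean inequality. Multiplying this by $\rho_\fkf = (\rho_0\,\rho_{e_1}\cdots\rho_{e_n})^{1/(n+1)}$ and grouping matching factors,
\[
K_\fkf\,\rho_\fkf \;\ge\; (n+1)\!\left(\prod_{\alpha\in\{0,e_1,\ldots,e_n\}} K_\alpha\,\rho_\alpha\right)^{\!1/(n+1)} \;>\; (n+1)\cdot\tfrac{1}{4},
\]
where Step~1 applied to each of the $n+1$ factors provides the last strict inequality. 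Since $n\ge 1$ throughout the paper, $(n+1)/4 \ge 1/2$, which completes the proof.

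\textbf{Main obstacle.} The only non-routine point is Step~1: the most natural threshold $T=K$ falls short of $1/4$ (reaching only $\approx 0.132$), so one has to notice that moving slightly into the tail (any $T$ between roughly $1.3K$ and $1.7K$ works) reverses which term dominates. The margin is small --- the optimal $T$ gives $K\rho \approx 0.264$ --- so the constant $1/4$ in the claim is tight to within a few percent. Everything downstream is a routine AM-GM combined with the factorized definitions of $K_\fkf$ and $\rho_\fkf$.
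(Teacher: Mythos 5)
Your proposal is correct and follows essentially the same route as the paper: an AM--GM reduction to a single-variable bound, followed by splitting $1=\bbP(|\fkx|\le T)+\bbP(|\fkx|>T)$ at a threshold slightly inside the tail (the paper takes $T=1.5K$, giving $3K\rho\ge 1-2\exp(-2.25)$, while you take $T=\sqrt{2}K$; both yield $K\rho>1/4$). No further comment is needed.
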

\begin{proof}
  Using the positivity of the subgaussian constants, $K_\alpha$, of
  the coefficients of the zintzo polynomial $\fkf$ and the
  arithmetic-geometric inequality,
  $$K_\fkf\rho_\fkf\geq
  (n+1)\sqrt[n+1]{(K_0\rho_0)(K_{e_1}\rho_{e_1})\cdots
    (K_{e_n}\rho_{e_n})}.$$ Hence, it suffices to show that for a
  random variable, say $X$, with subgaussian constant $K$ and
  anti-concentration constant $\rho$, $K\rho\geq 1/4$. Now, by definition,
  \[3K\rho\geq \bbP\left(|X|\leq 1.5K\right)=1-\bbP\left(|X|>1.5K\right)\geq 1-2\exp\left(-2.25\right).\]
  By preforming some calculations,
  we get $K\rho\geq 1/4$ as desired.
\end{proof}

\subsubsection{Smoothed analysis}

Recall that in the context of smoothed analysis, as introduced by~\cite{ST:02},
we study the complexity algorithms when the input polynomial is a fixed
polynomial with a random perturbation. The importance of smoothed analysis lies
in that it explains the behaviour of an algorithm in practice better than the
average case, since in practice we tend to have a fixed input with a
perturbation produced by errors. In our setting, it is natural to consider this
perturbation proportional to the norm of the polynomial. The following
proposition shows that for zintzo random polynomials the average complexity
already includes the smoothed case as a particular case. Because of this, there
is no need to give a smoothed analysis of any of our results.

\begin{prop}\label{prop:avergaesmoothed}
  Let $\fkf$ be a zintzo random polynomial supported on $M$, $f\in\Pd$
  a polynomial supported on $M$, and $\sigma>0$. Then,
  $\fkf_\sigma:=f+\sigma\|f\|_1\fkf$ is a zintzo random polynomial
  supported on $M$ such that
  $K_{\fkf_\sigma}\leq \|f\|_1(1+\sigma K_\fkf)$ and
  $\rho_{\fkf_\sigma}\leq \rho_\fkf/(\sigma\|f\|_1)$. In particular,
  \[ K_{\fkf_\sigma}\rho_{\fkf_\sigma} = (K_\fkf+1/\sigma)\rho_\fkf .\]
\end{prop}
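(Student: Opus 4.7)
The plan is to analyze $\fkf_\sigma$ coefficient by coefficient and then aggregate. Writing $\fkf_\sigma = \sum_{\alpha\in M}(f_\alpha + \sigma\|f\|_1\fkf_\alpha)X^\alpha$, the coefficients $(\fkf_\sigma)_\alpha := f_\alpha + \sigma\|f\|_1\fkf_\alpha$ are independent (since the $\fkf_\alpha$ are and the $f_\alpha$ are constants), so the only non-trivial task is to control their subgaussian and anti-concentration constants.

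For subgaussianity, I would use that the subgaussian constant is (up to absolute constant, equivalently) a norm and satisfies $K_{aX+b}\le |a|K_X+|b|$: indeed, a deterministic constant $c$ has subgaussian constant at most $|c|$ (its tail vanishes beyond $|c|$), scaling by $a$ multiplies the constant by $|a|$, and the triangle inequality for the $\psi_2$-norm handles the sum. Applied to $(\fkf_\sigma)_\alpha = f_\alpha + (\sigma\|f\|_1)\fkf_\alpha$ this gives
\[
  K_{(\fkf_\sigma)_\alpha}\;\le\; |f_\alpha| + \sigma\|f\|_1 K_\alpha .
\]
Summing over $\alpha\in M$ and using $\sum_\alpha |f_\alpha| = \|f\|_1$ yields $K_{\fkf_\sigma}\le \|f\|_1(1+\sigma K_\fkf)$.

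For anti-concentration, I would simply change variables: for $u\in\bbR$ and $\eps>0$,
\[
  \bbP\bigl(|(\fkf_\sigma)_\alpha - u|\le\eps\bigr)
  = \bbP\!\left(\left|\fkf_\alpha - \tfrac{u-f_\alpha}{\sigma\|f\|_1}\right|\le \tfrac{\eps}{\sigma\|f\|_1}\right)
  \le 2\rho_\alpha\,\tfrac{\eps}{\sigma\|f\|_1},
\]
so $(\fkf_\sigma)_\alpha$ has anti-concentration constant at most $\rho_\alpha/(\sigma\|f\|_1)$. Taking the geometric mean over $\alpha\in\{0,e_1,\ldots,e_n\}$ (which is the set prescribed by \eqref{eq:rho-ac}) gives
\[
  \rho_{\fkf_\sigma}\;\le\; \frac{\rho_\fkf}{\sigma\|f\|_1}.
\]
Finally, multiplying the two bounds cancels $\|f\|_1$ and gives
\[
  K_{\fkf_\sigma}\rho_{\fkf_\sigma}
  \;\le\; \|f\|_1(1+\sigma K_\fkf)\cdot \frac{\rho_\fkf}{\sigma\|f\|_1}
  \;=\; \left(K_\fkf + \tfrac{1}{\sigma}\right)\rho_\fkf,
\]
which is the claim. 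The only subtle point is the subadditivity of the subgaussian constant under shift and sum; this is where I would be most careful, since the paper's definition is a tail bound rather than a norm, and one should invoke the equivalence of the tail and moment-generating characterizations (as noted in the remark following the definition) to use the $\psi_2$ triangle inequality cleanly. Everything else is a direct change of variables.
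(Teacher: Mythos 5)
Your proposal follows essentially the same route as the paper's proof: reduce everything to the effect of a single affine map $\fkx\mapsto x+s\fkx$ on one coefficient, then aggregate (sum for the subgaussian constants, geometric mean over $\{0,e_1,\ldots,e_n\}$ for the anti-concentration constants). The anti-concentration half of your argument --- a direct change of variables giving the constant $\rho_\alpha/(\sigma\|f\|_1)$ --- is exactly what the paper does.

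The one place you diverge, and the one soft spot, is the subgaussian half. You propose to justify $K_{a\fkx+b}\leq |a|K_{\fkx}+|b|$ by passing to the $\psi_2$-norm and using its triangle inequality. But the equivalence between the paper's tail-based constant and the $\psi_2$-norm holds only up to absolute constants, so carried out literally this route yields $K_{(\fkf_\sigma)_\alpha}\leq C\left(|f_\alpha|+\sigma\|f\|_1K_\alpha\right)$ for some absolute $C>1$, not the constant-free bound the proposition asserts. The paper avoids the detour entirely: for $t\geq |x|+sK$ one has
\[
\bbP(|x+s\fkx|\geq t)\leq \bbP\left(|\fkx|\geq (t-|x|)/s\right)\leq 2\exp\left(-(t-|x|)^2/(sK)^2\right),
\]
and the elementary inequality $(t-|x|)/(sK)\geq t/(|x|+sK)$ for $t\geq |x|+sK$ converts this into the defining tail bound with constant $|x|+sK$. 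Replacing your norm argument by this two-line computation closes the gap; everything else in your write-up is correct. (As a side remark, both your argument and the paper's give the final identity $K_{\fkf_\sigma}\rho_{\fkf_\sigma}=(K_\fkf+1/\sigma)\rho_\fkf$ only as an upper bound $\leq$; the equality sign in the statement should be read accordingly.)
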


Note that both the worst  and the average case are limit cases of the smoothed case. When the perturbation, $\sigma$, becomes zero, the smoothed case becomes the worst case, and when the perturbation has infinite magnitude, the smoothed case becomes the average case. Because of this, it is not surprising that
\[\lim_{\sigma\to0}K_{\fkf_\sigma}\rho_{\fkf_\sigma}=\infty~\text{ and }~\lim_{\sigma\to\infty}K_{\fkf_\sigma}\rho_{\fkf_\sigma}=K_\fkf\rho_\fkf ,\]
which shows that this is the case.

\begin{proof}[Proof of Proposition~\ref{prop:avergaesmoothed}]
It is enough to show that for $x,s\in\bbR$ and a random variable $\fkx$ with subgaussian constant $K$ and anti-concentration constant $\rho$, $x+s\fkx$ is a random variable with subgaussian constant $\leq |x|+sK$ and anti-concentration constant $\leq \rho/s$. We note that the latter follows directly from the definition, so we only prove the former.

Now, for all $t\geq|x|+sK$, 
\[\bbP(|x+s\fkx|\geq t)\leq \bbP(|\fkx|\geq (t-|x|)/s)\leq 2\exp(-(t-|x|)^2/(sK)^2).\]
We can easily check that $t\geq|x|+sK$ implies
$(t-|x|)/(sK)\geq t/(|x|+sK)$. Hence, the claim follows.
\end{proof}

\subsubsection{\texorpdfstring{$p$}{p}-zintzo random polynomials}

Zintzo random polynomials have subgaussian coefficients, we can relax or tighten up this condition which leads to $p$-zintzo random polynomials. We note that $1$-zintzo polynomials will have subexponential coefficients, extending the results that concern zintzo polynomials further. 

\begin{defi}
  \label{def:pzintzo}
  Let $p\geq 1$ and $M\subseteq \bbN^n$ be a finite set such that $0,e_1,\ldots,e_n\in M$. 
  A \emph{$p$-zintzo random polynomial supported on $M$} is
  a random polynomial $\fkf=\sum_{\alpha\in M}\fkf_\alpha X^\alpha\in\Pd$
  such that the coefficients $\fkf_\alpha$ are independent $p$-subexpoential random variables with the anti-concentration property.

Given a $p$-zintzo random polynomial $\fkf\in\Pd$, we define the following quantities:  
  \begin{enumerate}
\item The \emph{tail constant} of $\fkf$ which is
  \begin{equation}
    \label{eq:L-sg}
    L_\fkf:=\sum\nolimits_{\alpha\in M}L_\alpha ,
  \end{equation}
  where $L_\alpha$ is the $p$-subgaussian constant of $\fkf_\alpha$.
\item The \emph{anti-concentration constant} of $\fkf$ which is
  \begin{equation}
    \label{eq:rho-acp}
    \rho_\fkf:=\sqrt[n+1]{\rho_0\rho_{e_1} \cdots \rho_{e_n}} ,
  \end{equation}
  where $\rho_0$ is the anti-concentration constant of $\fkf_0$ and
  $\rho_{e_i}$ is the anti-concentration constant of
  $\fkf_{e_i}$, for $i \in [n]$.
\end{enumerate}
\end{defi}

\begin{exam}
We note that Gaussian polynomials are $p$-zintzo for $p\in [1,2]$, while uniform random polynomials are $p$-zintzo random polynomials with
\[L_\fkf\rho_\fkf\leq \frac{|M|}{2}m \]
for $p\geq 1$. This is the reason we have slightly better results for uniform random polynomials, as we can take the bound for $p$-zintzo random polynomials when $p\to\infty$.
\end{exam}

We note that we only vary the constants that control the how the tail goes to zero as we goes to infinity. Again, our estimates will depend on $L_\fkf\rho_\fkf$ which has a universal lower bound. The proof is analogous to that of Proposition~\ref{prop:lowerboundKrho}.

\begin{prop}
  \label{prop:lowerboundKrhop}
  Let $p\geq 1$ and $\fkf$ a $p$-zintzo random polynomial supported on $M$.  Then
  $L_\fkf\rho_\fkf> 9(n+1)/50\geq 9/25$.\eproof
\end{prop}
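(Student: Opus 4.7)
The plan is to mirror the proof of Proposition~\ref{prop:lowerboundKrho}, replacing the subgaussian tail bound by the $p$-subexponential tail bound and keeping track of the resulting numerical constants.

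First, I would reduce to a one-variable statement. Since $\{0,e_1,\dots,e_n\}\subseteq M$, dropping the remaining summands in $L_\fkf=\sum_{\alpha\in M}L_\alpha$ and applying the arithmetic-geometric mean inequality to the positive reals $L_0,L_{e_1},\dots,L_{e_n}$ yields
\[
L_\fkf\rho_\fkf\;\geq\;(n+1)\sqrt[n+1]{(L_0\rho_0)(L_{e_1}\rho_{e_1})\cdots(L_{e_n}\rho_{e_n})}.
\]
Hence it suffices to show that for every $p\geq 1$ and every random variable $X$ with $p$-subexponential constant $L>0$ and anti-concentration constant $\rho$, one has $L\rho>9/50$.

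Second, I would couple the two tails at scale $t=3L/2$. The anti-concentration property applied at $u=0$ with $\varepsilon=3L/2$ gives $\bbP(|X|\leq 3L/2)\leq 3L\rho$. On the other hand, since $3L/2\geq L$ and $(3/2)^p\geq 3/2$ for $p\geq 1$, the $p$-subexponential bound yields
\[
\bbP(|X|>3L/2)\;\leq\;2\exp\!\left(-(3/2)^p\right)\;\leq\;2\exp(-3/2).
\]
Combining the two inequalities,
\[
3L\rho\;\geq\;\bbP(|X|\leq 3L/2)\;=\;1-\bbP(|X|>3L/2)\;\geq\;1-2\exp(-3/2)\;>\;\tfrac{27}{50},
\]
where the last (numerical) step follows from $e^{3/2}>4.48$, giving $L\rho>9/50$.

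Combining the two steps gives $L_\fkf\rho_\fkf>9(n+1)/50$, and the second inequality $9(n+1)/50\geq 9/25$ follows from $n\geq 1$. The only points that require care are the elementary inequality $(3/2)^p\geq 3/2$ for $p\geq 1$ and the numerical check $1-2e^{-3/2}>27/50$; neither is a real obstacle, and the rest of the argument is a verbatim transcription of the subgaussian case.
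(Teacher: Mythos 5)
Your proposal is correct and is exactly the argument the paper intends: the paper omits the proof, stating only that it is analogous to Proposition~\ref{prop:lowerboundKrho}, and your transcription (AM--GM over the coefficients indexed by $0,e_1,\dots,e_n$, then coupling the anti-concentration bound $\bbP(|X|\le 3L/2)\le 3L\rho$ with the tail bound $2\exp(-(3/2)^p)\le 2\exp(-3/2)$ for $p\ge 1$) carries out that analogy faithfully, with the numerical check $1-2e^{-3/2}>27/50$ holding since $e^{3/2}>4.48$.
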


In the same way, a version of Proposition~\ref{prop:avergaesmoothed} holds also for $p$-zintzo random polynomials.

\subsection{Condition of zintzo random polynomials}

The following two theorems are our main probabilistic results.

\begin{theo}\label{thm:condtailestimate}
Let $\fkf\in\Pd$ be a zintzo random polynomial supported on $M$. Then for all $t\geq e$,
\[\bbP(\cond(\fkf,x)\geq t)\leq \sqrt{n}d^n|M|\left(\frac{8K_\fkf\rho_\fkf}{\sqrt{n+1}}\right)^{n+1}\frac{\ln^{\frac{n+1}{2}} t}{t^{n+1}}.\]
\end{theo}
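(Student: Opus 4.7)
The plan is to split the event $\{\cond(\fkf,x)\geq t\}$ according to the size of $\|\fkf\|_1$ and control each piece with one of the probabilistic tools from Section~\ref{sec:prob-toolbox}: the subgaussian tail bound for $\|\fkf\|_1$, and an anti-concentration bound for the numerator of $\cond$. Concretely, for any threshold $R>0$,
\[
\bbP(\cond(\fkf,x)\geq t) \leq \bbP(\|\fkf\|_1\geq R) + \bbP\!\left(|\fkf(x)|\leq \tfrac{R}{t}\text{ and }\|\diff_x\fkf\|_1\leq \tfrac{dR}{t}\right),
\]
because, by Definition~\ref{def:l1condition}, $\cond(\fkf,x)\geq t$ together with $\|\fkf\|_1\leq R$ forces both $|\fkf(x)|$ and $\tfrac{1}{d}\|\diff_x\fkf\|_1$ to be at most $R/t$.

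The first summand is handled directly by Proposition~\ref{prop:normbound}(SG), yielding $2|M|\exp(-R^2/K_\fkf^2)$ whenever $R\geq K_\fkf$. The second summand uses the technical assumption $\{0,e_1,\ldots,e_n\}\subseteq M$ in a crucial way: conditionally on the coefficients $\{\fkf_\alpha\}_{\alpha\notin\{0,e_1,\ldots,e_n\}}$, the map
\[
(\fkf_0,\fkf_{e_1},\ldots,\fkf_{e_n}) \longmapsto (\fkf(x),\partial_1\fkf(x),\ldots,\partial_n\fkf(x))
\]
is affine with linear part
\[
A' = \begin{pmatrix} 1 & x_1 & \cdots & x_n \\ 0 & 1 & \cdots & 0 \\ \vdots & \vdots & \ddots & \vdots \\ 0 & 0 & \cdots & 1 \end{pmatrix},
\]
which is unimodular. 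Hence the conditional density of $(\fkf(x),\diff_x\fkf)$ on $\bbR^{n+1}$ is bounded pointwise by the product $\rho_0\rho_{e_1}\cdots\rho_{e_n}=\rho_\fkf^{n+1}$ of the marginal density bounds of the essential coefficients. Integrating this density over the set $\{|y_0|\leq R/t\}\times\{y'\in\bbR^n:\|y'\|_1\leq dR/t\}$, whose volume is $\tfrac{2^{n+1}d^nR^{n+1}}{n!\,t^{n+1}}$, and taking the expectation over the conditioning gives
\[
\bbP\!\left(|\fkf(x)|\leq\tfrac{R}{t},\,\|\diff_x\fkf\|_1\leq\tfrac{dR}{t}\right) \leq \frac{2^{n+1}\,d^n\,\rho_\fkf^{n+1}\,R^{n+1}}{n!\,t^{n+1}}.
\]

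The last step is to optimise $R$. I would take $R:=K_\fkf\sqrt{(n+1)\ln t}$, valid because $t\geq e$ forces $R\geq K_\fkf\sqrt{n+1}\geq K_\fkf$. The subgaussian piece then contributes $2|M|/t^{n+1}$, while the anti-concentration piece contributes
\[
\frac{2^{n+1}\,d^n\,(K_\fkf\rho_\fkf)^{n+1}\,(n+1)^{(n+1)/2}\,\ln^{(n+1)/2}t}{n!\,t^{n+1}}.
\]
The main nuisance, and hence the true obstacle, is the constant bookkeeping required to fold both contributions into the single clean form of the statement: to recover the factor $8^{n+1}/(n+1)^{(n+1)/2}$ (rather than $2^{n+1}(n+1)^{(n+1)/2}/n!$) and to absorb the remainder $2|M|/t^{n+1}$ into the $\ln^{(n+1)/2}t$ term. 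Here I would invoke the lower bound $K_\fkf\rho_\fkf\geq (n+1)/4$ from Proposition~\ref{prop:lowerboundKrho}, the inequality $\ln t\geq 1$, and the combinatorial estimate $(n+1)^{n+1}/n!\leq \sqrt{n}\,|M|\cdot 4^{n+1}/(n+1)^{(n+1)/2}$ (using $|M|\geq n+1$) to reach the stated bound. The probabilistic content is entirely in the split above together with the unimodular conditioning; the rest is algebra.
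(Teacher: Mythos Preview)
Your argument is correct and follows the same overall architecture as the paper: split on $\{\|\fkf\|_1\geq R\}$, bound the tail piece via Proposition~\ref{prop:normbound}, bound the small-ball piece via anti-concentration, and choose $R=K_\fkf\sqrt{(n+1)\ln t}$. The difference lies in how the anti-concentration step is executed.

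The paper packages $(\fkf(x),\tfrac{1}{d}\diff_x\fkf)$ as $R_x\fkf$ for a linear map $R_x:\Pd(M)\to\bbR^{n+1}$, rescales the coefficients so that each has density bounded by~$1$, and then invokes the Rudelson--Vershynin projection bound (Proposition~\ref{prop:projboundcont}) together with a Cauchy--Binet lower bound for $\sqrt{\det \tilde R_x\tilde R_x^*}$ (Lemma~\ref{lem:projectionevaluation}). This produces an extra factor $(\sqrt{2})^{\,n+1}$ from the projection bound. Your conditioning argument is more direct: freezing all coefficients outside $\{0,e_1,\ldots,e_n\}$ and observing that the residual affine map has unimodular linear part gives the density bound $\rho_\fkf^{\,n+1}$ on $(\fkf(x),\diff_x\fkf)$ with no loss. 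Your route is thus more elementary---it does not require Proposition~\ref{prop:projboundcont}---and yields a slightly sharper constant; the paper's route, in turn, is phrased via a tool that would still apply when no convenient unimodular minor is available. Here both reach the stated bound, since the paper's extra $(\sqrt{2})^{\,n+1}$ is swallowed by the $8^{n+1}$ in the statement.

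One small correction in your final paragraph: the combinatorial inequality you actually need for the second summand is $(n+1)^{n+1}/n!\leq \sqrt{n}\,e^{n+1}\leq \sqrt{n}\,|M|\cdot 4^{n+1}$ (no extra $(n+1)^{(n+1)/2}$ in the denominator). This is exactly the Stirling estimate the paper uses, and together with Proposition~\ref{prop:lowerboundKrho} and $\ln t\geq 1$ it absorbs both summands into the stated expression.
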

\begin{theo}\label{thm:condtailestimatep}
Let $p\geq 1$ and $\fkf\in\Pd$ a $p$-zintzo random polynomial supported on $M$. Then for all $t\geq e$,
\[\bbP(\cond(\fkf,x)\geq t)\leq \sqrt{n}d^n|M|\left(\frac{8L_\fkf\rho_\fkf}{(n+1)^{1-\frac{1}{p}}}\right)^{n+1}\frac{\ln^{\frac{n+1}{p}} t}{t^{n+1}}.\]
\end{theo}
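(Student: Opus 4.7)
The plan is to combine a tail bound on $\Onorm{\fkf}$ (from Proposition~\ref{prop:normbound}(pSE)) with an anti-concentration estimate on the pair $(\fkf(x),\diff_x\fkf/d)$. For any auxiliary threshold $u>0$ we split
\[
\bbP(\cond(\fkf,x)\geq t)\leq \bbP(\Onorm{\fkf}\geq u)+\bbP\big(M(x)\leq u/t\big),
\]
where $M(x):=\max\{|\fkf(x)|,\tfrac{1}{d}\Onorm{\diff_x\fkf}\}$ is the denominator in Definition~\ref{def:l1condition}. By Proposition~\ref{prop:normbound}(pSE), the first summand is at most $2|M|\exp(-u^p/L_\fkf^p)$, provided $u\geq L_\fkf$.

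The heart of the proof is the anti-concentration bound. Here I would exploit the hypothesis $\{0,e_1,\ldots,e_n\}\subseteq M$. Conditioning on all coefficients $\fkf_\alpha$ with $\alpha\notin\{0,e_1,\ldots,e_n\}$, the linear map $\fkf\mapsto(\fkf(x),\partial_1\fkf(x),\ldots,\partial_n\fkf(x))$ becomes an affine function of $X:=(\fkf_0,\fkf_{e_1},\ldots,\fkf_{e_n})$ whose linear part is
\[
B=\begin{pmatrix}1 & x_1 & \cdots & x_n\\ 0 & 1 & \cdots & 0\\ \vdots & \vdots & \ddots & \vdots\\ 0 & 0 & \cdots & 1\end{pmatrix},
\]
so $\det B=1$. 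Since $X$ has independent entries with densities bounded by $\rho_0,\rho_{e_1},\ldots,\rho_{e_n}$, the conditional density of $(\fkf(x),\diff_x\fkf)$ is bounded almost everywhere by $\rho_0\rho_{e_1}\cdots\rho_{e_n}=\rho_\fkf^{n+1}$. Integrating over the other coefficients, and noting that $\{M(x)\leq s\}$ corresponds to $(\fkf(x),\diff_x\fkf)$ lying in a region of volume $2s\cdot(2ds)^n/n!=2^{n+1}d^n s^{n+1}/n!$, yields
\[
\bbP(M(x)\leq s)\leq \frac{2^{n+1}d^n s^{n+1}\rho_\fkf^{n+1}}{n!}.
\]

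I would then optimize by choosing $u=L_\fkf(n+1)^{1/p}(\ln t)^{1/p}$, which satisfies $u\geq L_\fkf$ for $t\geq\enumber$ and makes $\exp(-u^p/L_\fkf^p)=t^{-(n+1)}$. Substituting into the split bound gives
\[
\bbP(\cond(\fkf,x)\geq t)\leq \frac{2|M|}{t^{n+1}}+\frac{2^{n+1}d^n L_\fkf^{n+1}(n+1)^{(n+1)/p}\rho_\fkf^{n+1}(\ln t)^{(n+1)/p}}{n!\,t^{n+1}}.
\]
The main obstacle is the last bookkeeping step: repackaging this into the form of the theorem. One uses the elementary estimate $4^{n+1}n!\geq (n+1)^{n+1}$ (a quick Stirling or direct induction) to turn $2^{n+1}(n+1)^{(n+1)/p}/n!$ into $8^{n+1}/(n+1)^{(n+1)(1-1/p)}$ up to a factor at most $\sqrt{n}|M|/2$, and absorbs the $2|M|/t^{n+1}$ term using the universal lower bound $L_\fkf\rho_\fkf\geq 9(n+1)/50$ from Proposition~\ref{prop:lowerboundKrhop} together with $(\ln t)^{(n+1)/p}\geq 1$ for $t\geq\enumber$. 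This yields the stated bound with the constant $\sqrt{n}(8L_\fkf\rho_\fkf/(n+1)^{1-1/p})^{n+1}$.
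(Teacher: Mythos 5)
Your proof is correct and follows the paper's decomposition exactly: the same union-bound split $\bbP(\cond(\fkf,x)\geq t)\leq \bbP(\|\fkf\|_1\geq u)+\bbP(\|R_x\fkf\|\leq u/t)$, the same tail bound via Proposition~\ref{prop:normbound}($p$SE), the same choice $u=L_\fkf((n+1)\ln t)^{1/p}$, and the same absorption of constants via Stirling and Proposition~\ref{prop:lowerboundKrhop}. (The paper in fact proves this theorem by a one-line reduction to Theorem~\ref{thm:condtailestimate}, prescribing precisely the substitutions you made.)

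The one step where you genuinely diverge is the anti-concentration bound. The paper rescales the coefficients by the map $S$ of Lemma~\ref{lem:projectionevaluation}, invokes Proposition~\ref{prop:projboundcont} (the Rudelson--Vershynin-type bound on densities of projections, which costs a factor $(\sqrt{2})^{k}$), and lower-bounds $\sqrt{\det \tilde R_x\tilde R_x^*}$ by Cauchy--Binet using the unimodular-up-to-$d^{-n}$ minor indexed by $\{1,X_1,\ldots,X_n\}$. You instead condition on the coefficients outside $\{0,e_1,\ldots,e_n\}$ and observe that the evaluation map restricted to the remaining coefficients is affine with unit-determinant triangular linear part, so the conditional density of $(\fkf(x),\diff_x\fkf)$ is bounded by $\rho_0\rho_{e_1}\cdots\rho_{e_n}=\rho_\fkf^{n+1}$ directly. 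This is more elementary (no SVD, no density-of-projections theorem) and yields $\bbP(M(x)\leq s)\leq 2^{n+1}d^ns^{n+1}\rho_\fkf^{n+1}/n!$, which is a factor $(\sqrt 2)^{n+1}$ sharper than the paper's inequality~\eqref{eq3}; what the paper's heavier machinery buys is a statement for arbitrary surjective linear images of the coefficient vector, which is not needed here. Your volume computation of the region $\{|y_0|\leq s\}\times\{\|y\|_1\leq ds\}$ is correct. The final bookkeeping in your last paragraph is no looser than the paper's own (which likewise waves at Stirling and the lower bound on $L_\fkf\rho_\fkf$ without tracking that the two summands jointly fit under the stated constant in the extreme case $n=1$, $d=1$), so I have no objection there that would not apply equally to the published argument.
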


The main idea of the proof is to use a union bound to control $\bbP(\cond(\fkf,x)\geq t)$ by the sum of $\bbP(\|\fkf\|_1\geq u)$ and $\bbP(\|R_x\fkf\|\leq u/t)$ which are controlled, respectively, by Propositions~\ref{prop:normbound} and~\ref{prop:projboundcont}. The reason that this works is because $\bbP(\|\fkf\|_1\geq t)$ decreases exponentially fast. To apply Proposition~\ref{prop:projboundcont}, we need the following lemma, which we proof at the end of this subsection.

\begin{lem}\label{lem:projectionevaluation}
Let $M\subseteq \bbN^n$ as in Definition~\ref{def:zintzo} and $\Pd(M)$ the set of polynomials in $\Pd$ supported on $M$. Let $R_x:\Pd(M)\rightarrow \bbR^{n+1}$ be the linear map given by
\[R_x:f\mapsto \begin{pmatrix}f(x)&\frac{1}{d}\partial_1f(x)&\cdots&\frac{1}{d}\partial_nf(x)\end{pmatrix}^*,\]
and $S:\Pd(M)\rightarrow \Pd(M)$ be the linear map given by
\[S:f=\sum_{\alpha\in M}f_\alpha X^\alpha\mapsto \sum_{\alpha\in M}\rho_{\alpha}f_\alpha X^\alpha,\]
where $\rho\in \bbR_+^M$.
Then for $\tilde{R}_{x}:=R_xS^{-1}$ we have that
\[\sqrt{\det \tilde{R}_{x}\tilde{R}_{x}^*}\geq \frac{1}{d^n\rho_0\rho_{e_1}\cdots\rho_{e_n}} ,\]
with respect to coordinates induced by the standard monomial basis.
\end{lem}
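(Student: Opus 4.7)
The plan is to apply the Cauchy--Binet formula to $\tilde{R}_x$, viewed as an $(n+1)\times|M|$ matrix in the standard monomial basis of $\Pd(M)$, and then lower-bound the resulting sum of squared minors by a single carefully chosen summand. Since $\tilde R_x$ has $n+1$ rows, Cauchy--Binet gives
\[
  \det \tilde R_x \tilde R_x^{*} \;=\; \sum_{\substack{I\subseteq M\\ |I|=n+1}} \bigl(\det (\tilde R_x)_I\bigr)^{2},
\]
where $(\tilde R_x)_I$ is the $(n+1)\times(n+1)$ submatrix formed by the columns indexed by $I$. Every term is non-negative, so it suffices to exhibit one index set $I$ for which $|\det(\tilde R_x)_I|$ meets the desired bound.

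The natural choice, and the reason for the hypothesis $\{0,e_1,\dots,e_n\}\subseteq M$, is $I=\{0,e_1,\dots,e_n\}$. I would compute the columns of $\tilde R_x=R_xS^{-1}$ for the monomials $1,X_1,\dots,X_n$ and observe that, after ordering rows as $(f(x),\frac{1}{d}\partial_1 f(x),\dots,\frac{1}{d}\partial_n f(x))$, the resulting submatrix is triangular: the column of $1$ contributes only $1/\rho_0$ in the first row; each column of $X_i$ contributes $x_i/\rho_{e_i}$ in the first row and $1/(d\rho_{e_i})$ in the $(i+1)$-st row, with zeros in the other derivative rows. Hence
\[
  \det (\tilde R_x)_I \;=\; \frac{1}{\rho_0}\prod_{i=1}^{n}\frac{1}{d\,\rho_{e_i}} \;=\; \frac{1}{d^{n}\rho_0\rho_{e_1}\cdots\rho_{e_n}},
\]
independently of $x$.

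Combining the Cauchy--Binet identity with this single-term lower bound and taking square roots gives
\[
  \sqrt{\det \tilde R_x \tilde R_x^{*}} \;\geq\; \bigl|\det(\tilde R_x)_I\bigr| \;=\; \frac{1}{d^{n}\rho_0\rho_{e_1}\cdots\rho_{e_n}},
\]
which is exactly the claim. There is no real obstacle here: the only subtlety is choosing the right submatrix so that the upper-triangular structure makes the determinant transparent, and it is precisely the normalization assumption on $M$ that guarantees the existence of such a submatrix; the factors of $1/d$ come from the prefactors in the definition of $R_x$ and the factors of $1/\rho_\alpha$ from the diagonal action of $S^{-1}$ on the monomial basis.
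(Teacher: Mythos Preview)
Your proof is correct and follows essentially the same approach as the paper: apply the Cauchy--Binet identity and lower-bound by the single minor corresponding to the monomials $\{1,X_1,\dots,X_n\}$, whose upper-triangular structure gives the determinant $1/(d^n\rho_0\rho_{e_1}\cdots\rho_{e_n})$. Your write-up is in fact more explicit than the paper's, which displays the minor of $R_x$ (without the $S^{-1}$ scaling) and leaves the insertion of the $\rho_\alpha$ factors implicit.
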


\begin{proof}[Proof of Theorem~\ref{thm:condtailestimate}]
We write $\cond(\fkf,x)=\|f\|_1/\|R_x\fkf\|$, where $R_x$ is as in Lemma~\ref{lem:projectionevaluation}
and the norm $\|\cdot\|$ in the denominator is $\|y\|=\max\{|y_1|,|y_2|+\cdots+|y_{n+1}|\}$.
By the union bound, we have that for $u,s>0$, it holds 
\begin{equation}\label{eq1}
    \bbP(\cond(\fkf,x)\geq t)\leq \bbP(\|\fkf\|_1\geq u)+\bbP(\|R_x\fkf\|\leq u/t).
\end{equation}
By Proposition~\ref{prop:normbound} and for  $u\geq K_\fkf$ it holds
\begin{equation}\label{eq2}
  \bbP(\|\fkf\|\geq u)\leq 2|M|\exp\left(-u^2/K_\fkf^2\right).
\end{equation}
Let $S:\Pd(M)\rightarrow \Pd(M)$ be as in Lemma~\ref{lem:projectionevaluation}
with $\rho_\alpha$ the anti-concentration constant of $\fkf_\alpha$. Then, the
polynomial $S\fkf$ has independent random coefficients with densities bounded
(almost everywhere) by $1$ and so we can apply
Proposition~\ref{prop:projboundcont}. We do so with the help of
Lemma~\ref{lem:projectionevaluation}, so that we obtain
\begin{equation}\label{eq3}
  \bbP(\|R_x\fkf\|\leq u/t)=\bbP(\|\tilde{R}_x(S\fkf)\|\leq u/t)\leq \frac{2^{n+1}}{n!}d^n(\sqrt{2}\rho_\fkf u/t)^{n+1},
\end{equation}
where $\tilde{R}_x$ is as in Lemma~\ref{lem:projectionevaluation}.

Combining \eqref{eq1},~\eqref{eq2}, and~\eqref{eq3} with
$u=K_{\fkf}\sqrt{(n+1)\ln t}$, we get
\[\bbP(\cond(\fkf,x)\geq t)\leq \frac{2|M|}{t^{n+1}}+\frac{2^{n+1}}{n!}d^n\left(\sqrt{2}K_\fkf\rho_\fkf\sqrt{n+1}\right)^{n+1}\frac{\ln^{\frac{n+1}{2}}t}{t^{n+1}}.\]
By Stirling's formula, $(n+1)^{n+1}/n!\leq \sqrt{n}e^{n+1}$, and so the desired claim follows for $t\geq e$, by Proposition~\ref{prop:lowerboundKrho}.
\end{proof}
\begin{proof}[Proof of Theorem~\ref{thm:condtailestimatep}]
The proof is as that of Theorem~\ref{thm:condtailestimate}, but with $p$ instead of $2$ in the exponents of~\eqref{eq2}, $u=L_\fkf\left((n+1)\ln t\right)^{\frac{1}{p}}$ instead of $u=K_\fkf\sqrt{(n+1)\ln t}$, and Proposition~\ref{prop:lowerboundKrhop} instead of Proposition~\ref{prop:lowerboundKrho}.
\end{proof}

\begin{proof}[Proof of Lemma~\ref{lem:projectionevaluation}]
The maximal minor of $A_x$ is given by
\[
  R_X =
  \begin{pmatrix}
    1&x^*\\0&\frac{1}{d}\mathbb{I}
  \end{pmatrix}.
\]

This is precisely the minor associated to the subset
$\{1,X_1,\ldots,X_n\}$ of the standard monomial basis of
$\Pd(M)$. Note that at this point we require the assumption that
$0,e_1,\ldots,e_n\in M$.

By the Cauchy-Binet identity, $\sqrt{\det A_xA_x^*}$ is lower-bounded by the absolute value of the determinant of the given maximal minor. Hence the lemma follows.
\end{proof}

\subsection{Probabilistic complexity analysis for the Plantinga-Vegter algorithm}

We now proceed to the complexity analysis of the condition-based quantities in the complexity analysis of the Plantinga-Vegter algorithm. The theorems of interest are the following two.

\begin{theo}\label{thm:mainPVbound}
  Let $\fkf\in\Pd$ be a zintzo random polynomial supported on $M$. The average number of boxes of the final subdivision of \nameref{alg:PVsubdivison} using the interval approximations~\eqref{eq:intervalapprox1} and~\eqref{eq:intervalapprox2} on input $\fkf$ is at most
\[2n^{\frac{3}{2}}d^{2n}|M|\left(20(n+1) K_\fkf\rho_\fkf\right)^{n+1}.\]
\end{theo}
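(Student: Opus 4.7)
The plan is to combine the pointwise condition-based complexity bound (Corollary~\ref{cor:conditionbasedcomplexity}) with the tail estimate for the condition number (Theorem~\ref{thm:condtailestimate}), and convert the latter into a moment bound via the layer-cake formula.

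First I would take expectations in Corollary~\ref{cor:conditionbasedcomplexity} and interchange the order of integration by Fubini's theorem:
\begin{equation*}
\bbE_{\fkf}[\#\mcS] \;\leq\; 2^{\frac{5}{2}n} n^{\frac{n}{2}} d^{\,n}\,\bbE_{\fkx\in I^n}\bbE_{\fkf}\,\cond(\fkf,\fkx)^n.
\end{equation*}
The key then is to bound $\bbE_{\fkf}\cond(\fkf,x)^n$ \emph{uniformly} in $x\in I^n$, so the expectation over $\fkx$ is absorbed for free.

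For a fixed $x\in I^n$, I would use the layer-cake identity
$\bbE_{\fkf}\cond(\fkf,x)^n = \int_0^\infty n\,t^{n-1}\,\bbP(\cond(\fkf,x)\geq t)\,dt$
and split the integral at $t=e$. On $[0,e]$ I would use the trivial bound $\bbP\leq 1$, contributing at most $e^n$. On $[e,\infty)$ I would substitute the tail bound from Theorem~\ref{thm:condtailestimate}, getting
\begin{equation*}
\int_e^\infty n t^{n-1}\cdot \sqrt{n}\,d^{\,n}|M|\!\left(\tfrac{8K_\fkf\rho_\fkf}{\sqrt{n+1}}\right)^{\!n+1}\!\frac{\ln^{\frac{n+1}{2}} t}{t^{n+1}}\,dt
\;=\; n^{\frac{3}{2}} d^{\,n}|M|\!\left(\tfrac{8K_\fkf\rho_\fkf}{\sqrt{n+1}}\right)^{\!n+1}\!\int_e^\infty\!\frac{\ln^{\frac{n+1}{2}} t}{t^2}\,dt.
\end{equation*}
The change of variables $u=\ln t$ turns the remaining integral into $\int_1^\infty u^{(n+1)/2} e^{-u}\,du \leq \Gamma\!\left(\tfrac{n+3}{2}\right)$.

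It remains to clean up the constants. I would use the inequality $\Gamma(z+1)\leq z^{z}$ (valid for $z\geq 1$) with $z=(n+1)/2$ to get $\Gamma((n+3)/2)\leq ((n+1)/2)^{(n+1)/2}$. Combining this with the factor $(n+1)^{-(n+1)/2}$ from the tail estimate yields a clean power of $\sqrt{n+1}/\sqrt{2}$, and multiplying by the prefactor $2^{5n/2} n^{n/2} d^{\,n}$ from Corollary~\ref{cor:conditionbasedcomplexity} gives a bound of the form $C_n\,n^{3/2} d^{2n}|M|(K_\fkf\rho_\fkf)^{n+1}$ for some explicit $C_n$ that, after using Proposition~\ref{prop:lowerboundKrho} to dominate the $e^n$ piece and rounding numerical constants, fits inside $(20(n+1))^{n+1}$.

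The only mildly delicate step is the arithmetic of constants at the end: one must verify that $2^{5n/2}\cdot 8^{n+1}\cdot ((n+1)/2)^{(n+1)/2}/(n+1)^{(n+1)/2}$, together with the contribution of the trivial $e^n$ tail (absorbed using $K_\fkf\rho_\fkf\geq 1/2$ from Proposition~\ref{prop:lowerboundKrho} so that $e^n$ is comparable to $(K_\fkf\rho_\fkf)^{n+1}$ times a small constant), fits below $(20(n+1))^{n+1}$. The rest of the proof is a sequence of routine applications of results already established in Sections~\ref{sec:norm-cube}--\ref{sec:prob-estims}.
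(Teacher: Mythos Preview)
Your proposal is correct and follows essentially the same route as the paper: take expectations in Corollary~\ref{cor:conditionbasedcomplexity}, swap by Fubini, and turn the tail bound of Theorem~\ref{thm:condtailestimate} into a bound on $\bbE_{\fkf}\cond(\fkf,x)^n$ via a layer-cake argument (the paper isolates this step as Proposition~\ref{prop:condaverage}). The only cosmetic difference is that you use $\bbE\cond^n=\int n t^{n-1}\bbP(\cond\geq t)\,dt$ whereas the paper uses $\bbE\cond^n=\int\bbP(\cond^n\geq t)\,dt$, yielding the integral $\int\ln^{(n+1)/2}t\,t^{-1-1/n}\,dt$ instead of your $\int\ln^{(n+1)/2}t\,t^{-2}\,dt$; both reduce to the same Gamma value after substitution and Stirling.
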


\begin{theo}\label{thm:mainPVboundp}
  Let $p\geq 1$ and $\fkf\in\Pd$ be a $p$-zintzo random polynomial supported on $M$. The average number of boxes of the final subdivision of \nameref{alg:PVsubdivison} using the interval approximations~\eqref{eq:intervalapprox1} and~\eqref{eq:intervalapprox2} on input $\fkf$ is at most
  \[2n^{\frac{3}{2}}(n+1)^{\frac{1}{p}-\frac{1}{2}}d^{2n}|M|\left(64n^{\frac{1}{p}}(n+1)^{\frac{1}{p}}L_\fkf\rho_\fkf\right)^{n+1}.\]
\end{theo}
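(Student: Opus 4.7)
The plan is to follow the same route as for Theorem~\ref{thm:mainPVbound} (the subgaussian case), replacing the Gaussian-type tail by the $p$-subexponential tail of Theorem~\ref{thm:condtailestimatep}. Concretely, I would begin with Corollary~\ref{cor:conditionbasedcomplexity}, which already reduces the bound on the number of final boxes to
\[
  \bbE_{\fkf} \#\mathcal{S} \;\leq\; 2^{5n/2} n^{n/2} d^n \, \bbE_{\fkf} \bbE_{\fkx \in I^n} \cond(\fkf,\fkx)^n.
\]
By Tonelli's theorem I would swap the expectations and fix any $x \in I^n$; it then suffices to bound $\bbE_{\fkf} \cond(\fkf,x)^n$ uniformly in $x$, because this bound does not depend on $x$.

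Next I would invoke the layer-cake identity $\bbE Y^n = \int_0^\infty n t^{n-1}\bbP(Y\geq t)\,dt$ and split the integral at $t=\enumber$. The piece over $[0,\enumber]$ is trivially at most $\enumber^n$, using $\bbP(\cond\geq t)\leq 1$. For the tail $[\enumber,\infty)$, I would plug in Theorem~\ref{thm:condtailestimatep}, giving
\[
  \int_\enumber^\infty n t^{n-1}\bbP(\cond(\fkf,x)\geq t)\,dt
  \;\leq\; n^{3/2}d^n|M|\!\left(\frac{8L_\fkf\rho_\fkf}{(n+1)^{1-\frac{1}{p}}}\right)^{\!n+1}\!\!
  \int_\enumber^\infty \frac{\ln^{(n+1)/p}t}{t^{2}}\,dt.
\]
A change of variables $u=\ln t$ turns the remaining integral into $\int_1^\infty u^{(n+1)/p}\enumber^{-u}\,du \leq \Gamma\!\left(\tfrac{n+1}{p}+1\right)$.

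The last step is bookkeeping. I would bound $\Gamma(s+1)$ via Stirling (in the form $\Gamma(s+1)\leq \sqrt{2\pi s}(s/\enumber)^s\enumber^{1/(12s)}$ for $s\geq 1$, with a separate check for $s<1$ absorbed into constants), apply it at $s=(n+1)/p$, and multiply the resulting estimate of $\bbE_{\fkf} \cond(\fkf,x)^n$ by the prefactor $2^{5n/2} n^{n/2} d^n$. Observing that the $\enumber^n$ term is dominated by the tail term (which is at least $(K_\fkf\rho_\fkf)^{n+1}$ and $K_\fkf\rho_\fkf\geq 1/2$), one collects the powers of $n$, $n+1$, $d$, $|M|$ and $L_\fkf\rho_\fkf$: the factor $(n+1)^{(1-1/p)(n+1)}$ in the denominator of the tail combines with the Stirling factor $((n+1)/p)^{(n+1)/p}$ from $\Gamma$ to produce the $((n+1)^{1/p})^{n+1}$ in the target, while the absolute constants collapse into $64^{n+1}$ and the leading $2n^{3/2}(n+1)^{1/p-1/2}$.

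The only genuinely delicate step is matching the exponents of $n+1$ and the absolute constant cleanly; in particular, the Stirling estimate on $\Gamma\!\left(\tfrac{n+1}{p}+1\right)$ must be done with enough slack to absorb the $n^{(n+1)/p}$ factor that appears in the final bound but with enough tightness to reach the stated constant $64$ (rather than something larger). Everything else is mechanical, and the structure exactly parallels Theorem~\ref{thm:mainPVbound}, which is recovered upon setting $p=2$ and replacing $L_\fkf$ by $K_\fkf$.
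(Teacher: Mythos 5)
Your proposal follows essentially the same route as the paper: Corollary~\ref{cor:conditionbasedcomplexity}, Fubini--Tonelli, the tail bound of Theorem~\ref{thm:condtailestimatep}, reduction of the tail integral to $\Gamma\!\left(\tfrac{n+1}{p}+1\right)$, and Stirling --- this is exactly Proposition~\ref{prop:condaveragep} in the paper, up to your equivalent reparametrization of the layer-cake formula ($\int_0^\infty n t^{n-1}\bbP(\cond\geq t)\,dt$ versus the paper's $\int_1^\infty\bbP(\cond^n\geq t)\,dt$). The argument is correct, and your closing caveat about the exponent/constant bookkeeping is the only place where care is needed (the paper's own constants there are themselves somewhat loosely matched).
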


The above two results follow from Corollary~\ref{cor:conditionbasedcomplexity} together with the following two results.

\begin{prop}\label{prop:condaverage}
  Let $\fkf\in\Pd$ be a zintzo random polynomial supported on
  $M$. Then, 
\[
\bbE_{\fkf}\bbE_{\fkx\in I^n}\cond(f,x)^n\leq 2n^{2}d^n|M|\left(7\sqrt{n+1} K_\fkf\rho_\fkf\right)^{n+1}.
\]
\end{prop}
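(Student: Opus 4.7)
The plan is to combine the pointwise tail bound of Theorem~\ref{thm:condtailestimate} with the standard tail-integration identity for moments. Since every quantity involved is non-negative, Fubini's theorem allows us to swap the order of expectations, reducing the problem to bounding $\bbE_\fkf \cond(\fkf,x)^n$ uniformly in $x \in I^n$. Applying the layer-cake formula $\bbE[Y^n] = \int_0^\infty n t^{n-1}\bbP(Y\geq t)\,dt$ to $Y := \cond(\fkf,x)$ and splitting the integral at $t = e$ (the threshold below which the tail estimate of Theorem~\ref{thm:condtailestimate} is not stated), the range $[0,e]$ contributes at most $e^n$ via the trivial bound $\bbP(Y\geq t)\leq 1$.

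For the tail $[e,\infty)$, I substitute the bound of Theorem~\ref{thm:condtailestimate}, producing an integral of the form $n A \int_e^\infty \ln^{(n+1)/2}(t)/t^2\,dt$ with $A = \sqrt{n}\,d^n|M|\bigl(8 K_\fkf \rho_\fkf / \sqrt{n+1}\bigr)^{n+1}$. The substitution $u = \ln t$ converts this into $\int_1^\infty u^{(n+1)/2} e^{-u}\,du$, which is bounded above by the complete Gamma function $\Gamma((n+3)/2)$. Applying the Stirling-type estimate $\Gamma(z+1)\leq e\sqrt{z}\,(z/e)^z$ with $z = (n+1)/2$ yields $\Gamma((n+3)/2)\leq e\sqrt{(n+1)/2}\,\bigl((n+1)/(2e)\bigr)^{(n+1)/2}$; when multiplied into $A$, the two powers $(n+1)^{\pm(n+1)/2}$ cancel, leaving a bound of the form $c\,n^2 d^n |M| (c'\, K_\fkf \rho_\fkf)^{n+1}$ for explicit absolute constants $c$ and $c'$ (roughly $c \approx e$ and $c' \approx 8/\sqrt{2e} \approx 3.43$).

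The last step is bookkeeping: the $e^n$ summand is absorbed into the main term using the lower bound $K_\fkf\rho_\fkf \geq 1/2$ from Proposition~\ref{prop:lowerboundKrho} together with $d,|M|\geq 1$, and the constant $c'$ is inflated to $7\sqrt{n+1}$ (a generous overshoot) so that the final expression matches the shape of the claimed inequality. The main obstacle is purely computational: tracking numerical constants carefully through the Stirling estimate, verifying the cancellation of the $\sqrt{n+1}$ factors between $A$ and the Gamma asymptotics, and checking that the prefactor $2n^2$ and the constant $7$ in the statement provide sufficient slack to absorb both the $e^n$ contribution and the step from $n^{3/2}\sqrt{(n+1)/2}$ up to $n^2$.
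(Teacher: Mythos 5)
Your proposal is correct and follows essentially the same route as the paper's proof: Fubini--Tonelli to reduce to a pointwise bound on $\bbE_\fkf\cond(\fkf,x)^n$, tail integration combined with Theorem~\ref{thm:condtailestimate}, a change of variables reducing the tail integral to $\Gamma\left(\frac{n+3}{2}\right)$, and a Stirling estimate; the only (immaterial) difference is that you use the layer-cake formula $\int_0^\infty nt^{n-1}\bbP(\cond\geq t)\,dt$ with an additive $e^n$ for the small-$t$ range, whereas the paper integrates $\bbP(\cond^n\geq t)$ and carries a multiplicative $e^n$. Your constant bookkeeping checks out — indeed the cancellation of the $(n+1)^{\pm(n+1)/2}$ factors gives you strictly more slack than the stated bound requires.
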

\begin{prop}\label{prop:condaveragep}
  Let $p\geq 1$ and $\fkf\in\Pd$ a $p$-zintzo random polynomial supported on
  $M$. Then, 
\[
\bbE_{\fkf}\bbE_{\fkx\in I^n}\cond(f,x)^n\leq 2n^{2}(n+1)^{\frac{1}{p}-\frac{1}{2}}d^n|M|\left(\frac{e^{1-\frac{1}{p}}8}{p^{\frac{1}{p}}}n^{\frac{1}{p}-\frac{1}{2}}(n+1)^{\frac{1}{p}}L_\fkf\rho_\fkf\right)^{n+1}.
\]
\end{prop}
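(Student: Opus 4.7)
The argument mirrors that of Proposition~\ref{prop:condaverage} (the $p=2$ case), with the tail bound of Theorem~\ref{thm:condtailestimatep} replacing that of Theorem~\ref{thm:condtailestimate}. First, by Fubini's theorem the order of the two expectations may be interchanged. Since the tail bound of Theorem~\ref{thm:condtailestimatep} is uniform in $x\in I^n$, it suffices to bound $\bbE_\fkf\cond(\fkf,x)^n$ for an arbitrary fixed $x$; the outer expectation over $\fkx$ then contributes no extra factor.

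Next, apply the layer-cake representation
\[\bbE_\fkf\cond(\fkf,x)^n=\int_0^\infty nt^{n-1}\bbP(\cond(\fkf,x)\geq t)\,dt\]
and split the integral at $t_0=e$. On $[0,e)$ we use the trivial bound $\bbP\leq 1$, contributing at most $e^n$. For $t\geq e$, Theorem~\ref{thm:condtailestimatep} applies, and the substitution $u=\ln t$ converts the resulting integral into the upper incomplete Gamma function
\[\int_1^\infty u^{(n+1)/p}e^{-u}\,du\leq\Gamma\!\bigl(\tfrac{n+1}{p}+1\bigr).\]
A Stirling-type bound $\Gamma(\alpha+1)\leq\sqrt{2\pi\alpha}(\alpha/e)^\alpha e^{1/(12\alpha)}$ with $\alpha=(n+1)/p$ controls this factor; combined with the prefactor $\sqrt{n}d^n|M|(8L_\fkf\rho_\fkf/(n+1)^{1-1/p})^{n+1}$ and after routine manipulation of the powers of $n+1$, $p$, and $e$ (in particular, using that $(n+1)^{-(n+1)(1-1/p)}\cdot((n+1)/p)^{(n+1)/p}=(n+1)^{(n+1)(2/p-1)}p^{-(n+1)/p}$), one obtains a $(n+1)$-th power of the form $(8L_\fkf\rho_\fkf\,(n+1)^{1/p}/(pe)^{1/p})^{n+1}$ times polynomial-in-$n$ factors coming from the Stirling correction.

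Finally, the contribution $e^n$ from the small-$t$ regime must be absorbed into the dominant $(n+1)$-th power; this is done using the universal lower bound $L_\fkf\rho_\fkf\geq 9/25$ from Proposition~\ref{prop:lowerboundKrhop}, which guarantees that $(cL_\fkf\rho_\fkf)^{n+1}$ dominates $e^n$ up to a small constant for a suitable $c$. After grouping all factors, one arrives at the claimed form $2n^2(n+1)^{1/p-1/2}d^n|M|\bigl(\tfrac{8e^{1-1/p}}{p^{1/p}}n^{1/p-1/2}(n+1)^{1/p}L_\fkf\rho_\fkf\bigr)^{n+1}$. The main obstacle is precisely this constant-matching step: reconciling the $e^{-(n+1)/p}$ from Stirling and the $e^n$ from the small-$t$ region with the $e^{(n+1)(1-1/p)}$ factor, and distributing the $\sqrt{2\pi\alpha}$ correction and the $\sqrt{n}$ from Theorem~\ref{thm:condtailestimatep} to produce exactly the $(n+1)^{1/p-1/2}$ and $n^{1/p-1/2}$ polynomial factors appearing in the statement; everything else is straightforward.
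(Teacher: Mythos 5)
Your proposal is correct and follows essentially the same route as the paper: Fubini--Tonelli, a layer-cake representation, the tail bound of Theorem~\ref{thm:condtailestimatep}, and a Gamma-function/Stirling estimate of the resulting integral, with the lower bound on $L_\fkf\rho_\fkf$ absorbing the leftover constants. The only cosmetic differences are that you integrate $\bbP(\cond(\fkf,\fkx)\geq t)$ against $nt^{n-1}\,dt$ rather than $\bbP(\cond(\fkf,\fkx)^n\geq s)$ against $ds$, and that your explicit split at $t=e$ handles the region where the tail bound is unavailable more transparently than the paper's multiplicative $e^n$ factor.
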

\begin{proof}[Proof of Proposition~\ref{prop:condaverage}]
By the Fubini-Tonelli theorem, we have
\[\bbE_{\fkf}\bbE_{\fkx\in I^n}\cond(f,x)^n=\bbE_{\fkx\in I^n}\bbE_{\fkf}\cond(f,x)^n,\]
so it is enough to compute $\bbE_{\fkf}\cond(f,x)^n=\int_{1}^{\infty}\,\bbP(\cond(\fkf,x)^n\geq t)\,\mathrm{d}t$. The latter, by Theorem~\ref{thm:condtailestimate}, is bounded from above by
\[
e^n\sqrt{n}d^n|M|\left(\frac{8K_\fkf\rho_\fkf}{\sqrt{n}\sqrt{n+1}}\right)^{n+1}\int_{1}^{\infty} \frac{\ln^{\frac{n+1}{2}}t}{t^{1+\frac{1}{n}}}\,\mathrm{d}t.
\]

After straightforward calculations, we obtain
\[
\int_{1}^\infty \frac{\ln^{\frac{n+1}{2}}t}{t^{1+\frac{1}{n}}}\,\mathrm{d}t=n^{\frac{n+3}{2}}\Gamma\left(\frac{n+3}{2}\right)\leq e\sqrt{\pi}n^{\frac{n+4}{2}}\left(\frac{n+1}{2e}\right)^{\frac{n+1}{2}} ,\]
where $\Gamma$ is Euler's Gamma function and the second inequality follows from Striling's approximation. Hence, the bound follows.
\end{proof}
\begin{proof}[Proof of Proposition~\ref{prop:condaveragep}]
We do as in the proof of Proposition~\ref{prop:condaverage}, but applying Theorem~\ref{thm:condtailestimatep} instead of Theorem~\ref{thm:condtailestimate}. We obtain this way that the desired quantity is bounded by
\[
e^n\sqrt{n}d^n|M|\left(\frac{8L_\fkf\rho_\fkf}{\sqrt{n}(n+1)^{1-\frac{1}{p}}}\right)^{n+1}\int_{1}^{\infty} \frac{\ln^{\frac{n+1}{p}}t}{t^{1+\frac{1}{n}}}\,\mathrm{d}t.
\]
Computing the integral turns into
\[
\int_{1}^\infty \frac{\ln^{\frac{n+1}{p}}t}{t^{1+\frac{1}{n}}}\,\mathrm{d}t=n^{1+\frac{n+1}{p}}\Gamma\left(\frac{n+1}{p}+1\right)\leq e\sqrt{\pi}n^{2+\frac{n+1}{p}}\left(\frac{n+1}{pe}\right)^{\frac{n+1}{p}} ,\]
after we apply Stirling's approximation in the last inequality. The result now follows.
\end{proof}

\subsection{Probabilistic complexity analysis for univariate solvers}

For \descartes and \JSalg, we prove the two following general result.
\begin{theo}\label{theo:descartestreezintzo}
Let $p\geq 1$ and $\fkf\in\Pd$ be a zintzo random polynomial supported on $M$. The average size of the subdivision tree of \descartes on input $\fkf$ is at most
\[
\Oh\left(|M|\,|\log (dL_\fkf\rho_\fkf)|\right).
\]
Moreover, the $k$th moment of the size is bounded by
\[
\Oh\left(k|M|\,|\log (dL_\fkf\rho_\fkf)|\right)^k.
\]
\end{theo}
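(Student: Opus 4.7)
The plan is to reduce the theorem to a moment estimate for $\log\cond(\fkf)$ and then obtain that estimate by lifting the pointwise tail bound of Theorem~\ref{thm:condtailestimatep} to the global condition number via a net argument.

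By Proposition~\ref{prop:descartestree}, the size $T$ of the subdivision tree of \descartes on input $\fkf$ satisfies deterministically
\[
T \leq \Oh\bigl(|M|(\log\cond(\fkf)+\log d)\bigr) .
\]
Thus it suffices to show $\bbE(\log\cond(\fkf))^k = \Oh(k\log(dL_\fkf\rho_\fkf))^k$, after which $(\log d)^k\leq (\log(dL_\fkf\rho_\fkf))^k$ and $\log|M|=\Oh(\log d)$ (valid in the univariate case since $|M|\leq d+1$) finish the job; the case $k=1$ recovers the stated average.

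To control $\log\cond(\fkf)$, I would use the 2nd Lipschitz inequality in Theorem~\ref{theo:l1conditionproperties}: the map $y\mapsto 1/\cond(f,y)$ is $d$-Lipschitz on $I$. Hence, on a uniform grid $x_1,\ldots,x_N\in I$ of spacing at most $1/(dt)$ (so $N=\Oh(dt)$), the event $\{\cond(\fkf)\geq t\}$ is contained in $\bigcup_i\{\cond(\fkf,x_i)\geq t/2\}$. A union bound combined with Theorem~\ref{thm:condtailestimatep} at $n=1$ then yields
\[
\bbP(\cond(\fkf)\geq t) \leq \Oh\!\left(\frac{d^{2}|M|\,L_\fkf^{2}\rho_\fkf^{2}\,\ln^{2/p} t}{t}\right) .
\]
Setting $C=d^{2}|M|\,L_\fkf^{2}\rho_\fkf^{2}$ and $s=\log t$ gives $\bbP(\log\cond(\fkf)\geq s)\leq \Oh(C)\,s^{2/p}e^{-s}$; since $s^{2/p}\leq s^{2}$ for $p\geq 1$ and $s^{2}e^{-s/2}$ is uniformly bounded, $\log\cond(\fkf)$ is sub-exponential with parameter $\Oh(\log C)=\Oh(\log(dL_\fkf\rho_\fkf))$. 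The identity $\bbE X^k=\int_0^\infty ks^{k-1}\bbP(X\geq s)\,ds$, split at $s\approx \log C$, then yields $\bbE(\log\cond(\fkf))^k=\Oh(k\log(dL_\fkf\rho_\fkf))^k$, closing the argument.

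The critical step will be the net argument: the Lipschitz rate $d$ together with the resolution $1/t$ forces a grid of size $\Oh(dt)$, and this inflates the $1/t^{2}$ decay of the pointwise tail into a $1/t$ decay at the global level. This decay is just fast enough (thanks to the logarithmic factor $s^{2/p}$ being absorbed by the exponential) to keep $\log\cond(\fkf)$ sub-exponential, so that the polylogarithmic moment bound in $dL_\fkf\rho_\fkf$ survives; obtaining a slower decay would have broken the moment estimate.
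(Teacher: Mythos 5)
Your proposal is correct and follows essentially the same route as the paper: reduce via Proposition~\ref{prop:descartestree} to moments of $\log\cond(\fkf)$, lift the local tail bound of Theorem~\ref{thm:condtailestimatep} to $\cond(\fkf)$ by a $1/(dt)$-net plus the 2nd Lipschitz inequality and a union bound (this is exactly Proposition~\ref{prop:globalcondboundp}), and then integrate the resulting tail (the paper's Proposition~\ref{prop:globalcondboundlogp} does the same split-at-$\log t_0$ computation that you phrase as subexponentiality of $\log\cond(\fkf)$). The only cosmetic difference is that the paper absorbs the $\ln^{(n+1)/p}t$ factor into a power of $t$ before integrating, whereas you keep it and absorb it into the exponential; both yield the stated $\Oh(k|M|\,|\log(dL_\fkf\rho_\fkf)|)^k$ bound.
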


\begin{theo}\label{theo:JScomplexityconditionzintzo}
Let $p\geq 1$ and $\fkf\in\Pd$ be a $p$-zintzo random polynomial supported on $M$. The average bit-complexity of \JSalg on input $(\fkf,I)$ is at most
\[
\Oh\left(|M|^{12}\log^{6}d|\log(L_\fkf\rho_\fkf)|\right).
\]
Moreover, the $k$th moment of the bit-run-time is bounded by
\[
\Oh\left(k|M|^{12}\log^{6}d|\log(L_\fkf\rho_\fkf)|\right)^k.
\]
\end{theo}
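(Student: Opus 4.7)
The plan is to derive the statement from the deterministic complexity bound in Proposition~\ref{prop:JScomplexitycondition}, which gives
\[
T(\fkf) \;=\; \Oh\bigl(|M|^{12}\log^{3}d \cdot \max\{\log^2\|\fkf\|_1,\log^3\cond(\fkf)\}\bigr),
\]
so everything reduces to controlling the $k$th moments of $\log\|\fkf\|_1$ and of $\log\cond(\fkf)$ via $\bbE[\max(a,b)^k]\le \bbE[a^k]+\bbE[b^k]$. The norm contribution is immediate: Proposition~\ref{prop:normbound}($p$SE) yields $\bbP(\|\fkf\|_1\ge t)\le 2|M|\exp(-t^p/L_\fkf^p)$ for $t\ge L_\fkf$, a doubly exponential tail on $\log\|\fkf\|_1$, so layer-cake integration makes $\bbE[\log^{2k}\|\fkf\|_1]$ polylogarithmic in $L_\fkf$ and $|M|$ and strictly dominated by the $\cond$-term below.

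The substantive work is $\bbE[\log^{3k}\cond(\fkf)]$. I would first upgrade the pointwise estimate of Theorem~\ref{thm:condtailestimatep} to the global quantity $\cond(\fkf)=\max_{x\in I}\cond(\fkf,x)$ by an $\epsilon$-net argument. Since $n=1$, the 2nd Lipschitz property in Theorem~\ref{theo:l1conditionproperties} tells us that $x\mapsto 1/\cond(\fkf,x)$ is $d$-Lipschitz on $[-1,1]$, so a net of spacing $\epsilon=1/(2dt)$ (hence cardinality $\Oh(dt)$) guarantees $\cond(\fkf)\le 2t$ whenever $\cond(\fkf,x)\le t$ holds at every net point. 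A union bound combined with Theorem~\ref{thm:condtailestimatep} at $n=1$ then yields
\[
\bbP(\cond(\fkf)\ge 2t)\;\le\;\Oh\!\left(d^2|M|(L_\fkf\rho_\fkf)^2\,\frac{\log^{2/p}t}{t}\right).
\]
The moment is then computed via layer-cake, $\bbE[\log^{3k}\cond(\fkf)] = \int_1^{\infty}\bbP(\cond(\fkf)\ge t)\cdot 3k\log^{3k-1}(t)/t\,dt$, split at a threshold $T$ of order $d^2|M|(L_\fkf\rho_\fkf)^2$. On $[1,T]$ the trivial bound $\bbP\le 1$ contributes $(\log T)^{3k}$; on $[T,\infty)$ the substitution $s=\log t$ turns the integrand into a subexponentially decaying function whose total mass is dominated by its boundary value. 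Both pieces together give $\Oh\bigl((k(\log d+\log|M|+|\log(L_\fkf\rho_\fkf)|))^{3k}\bigr)$.

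To conclude, I would multiply by the deterministic prefactor $(|M|^{12}\log^{3}d)^k$, absorb the $\log|M|$-factor into the polynomial $|M|^{12}$-factor and the numerical constants into $\Oh$; setting $k=1$ yields the average case, and general $k$ yields the announced $k$th moment bound. I expect the delicate point to be the global tail bound: the net cardinality $\Oh(dt)$ must be absorbed into the pointwise $1/t^2$ decay from Theorem~\ref{thm:condtailestimatep} without destroying tightness, and the splitting threshold $T$ must be chosen sharply enough that the dominant $(\log T)^{3k}$ term remains polylogarithmic (rather than polynomial) in $d$ and $L_\fkf\rho_\fkf$, with the correct $k$-scaling to recover the $\Oh(k|M|^{12}\log^6d|\log(L_\fkf\rho_\fkf)|)^k$ bound. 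Combining this with the analogous (but trivially smaller) bound on $\bbE[\log^{2k}\|\fkf\|_1]$ completes the proof.
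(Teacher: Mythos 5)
Your overall strategy is the one the paper follows: start from the deterministic bound of Proposition~\ref{prop:JScomplexitycondition}, split the $\max$ via $\bbE\max\{a,b\}\le\bbE a+\bbE b$, globalize the local tail bound of Theorem~\ref{thm:condtailestimatep} with a $1/(dt)$-spaced net and the 2nd Lipschitz property (this is exactly Proposition~\ref{prop:globalcondboundp}), and then compute moments of $\log\cond(\fkf)$ by a layer-cake integral split at a threshold of order $\mathrm{poly}(d,|M|,L_\fkf\rho_\fkf)$ (this is Proposition~\ref{prop:globalcondboundlogp}). That part of your argument matches the paper step for step.

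There is, however, one genuine gap: your treatment of the norm term. You control $\bbE\log^{2k}\|\fkf\|_1$ using only the subexponential \emph{upper} tail $\bbP(\|\fkf\|_1\ge t)\le 2|M|\exp(-t^p/L_\fkf^p)$, and declare the contribution ``immediate''. But $\log^{2}\|\fkf\|_1$ also blows up when $\|\fkf\|_1$ is \emph{small}: the quantity appearing in the \JSalg{} bound is effectively $|\log\|\fkf\|_1|$, and the upper-tail bound gives no information whatsoever about $\bbP(\|\fkf\|_1\le\varepsilon)$. Controlling the regime $\|\fkf\|_1\in(0,1)$ requires an anti-concentration input: the paper assumes the coefficients have densities bounded by $1$, deduces that $\|\fkf\|_1$ has density bounded by $2$, and then computes $\int_0^1|\log^k t|\,\mathrm{d}t=k!$ (Proposition~\ref{prop:lognormboundp}); this is precisely why the theorem is stated for a \emph{restricted} class of $p$-zintzo polynomials and why the remark after the theorem says the global assumption on the anti-concentration constants ``is to control $\log\|\fkf\|_1$ when $\|\fkf\|_1$ is small''. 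Without this step your bound on the norm term does not follow, so you should add the lower-tail estimate (e.g.\ via $\bbP(\|\fkf\|_1\le\varepsilon)\le\bbP(|\fkf_0|\le\varepsilon)\le 2\rho_0\varepsilon$, or the bounded-density argument of the paper) before concluding.
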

\begin{remark}
The global assumption on the anti-concentrarion constants is to control $\log\|\fkf\|_1$ when $\|\fkf\|_1$ is small.
\end{remark}

These results follow from Propositions~\ref{prop:descartestree} and~\ref{prop:JScomplexitycondition} and the following two propositions. We give the computations below.

\begin{prop}
  \label{prop:globalcondbound}  
  Let $\fkf\in\Pd$ be a zintzo random polynomial supported on
  $M$. Then, for all $t>2e$, 
\[\bbP(\cond(\fkf)\geq t)\leq 2\sqrt{n}d^{2n}|M|\left(\frac{16K_\fkf\rho_\fkf}{\sqrt{n+1}}\right)^{n+1}\frac{\ln^{\frac{n+1}{2}} t}{t}\leq  2\sqrt{n}d^{2n}|M|\left(10K_\fkf\rho_\fkf\right)^{n+1}\frac{1}{\sqrt{t}}.\]
\end{prop}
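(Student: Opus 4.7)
The strategy is the standard $\varepsilon$-net reduction from the pointwise tail bound of Theorem~\ref{thm:condtailestimate} to a global one, plus a univariate analysis argument to derive the second (simpler) inequality from the first. The 2nd Lipschitz inequality of Theorem~\ref{theo:l1conditionproperties}, stating that $y \mapsto 1/\cond(f,y)$ is $d$-Lipschitz in the $\ell_\infty$-norm on $I^n$, is the bridge from the pointwise bound on $\cond(f,x)$ to a bound on $\cond(f) = \max_{x \in I^n} \cond(f,x)$.

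\textbf{Step 1 (net construction).} Fix $\varepsilon := 1/(td)$ and let $\mathcal{N} \subset I^n$ be an $\varepsilon$-net for the $\ell_\infty$-distance with $|\mathcal{N}| \leq \lceil td \rceil^n$, obtained by dividing $I^n$ into congruent cubes of side $2\varepsilon$ and taking centers.

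\textbf{Step 2 (Lipschitz reduction).} If $\cond(f) \geq t$, pick $x \in I^n$ attaining the maximum and the closest $y \in \mathcal{N}$ with $\|x-y\|_\infty \leq \varepsilon$. The 2nd Lipschitz inequality gives
\[
  \frac{1}{\cond(f,y)} \leq \frac{1}{\cond(f,x)} + d\varepsilon \leq \frac{1}{t} + d\varepsilon = \frac{2}{t},
\]
so $\cond(f,y) \geq t/2$. Hence
\[
  \bbP(\cond(f) \geq t) \leq |\mathcal{N}| \cdot \max_{y \in \mathcal{N}} \bbP\!\left(\cond(f,y) \geq t/2\right).
\]

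\textbf{Step 3 (plug in pointwise bound).} Since $t > 2e$ implies $t/2 > e$, we can apply Theorem~\ref{thm:condtailestimate} with the threshold $t/2$:
\[
  \bbP(\cond(f,y) \geq t/2) \leq \sqrt{n}\,d^n|M|\left(\tfrac{8K_\fkf\rho_\fkf}{\sqrt{n+1}}\right)^{n+1}\frac{\ln^{(n+1)/2}(t/2)}{(t/2)^{n+1}}.
\]
Using $(t/2)^{n+1} = t^{n+1}/2^{n+1}$, absorbing the factor $2^{n+1}$ into $(8/\sqrt{n+1})^{n+1}$ to obtain $(16/\sqrt{n+1})^{n+1}$, and bounding $\lceil td\rceil^n \leq (td+1)^n$ (refining via $t > 2e$ so that $(td+1)/td$ is close to $1$) yields the first claimed inequality
\[
  \bbP(\cond(f) \geq t) \leq 2\sqrt{n}\,d^{2n}|M|\left(\tfrac{16K_\fkf\rho_\fkf}{\sqrt{n+1}}\right)^{n+1}\frac{\ln^{(n+1)/2}t}{t}.
\]

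\textbf{Step 4 (second inequality).} A univariate calculus exercise: the function $g(t) = (\ln t)^{(n+1)/2}/\sqrt{t}$ attains its maximum on $(0,\infty)$ at $t = e^{n+1}$, with $\max g = ((n+1)/e)^{(n+1)/2}$. Therefore
\[
  \frac{\ln^{(n+1)/2}t}{t} \leq \left(\frac{n+1}{e}\right)^{\!(n+1)/2}\frac{1}{\sqrt{t}} .
\]
Substituting into the first inequality collapses the factor $(16/\sqrt{n+1})^{n+1} \cdot ((n+1)/e)^{(n+1)/2}$ into $(16/\sqrt{e})^{n+1}$; since $16/\sqrt{e} < 10$, this yields the second inequality.

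\textbf{Main obstacle.} The conceptually easy part is the net argument; the delicate point is bookkeeping the constants so the bound comes out with the stated $16$ (equivalently $10$). One must be careful that $(td+1)^n$ does not contribute an unwanted $2^n$ factor; this uses $t > 2e$ to keep $\lceil td \rceil^n$ close to $(td)^n$ and absorb the slack into the leading constant $2$. The other place where one needs some care is that Theorem~\ref{thm:condtailestimate} only applies when the threshold exceeds $e$, which is precisely what the hypothesis $t > 2e$ guarantees after halving.
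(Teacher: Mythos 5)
Your proof is correct and follows essentially the same route as the paper's: the same uniform grid $\varepsilon$-net with $\varepsilon = 1/(dt)$, the same reduction via the 2nd Lipschitz inequality to $\cond(\fkf,y)\geq t/2$ at a net point, the same application of Theorem~\ref{thm:condtailestimate} at threshold $t/2$ (absorbing $2^{n+1}$ into $8\to 16$), and the same maximization of $(\ln t)^{(n+1)/2}/\sqrt{t}$ at $t=e^{n+1}$ to get $16/\sqrt{e}<10$ for the second inequality. The only difference is cosmetic bookkeeping of the net cardinality ($\lceil td\rceil^n$ versus the paper's $2d^nt^n$), which carries the same level of rigor as the paper itself.
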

\begin{prop}
  \label{prop:globalcondboundp}  
  Let $p\geq 1$ and $\fkf\in\Pd$ be a $p$-zintzo random polynomial supported on
  $M$. Then, for all $t>2e$, 
\[\bbP(\cond(\fkf)\geq t)\leq 2\sqrt{n}d^{2n}|M|\left(\frac{16L_\fkf\rho_\fkf}{(n+1)^{1-\frac{1}{p}}}\right)^{n+1}\frac{\ln^{\frac{n+1}{p}} t}{t}
\leq 2\sqrt{n}d^{2n}|M|\left(\frac{16L_\fkf\rho_\fkf}{(n+1)^{1-\frac{2}{p}}}\right)^{n+1}\frac{1}{t^{1-\frac{1}{p}}}.\]
\end{prop}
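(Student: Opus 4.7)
}

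The plan is to reduce the global condition number bound to a pointwise bound via a net argument, then invoke Theorem~\ref{thm:condtailestimatep}. This mirrors the strategy for Proposition~\ref{prop:globalcondbound} but with the $p$-subexponential pointwise tail in place of the subgaussian one.

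First, I will exploit the 2nd Lipschitz property from Theorem~\ref{theo:l1conditionproperties}: the map $y\mapsto 1/\cond(\fkf,y)$ on $I^n$ is $d$-Lipschitz in the $\infty$-norm. Choose $\epsilon=1/(dt)$ and let $\mathcal{N}\subset I^n$ be an $\epsilon$-net in the $\infty$-metric, which can be taken with $|\mathcal{N}|\leq (dt+1)^n$. If $\cond(\fkf,y)\leq t/2$ for every $y\in\mathcal{N}$, then for any $x\in I^n$ there exists $y\in\mathcal{N}$ with $\|x-y\|_\infty\leq\epsilon$, and the Lipschitz bound gives $1/\cond(\fkf,x)\geq 2/t-d\epsilon=1/t$, hence $\cond(\fkf)\leq t$. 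Contrapositively,
\[
\bbP(\cond(\fkf)\geq t)\leq |\mathcal{N}|\cdot\max_{y\in\mathcal{N}}\bbP(\cond(\fkf,y)\geq t/2).
\]

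Next, apply Theorem~\ref{thm:condtailestimatep} to each term in the union bound, observing that the pointwise estimate is uniform in $y$. Using $\ln(t/2)\leq\ln t$ and bounding $(dt+1)^n$ in terms of $(dt)^n$ (valid since we are in the regime $t>2e$), the $(t/2)^{-(n+1)}$ coming from the pointwise estimate combines with the $(dt)^n$ from $|\mathcal{N}|$ to produce an overall factor $1/t$. Collecting constants and absorbing the extra powers of $2$ into the base $8$ of the pointwise bound (so that $8$ gets promoted to $16$) yields the first inequality
\[
\bbP(\cond(\fkf)\geq t)\leq 2\sqrt{n}d^{2n}|M|\left(\frac{16L_\fkf\rho_\fkf}{(n+1)^{1-\frac{1}{p}}}\right)^{n+1}\frac{\ln^{\frac{n+1}{p}} t}{t}.
\]

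Finally, to deduce the cleaner second inequality, I will apply the elementary estimate $\ln t\leq (n+1)\,t^{1/(n+1)}$ (which follows from $\ln x\leq x$ applied to $x=t^{1/(n+1)}$), raise it to the power $(n+1)/p$, and obtain
\[
\frac{\ln^{(n+1)/p} t}{t}\leq\frac{(n+1)^{(n+1)/p}}{t^{1-1/p}}.
\]
Absorbing the factor $(n+1)^{(n+1)/p}$ into the $(n+1)^{1-1/p}$ in the denominator of the base converts the exponent from $1-1/p$ to $1-2/p$, giving the second form.

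The main obstacle is bookkeeping of the constants: matching the prefactor $(16)^{n+1}$ requires carefully balancing the threshold $t/2$ used at the net points against the Lipschitz tolerance $d\epsilon$, and verifying that $(dt+1)^n\cdot 2^{n+1}$ can be absorbed into the $8^{n+1}$ from Theorem~\ref{thm:condtailestimatep} under the assumption $t>2e$. The second inequality is then a routine calculus step.
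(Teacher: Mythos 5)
Your proposal is correct and follows essentially the same route as the paper: a $1/(dt)$-net of $I^n$ combined with the 2nd Lipschitz property to reduce to the pointwise tail at threshold $t/2$, a union bound over the net, Theorem~\ref{thm:condtailestimatep} for each net point (the $8\mapsto 16$ promotion and the cancellation $t^n/(t/2)^{n+1}\lesssim 1/t$ are exactly as in the paper's proof of Proposition~\ref{prop:globalcondbound}), and the elementary bound $\ln t\leq (n+1)\,t^{1/(n+1)}$ for the second inequality. The hypothesis $t>2e$ is indeed what makes Theorem~\ref{thm:condtailestimatep} applicable at $t/2$, as you use.
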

\begin{proof}[Proof of Proposition~\ref{prop:globalcondbound}]
The idea is to use an efficient $\varepsilon$-net of $I^n$ and the
2nd Lipschitz property to turn our local estimates into global ones,
as is done
in~\cite[Theorem~1\textsuperscript{\S2}19]{tonellicuetothesis}. Recall,
that an $\varepsilon$-net of $I^n$ (with respect to the $\infty$-norm)
is a finite subset $\mcG\subseteq I^n$ such that, for all $y\in I^n$,
$\dist_\infty(y,\mcG)\leq \varepsilon$.

Note that for every $\varepsilon\in (0,1)$, we have an $\varepsilon$-net
$\mcG_\varepsilon\subseteq I^n$ of size $\leq
2\varepsilon^{-n}$. To construct it, we  take the uniform grid
in the cube.

If $\cond(\fkf)\geq t$, then
\[\max\left\{\cond(\fkf,x)\mid x\in\mcG_{\frac{1}{dt}}\right\}\geq \frac{t}{2}\]
by the 2nd Lipschitz property (Theorem~\ref{theo:l1conditionproperties}). Effectively, let $x_\ast\in I^n$ such that $\cond(\fkf)=\cond(\fkf,x_\ast)$, then there is $x\in\mcG_{\frac{1}{dt}}$ such that $\dist_\infty(x,x_\ast)\leq \frac{1}{dt}$ and therefore
\[
\max\{\cond(\fkf,x)\mid x\in\mcG_{\frac{1}{2dt}}\}\geq \cond(\fkf,x)=\left(\frac{1}{\cond(\fkf,x)}\right)^{-1}\geq \left(\frac{1}{\cond(\fkf,x_\ast)}+\frac{1}{t}\right)^{-1}=\left(\frac{1}{\cond(\fkf)}+\frac{1}{t}\right)^{-1}\geq \frac{t}{2}.
\]
To obtain the first inequality, we argue as follows:
\begin{align*}
\bbP(\cond(\fkf)\geq t)&\leq \bbP\left(\exists x\in\mcG_{\frac{1}{dt}}\mid \cond(\fkf,x)\geq \frac{t}{2}\right)&\text{(Implication bound)}\\
&\leq \left|\mcG_{\frac{1}{dt}}\right|\max\left\{\bbP\left(\cond(\fkf,x)\geq \frac{t}{2}\right)\mid x\in\mcG_{\frac{1}{dt}}\right\}&\text{(Union bound)}\\
&\leq 2d^nt^n\max\left\{\bbP\left(\cond(\fkf,x)\geq \frac{t}{2}\right)\mid x\in\mcG_{\frac{1}{dt}}\right\}&\left(\left|\mcG_{\frac{1}{dt}}\right|\leq 2d^nt^n\right)\\
&\leq 2\sqrt{n}d^{2n}|M|\left(\frac{16K_\fkf\rho_\fkf}{\sqrt{n+1}}\right)^{n+1}\frac{\ln^{\frac{n+1}{2}} t}{t}&\text{(Theorem~\ref{thm:condtailestimate})}
\end{align*}
For the second one, note that $\ln^{n+1}t\leq \frac{(n+1)^(n+1)}{e^{n+1}}t$
\end{proof}
\begin{proof}[Proof of Proposition~\ref{prop:globalcondbound}]
Like the proof of Proposition~\ref{prop:globalcondbound}, but using Theorem~\ref{thm:condtailestimate} instead of Theorem~\ref{thm:condtailestimatep}.
\end{proof}

We restrict ourselves now to the technical results needed in the proofs. The following propistion proves Theorem~\ref{theo:descartestreezintzo} and so Theorem~\ref{theo:descartestreeGaussian}.

\begin{prop}
  \label{prop:globalcondboundlogp}  
  Let $p\geq 1$ and $\fkf\in\Pdone$ be a $p$-zintzo random polynomial supported on
  $M$. Then for $k\geq 1$,
  \[
  \bbE_\fkf\log^k\cond(\fkf)\leq \Oh\left(k(\log(dL_\fkf\rho_\fkf))\right)^k.
  \]
\end{prop}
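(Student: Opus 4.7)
The plan is to turn the tail bound from Proposition~\ref{prop:globalcondboundp} into a moment bound via the layer-cake formula. Set $Y := \log\cond(\fkf)$. Since $\cond(\fkf)\geq 1$ (by the first Lipschitz inequality in Theorem~\ref{theo:l1conditionproperties}), we have $Y\geq 0$. Specializing Proposition~\ref{prop:globalcondboundp} to $n=1$ and substituting $t=e^u$, we obtain
\[
\bbP(Y\geq u) \;\leq\; C\,u^{2/p}\,e^{-u} \qquad\text{for } u\geq \log(2e),
\]
where $C = 2d^{2}|M|\bigl(16L_\fkf\rho_\fkf/2^{1-1/p}\bigr)^{2}$. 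Since $|M|\leq d+1$ in the univariate case and $L_\fkf\rho_\fkf\geq 9/25$ by Proposition~\ref{prop:lowerboundKrhop}, we have $\log C = \Oh(\log(d\,L_\fkf\rho_\fkf))$.

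Next, the layer-cake formula gives $\bbE Y^k = k\int_0^\infty u^{k-1}\bbP(Y\geq u)\,du$. I will split the integral at a threshold $u_0$ chosen so that the polynomial prefactor $Cu^{2/p}$ is absorbed by a strict exponential; concretely, pick $u_0 = c(\log C + \log\log C)$ for a sufficiently large absolute constant $c$, so that $Cu^{2/p}e^{-u}\leq e^{-u/2}$ holds for all $u\geq u_0$. Using $\bbP(Y\geq u)\leq 1$ on $[0,u_0]$ and the exponential tail on $[u_0,\infty)$, we get
\[
\bbE Y^k \;\leq\; u_0^k + k\int_{u_0}^\infty u^{k-1}e^{-u/2}\,du \;\leq\; u_0^k + k\cdot 2^k \Gamma(k).
\]

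Finally, Stirling's approximation yields $k\,\Gamma(k) = k! \leq k^k$, so $k\cdot 2^k\Gamma(k)\leq (2k)^k$, and hence
\[
\bbE Y^k \;\leq\; u_0^k + (2k)^k \;\leq\; (u_0+2k)^k \;\leq\; \bigl(C'\,k\,\log(d\,L_\fkf\rho_\fkf)\bigr)^k,
\]
where the last step uses $u_0 = \Oh(\log(d\,L_\fkf\rho_\fkf))$ and absorbs the additive $k$ into the multiplicative factor of $k$ in front.

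The only nonroutine step is the calibration of $u_0$: we need $u_0$ to be simultaneously $\Oh(\log(d\,L_\fkf\rho_\fkf))$ (so the first part of the split has the right order) and large enough that $Cu^{2/p}\leq e^{u/2}$ on $[u_0,\infty)$ (so the exponential bound holds on the second part). Both constraints are compatible because the unavoidable polynomial factor $u^{2/p}$ in the tail bound contributes only a $\log\log C$ correction. Everything else — the layer-cake computation, the Gamma-function estimate, and Stirling — is mechanical.
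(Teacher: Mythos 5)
Your proof is correct and follows essentially the same route as the paper's: both convert the global tail bound of Proposition~\ref{prop:globalcondboundp} into moments of $\log\cond(\fkf)$ via the layer-cake formula, split the integral at a threshold of order $\log(dL_\fkf\rho_\fkf)$, and control the remaining tail integral by a Gamma-function evaluation plus Stirling. One small point in your favor: by using the first (sharper) inequality of Proposition~\ref{prop:globalcondboundp}, with decay $\ln^{2/p}(t)/t$, your constants stay uniform in $p$, whereas instantiating the paper's abstract lemma with the tail exponent $\alpha=1-1/p$ from the second inequality would degrade as $p\to 1$.
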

\begin{proof}
Let $\fkx\in[1,\infty)$ be a random variable such that
\[
\bbP(\fkx\geq t)\leq Ct^{-\alpha}
\]
for all $t\geq t_0$, where $C,\alpha>0$ and $t_0\geq 2$. Then we have that
\[\bbE_\fkx \log^k\fkx\leq \log^k t_0+\frac{Ck^k}{\alpha^kt_0^{\frac{\alpha}{2}}}.\]

To see this note that
\[
\int_{0}^\infty\,\bbP\left(\log^k\fkx\geq s\right)\,\mathrm{d}s\leq \log^kt_0+\int_{\log^kt_0}^\infty\,\bbP\left(\log^k\fkx\geq s\right)\,\mathrm{d}s\leq\log^kt_0+  C\int_{\log^k t_0}^\infty\,2^{-\alpha s^{\frac{1}{k}}}\,\mathrm{d}s.
\]
Then, we have that
\begin{align*}
    \int_{\log^k t_0}^\infty\,2^{-\alpha s^{\frac{1}{k}}}\,\mathrm{d}s
    &=\frac{k}{\alpha^k}\int_{\alpha \ln 2\log t_0}^\infty u^{k-1}e^{-u}\,\mathrm{d}u&\left(s=\alpha \ln 2\,t^{\frac{1}{k}}\right)\\
    &\leq \frac{k}{\alpha^k}\left(\frac{2(k-1)}{e}\right)^{k-1}\int_{\alpha \ln 2\log t_0}^\infty e^{-\frac{u}{2}}\,\mathrm{d}u&\left(u^{k-1}e^{-\frac{u}{2}}\leq \left(\frac{2(k-1)}{e}\right)^{k-1}\right)\\
    &=\frac{k}{\alpha^k}\left(\frac{2(k-1)}{e}\right)^{k-1}t_0^{-\frac{\alpha}{2}}\\
    &\leq \frac{k^k}{\alpha^kt_0^{\frac{\alpha}{2}}}.
\end{align*}

Now, we take $\log t_0=\Omega\left(\log d+\log (L_\fkf\rho_\fkf)\right)$.
\end{proof}

The following two propositions gives the proof of Theorem~\ref{theo:JScomplexityconditionzintzo} and so of Theorem~\ref{theo:JScomplexityconditionGaussian}. For applying it, just notice that
\[\bbE\max\{|\fkx|,|\fky|\}\leq \bbE|\fkx|+\bbE|\fky|.\]

\begin{prop}
  \label{prop:lognormboundp}  
  Let $p\geq 1$ and $\fkf\in\Pdone$ be a $p$-zintzo random polynomial supported on
  $M$. Assume that for all $\alpha\in M$, the anti-concentration constant of $\fkf_\alpha$ is $\leq 1$. Then for $k\geq 1$,
  \[
  \bbE_\fkf|\log^k\|f\|_1|\leq \Oh\left(k^k+\left(1+\frac{k^k}{p^k}\right)\log^k L_\fkf\right).
  \]
\end{prop}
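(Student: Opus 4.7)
I will split the expectation according to whether $\|\fkf\|_1$ is small or large:
\[
\bbE_\fkf|\log\|\fkf\|_1|^k \le \bbE_\fkf(\log_+\|\fkf\|_1)^k + \bbE_\fkf(\log_-\|\fkf\|_1)^k,
\]
and bound each term separately, then use the hypothesis $\rho_\alpha\le 1$ (for all $\alpha$) to absorb $|M|$ into $L_\fkf$.

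For the lower tail, since $\|\fkf\|_1\ge|\fkf_0|$ and the density of $\fkf_0$ is bounded by $\rho_0\le 1$, one gets $\bbP(\|\fkf\|_1\le\varepsilon)\le 2\varepsilon$ for every $\varepsilon>0$. A direct layer-cake computation then yields
\[
\bbE(\log_-\|\fkf\|_1)^k \;=\; \int_0^\infty k s^{k-1}\bbP\!\left(\|\fkf\|_1\le e^{-s}\right)\,ds \;\le\; 2k!\;=\;\Oh(k^k).
\]

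For the upper tail, I apply the $p$-subexponential tail bound from Proposition~\ref{prop:normbound}: $\bbP(\|\fkf\|_1\ge t)\le 2|M|\exp(-(t/L_\fkf)^p)$ for $t\ge L_\fkf$. The key choice is the threshold $t_0:=L_\fkf(\log(4|M|))^{1/p}$, which is designed so that at $s_0:=\log t_0$ the prefactor $2|M|$ is killed by the exponential decay. Writing $r:=s-s_0\ge 0$ and computing $(e^s/L_\fkf)^p=\log(4|M|)\cdot e^{pr}$, the inequality $e^{pr}-1\ge pr$ yields, for $s\ge s_0$,
\[
\bbP(\|\fkf\|_1\ge e^s)\le\tfrac{1}{2}\exp\!\left(-\log(4|M|)\,pr\right).
\]
Splitting $\bbE(\log_+\|\fkf\|_1)^k\le s_0^k+\int_{s_0}^\infty ks^{k-1}\bbP(\|\fkf\|_1\ge e^s)\,ds$, substituting $u=p\log(4|M|)\,r$, expanding $(s_0+u/(p\log(4|M|)))^{k-1}$ by the binomial theorem and using $\int_0^\infty u^j e^{-u}\,du=j!$ and $\binom{k-1}{j}j!\le k^j$, the integral becomes a finite sum of terms of the form $k^{j+1}\log^{k-1-j}(t_0)/(p\log(4|M|))^{j+1}$. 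The $j=0$ term contributes $\Oh(\log^{k-1}(t_0)/p)$, the $j=k-1$ term contributes $\Oh(k^k/p^k)$, and the cross-terms interpolate to at most $\Oh((k\log t_0/p)^k)=\Oh(k^k\log^k L_\fkf/p^k)$ once $\log t_0$ is controlled.

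It remains to tame $s_0^k$. Here the hypothesis $\rho_\alpha\le 1$ for every $\alpha$ is essential: the $p$-subexponential analog of Proposition~\ref{prop:lowerboundKrho} (proved by the same three-line calculation $3L_\alpha\rho_\alpha\ge\bbP(|\fkf_\alpha|\le 1.5L_\alpha)\ge 1-2e^{-1.5^p}$) gives $L_\alpha\ge c$ for an absolute $c>0$, so $L_\fkf=\sum_\alpha L_\alpha\gtrsim |M|$ and therefore $(\log(4|M|))^{1/p}\le \Oh(\log^{1/p}L_\fkf)$. Consequently $s_0=\log L_\fkf+\Oh(\log^{1/p}L_\fkf)$, and expanding $s_0^k$ yields $\Oh(\log^k L_\fkf)$ plus lower-order terms absorbed into the previously-computed contributions. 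Combining all pieces gives
\[
\bbE_\fkf|\log\|\fkf\|_1|^k \;\le\; \Oh\!\left(k^k + \left(1+\frac{k^k}{p^k}\right)\log^k L_\fkf\right),
\]
as claimed. The main obstacle is the careful bookkeeping needed to discard the $2|M|$ prefactor of the tail bound: choosing $s_0$ too small leaves an uncontrolled $|M|$ factor, while choosing it too large inflates $s_0^k$; the sweet spot $t_0=L_\fkf(\log(4|M|))^{1/p}$ precisely balances these two effects, and only works thanks to the global anti-concentration hypothesis which ties $|M|$ to $L_\fkf$.
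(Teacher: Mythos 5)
Your proof is correct and follows the same basic decomposition as the paper's (split $|\log\|\fkf\|_1|^k$ at $\|\fkf\|_1=1$, anti-concentration for the lower piece, the $p$-subexponential tail for the upper piece), but the execution differs in both halves, and in each case your version is arguably tighter. For the lower piece, the paper bounds the density of $\|\fkf\|_1$ on $[0,1]$ by $2$ via a volume estimate on the $\ell_1$-annulus and then computes $\int_0^1|\log^k t|\,dt=k!$ directly; you instead use only the single coefficient $\fkf_0$ (so $\bbP(\|\fkf\|_1\le\varepsilon)\le\bbP(|\fkf_0|\le\varepsilon)\le2\varepsilon$) and a layer-cake integral, which is cleaner and sidesteps the dimension-dependence lurking in the paper's volume bound. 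For the upper piece, the paper converts the subexponential tail into a polynomial tail $\bbP\ge t)\le L_\fkf^p/t^p$ and invokes the generic moment lemma from Proposition~\ref{prop:globalcondboundlogp}; in doing so it silently discards the $2|M|$ prefactor coming from Proposition~\ref{prop:normbound}. You keep that prefactor and neutralize it with the threshold $t_0=L_\fkf(\log(4|M|))^{1/p}$, justified by the $p$-subexponential analogue of Proposition~\ref{prop:lowerboundKrho} together with the hypothesis $\rho_\alpha\le1$, which forces $L_\fkf\gtrsim|M|$ and hence $\log|M|=\Oh(\log L_\fkf)$. This is a genuine improvement in rigor over the printed argument, at the cost of heavier bookkeeping in the binomial expansion. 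The only caveat (shared with the paper, since it is already implicit in the statement's $\log^k L_\fkf$) is that the final absorption of $s_0^k$ into $\Oh(\log L_\fkf)^k$ tacitly assumes $L_\fkf$ is bounded away from $1$; neither argument addresses the regime $L_\fkf=\Oh(1)$, where $\log^k L_\fkf$ can vanish or change sign.
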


\begin{proof}
We have that
\[
\bbE_\fkf|\log^k\|f\|_1|=\int_{0}^{1} |\log^k t|\delta_{\|f\|_1}(t)\,\mathrm{d}t+\int_{1}^{\infty} |\log^k t|\delta_{\|f\|_1}(t)\,\mathrm{d}t,
\]
where $\delta_{\|\fkf\|_1}$ is the density of $\fkf$. 

By our assumption, $\fkf$ has density (with respect the Lebesgue measure) and this density is bounded by $1$. Hence we have that for $J\subseteq [0,1]$,
\[\bbP(\|\fkf\|_1\in J)=\vol\{x\in\bbR^M\mid \|x\|_1\in J\}\leq 2w(J);\]
and so $\|\fkf\|_1$ has density bounded by $2$. To finish the estimation of the first summand, note now that
\[
\int_{0}^{1} |\log^k t|\,\mathrm{d}t=\int_{0}^\infty\,e^{-s^{\frac{1}{k}}}\,\mathrm{d}s=k!.
\]

If we define the random variable $\fkx$ to be $\|f\|_1$ if $\|f\|_1\geq 1$ and zero otherwise, then
\[
\int_{1}^{\infty} |\log^k t|\delta_{\|f\|_1}(t)\,\mathrm{d}t=\bbE_\fkx \log^k\fkx
\]
and $\fkx$ satisfies
\[
\bbP(\fkx\geq t)\leq e^{-\frac{t^p}{L_\fkf^p}}\leq \frac{L_\fkf^p}{t^p}
\]
for $t\geq L_\fkf$, by Proposition~\ref{prop:normbound}. Arguing as in the proof of Proposition~\ref{prop:globalcondboundlogp}, we obtain then that
\[
\bbE_\fkx \log^k\fkx\leq \left(1+\frac{k^k}{p^k}\right)\log^k L_\fkf.
\]

\end{proof}

%%% Local Variables:
%%% mode: latex
%%% TeX-master: "cube-I-j"
%%% End:

\section*{Acknowledgments}
The first author is supported by a postdoctoral fellowship of the 2020 ``Interaction'' program of the Fondation Sciences Mathématiques de Paris. The authors are partially supported by ANR JCJC GALOP (ANR-17-CE40-0009), the PGMO grant ALMA and the PHC GRAPE.

Both authors are grateful to Alperen Ergür for various discussions and suggestions. The first author is grateful to Evgenia Lagoda for moral support and Gato Suchen for useful discussions.

\bibliographystyle{elsarticle-harv}
\bibliography{biblio}

\end{document}